\definecolor{shadecolor}{rgb}{1, 0.8, 0.3}
\xpatchcmd{\algorithmic}{\itemsep\z@}{\itemsep=0.3ex plus2pt}{}{}
\newlength{\saveparindent}
\newlength{\saveparskip}
\newcounter{ctr}
\newenvironment{tiret}{%
\begin{list}{\hspace{2pt}\rule[0.5ex]{6pt}{1pt}\hfill}{\labelwidth=15pt%
\labelsep=5pt \leftmargin=20pt \topsep=3pt%
\setlength{\listparindent}{\saveparindent}%
\setlength{\parsep}{\saveparskip}%
\setlength{\itemsep}{0pt} }}{\end{list}}
\newenvironment{newenum}{%
\begin{list}{{\rm \arabic{ctr}.}\hfill}{\usecounter{ctr}\labelwidth=15pt%
\labelsep=6pt \leftmargin=20pt \topsep=.5pt%
\setlength{\listparindent}{\saveparindent}%
\setlength{\parsep}{\saveparskip}%
\setlength{\itemsep}{5pt} }}{\end{list}}
\definecolor{mypink}{rgb}{1,0.2,0.4}
\newcommand{\commround}{N}
\newcommand{\sortition}{\mathsf{Sortition}}
\newtheorem{theorem}{Theorem}
\newtheorem{lemma}{Lemma}
\newtheorem{definition}{Definition}
\newtheorem{property}{Property}
\newtheorem{corollary}{Corollary}
\newcommand{\namedref}[2]{\hyperref[#2]{#1~\ref*{#2}}\xspace}
\newcommand{\lemmaref}[1]{\namedref{Lemma}{lemma:#1}}
\newcommand{\algoref}[1]{\namedref{Algorithm}{algo:#1}}
\newcommand{\sk}{\mathsf{sk}}
\newcommand{\vk}{\mathsf{vk}}
\DeclareMathOperator{\localstate}{\mathsf{LS}}
  \newcommand{\eps}[0]{\ensuremath{\varepsilon}}
  \let\epsilon\eps
  \newcommand{\cF}{\ensuremath{{\mathcal F}}\xspace}
 \newcommand{\cK}{\ensuremath{{\mathcal K}}\xspace}
\newcommand{\sn}{\mathcal{S}}
\newcommand{\hmaj}{\alpha}
\newcommand{\corrsn}{\gamma}
\newcommand{\fanout}{m}
\newcommand{\fanoutnum}{25}
\newcommand{\allkeys}{\cK}
\newcommand{\deep}{d}
\newcommand{\hashsz}{h}
\newcommand{\keys}{\mathsf{keys}}
\newcommand{\collide}{\theta}
\newcommand{\bnum}{B}
\newcommand{\hv}{\mathsf{hv}}
\newcommand{\tree}{T}
\newcommand{\cutpt}{a}
\newcommand{\fnodes}{\cF}
\newcommand{\mem}{\mathsf{mem}}
\newcommand{\idlist}{\mathsf{IDList}}
\newcommand{\key}{\mathsf{key}}
\newcommand{\val}{\mathsf{val}}
\newcommand{\snnum}{S}
\newcommand{\commnum}{n}
\newcommand{\mobilenum}{M}
\newcommand{\commnumlow}{\commnum^*}
\newcommand{\commnummean}{\commnum'}
\newcommand{\commnumhigh}{\tilde{\commnum}}
\newcommand{\heavynum}{\commnum_{H}}
\newcommand{\goodnum}{\commnum_{g}}
\newcommand{\badnum}{\commnum_{b}}
\newcommand{\gap}{\mathsf{gap}}
\newcommand{\kl}[2]{\mathsf{KL}(#1, #2)}
\newcommand{\secparam}{\kappa}
\newcommand{\badnummax}{\tilde{n_b}}
\newcommand{\hash}{\mathsf{Hash}}
\newcommand{\mysign}{\mathsf{Sign}}
\newcommand{\myBlock}{\mathsf{Block}}
\newcommand{\wl}{\mathsf{WL}}
\newcommand{\p}[1]{^{(#1)}}
\newcommand{\vrf}{\mathsf{VRF}}
\newcommand{\block}{\mathcal{B}}
\newcommand{\proposal}{\mathsf{Proposal}}
\newcommand{\win}{\mathsf{win}}
\newcommand{\initial}{\mathsf{initial}}
\newcommand{\nullv}{\mathsf{NULL}}
\newcommand{\out}{\mathsf{out}}
\newcommand{\goodthresh}{\mathsf{T^*}}
\newcommand{\subblock}{\mathsf{SB}}
\newcommand{\hc}{\mathsf{HC}}
\newcommand{\idpk}{\mathsf{Id\text{-}PK}}
\newcommand{\gspk}{\mathsf{GS\text{-}PK}}
\newcommand{\tk}{\mathsf{tk}} %
\newcommand{\cert}{\mathtt{cert}}
\newcommand{\GSRoot}{\mathsf{GSRoot}}
\newcommand{\cooloff}{\ensuremath{40}}
\begin{document}

\date{}

\title{ Blockene: A High-throughput Blockchain Over Mobile Devices}

\author{\rm Sambhav Satija$^\star$, Apurv Mehra$^\star$, Sudheesh Singanamalla$^\dag$\thanks{Sudheesh was with Microsoft Research India while doing this work.} , Karan Grover$^\star$, \\
\rm Muthian Sivathanu$^\star$, Nishanth Chandran$^\star$, Divya Gupta$^\star$, Satya Lokam$^\star$ \\
$^\star$Microsoft Research India ~~~ $^\dag$Univ. of Washington
}

\maketitle

\newcommand{\eg}{\textit{e.g.}}
\newcommand{\ie}{\textit{i.e.}}
\newcommand{\etal}{\textit{et al.}}
\newcommand{\etc}{\textit{etc.}}
\newcommand{\adhoc}{\textit{ad hoc}}

\newcommand{\blockene}{\textit{Blockene}}
\newcommand{\citizens}{\textit{Citizens}}
\newcommand{\politicians}{\textit{Politicians}}
\newcommand{\citizen}{\textit{Citizen}}
\newcommand{\politician}{\textit{Politician}}
\newcommand{\tpool}{\textit{tx\_pool}}

\newcommand{\tpools}{\textit{tx\_pools}}
\newcommand{\pc}{\textit{PCR}}
\newcommand{\pcs}{\textit{PCRs}}
\newcommand{\pclong}{\textit{precursor}}

\newcommand{\pcpool}{\textit{pcr\_pool}}

\newcommand{\pcpools}{\textit{pcr\_pools}}

\newcommand\dg[1]{{\color{blue} DG: {#1}}}

\newcommand{\revision}[1]{\textcolor{black}{#1}}
\newcommand{\secondrevise}[1]{\textcolor{black}{#1}}

\begin{abstract}
{\em
We introduce \blockene, a blockchain that reduces resource usage at member nodes by orders of magnitude, requiring only a smartphone to participate in block validation and consensus.  Despite being lightweight, \blockene\ provides a high throughput of transactions and scales to a large number of participants.  \blockene\ consumes negligible battery and data in smartphones, enabling millions of users to participate in the blockchain without incentives, to secure transactions with their collective honesty.  \blockene\ achieves these properties with a novel split-trust design based on delegating storage and gossip to untrusted nodes.

We show, with a prototype implementation, that \blockene\ provides throughput of 1045 transactions/sec, and runs with very low resource usage on smartphones, pointing to a new paradigm for building secure, decentralized applications.
}
\end{abstract}
\section{Introduction}
\label{sec-intro}

Blockchains provide a powerful systems abstraction: they allow mutually untrusted entities ({\em members}) to collectively manage a {\em ledger} of transactions in a decentralized manner.

\revision{All blockchains today require member nodes to run powerful servers with significant network, storage, and compute resources.  Blockchains based on {\em proof-of-work}~\cite{Bitcoin,Ethereum} push resource usage to an extreme, requiring significant compute for puzzle-solving, but even consortium blockchains~\cite{androulaki2018hyperledger} and blockchains based on {\em proof-of-stake}~\cite{AlgoRand} incur significant network and storage costs to keep the blockchain up to date at a high transaction throughput.    Blockchains today} are therefore limited to use-cases where members have a strong incentive to participate, and can hence afford the high resource cost.  For example, in consortium blockchains~\cite{androulaki2018hyperledger}, business efficiency improves, while in cryptocurrencies~\cite{Bitcoin, AlgoRand}, members earn currency.

Interestingly, the high resource requirement of blockchains also weakens {\em reliability} for several real-world applications.  Blockchains require that majority (typically two-thirds) of members are honest, a property that is easier to guarantee when a large number of members participate.  However, wide-scale adoption of a blockchain is hard given the high resource requirement, especially in scenarios where members do not have a direct incentive to participate.  Not surprisingly, public blockchains with high membership today target cryptocurrencies~\cite{Bitcoin,Ethereum}.

In this paper, we present \blockene \footnote{Named after {\em Graphene}, one of the lightest and strongest materials.}, an ultra-lightweight, large scale blockchain that provides high throughput for real-world transactions.  By being lightweight and scalable, it enables wide-scale adoption by millions of users.
By enabling large scale of participation, \blockene\ makes it plausible to assume honest-majority.   By being high-throughput, \blockene\ supports real-world transaction rates.

The key breakthrough in \blockene\ is that instead of requiring members to run powerful servers, \blockene\ is \revision{the first blockchain that enables members to participate as first-class citizens in consensus even while running} on devices as lightweight as smartphones, lowering cost by orders of magnitude.   

\noindent{\bf Network:} Blockchains rely on peer-to-peer gossip between members; at a high transaction rate, gossip would require tens of GBs of data transfer per day; \blockene\ requires only about 60MB of data transfer per day on a smartphone.   

\noindent{\bf Storage:} Member nodes in blockchains keep a copy of the entire blockchain (terabytes at high-throughput); in \blockene, members incur only a few hundred MBs of storage.  

\noindent{\bf Compute:} \revision{Even the gossip cost of typical blockchains} would drain battery on mobile nodes; \blockene\ ensures that battery drain is less than 3\% per day.  

Thus, a user incurs no perceptible cost while running \blockene.  As the low resource usage in \blockene\ makes it feasible even in a smartphone, \blockene\ can also run on desktops, with much lighter resource usage than state-of-the-art.

\blockene\ achieves three conflicting properties: large scale of participation, high throughput, and lightweight resource usage, catering to even scenarios where there is no direct incentive (\eg, altruistic participation), and  handling transactions across variety of use-cases including those on public funds.  A comparison of \blockene\ with other blockchain architectures is depicted in Table~\ref{tab-taxonomy}.

\noindent{\bf Example application: Audited Philanthropy.}  Charitable donations to non-profits are in excess of USD 500 billion {\em annually} worldwide~\cite{ngo-scale,ngo-scale-europe,ngo-scale-india}.  However, from a donor's perspective, the lack of transparency on the end-use of funds makes donations vulnerable to sub-optimal use or mismanagement by non-profits, especially in regions where regulatory enforcement is ineffective or crippled by corruption.  A system that provides a public, end-to-end trail of funds from the donor to the end beneficiary, will exert market pressure on non-profits, besides motivating donors.  A blockchain can provide such tracking, but given the scale of funds involved, a small consortium of members cannot be trusted with operation of the blockchain.  Ideally, such a blockchain should be jointly controlled by millions of citizens altruistically.  Similar requirements arise in government/public spending.

\noindent{\bf Key techniques in Blockene: }
\blockene\ adopts a novel system design based on a split-trust architecture \revision{with a new security model}.
There are two types of nodes in \blockene: \citizens\ and \politicians.
\citizens\ run on smartphones and are the real members of the blockchain, i.e., they have voting power in consensus protocol; hence we assume that two-thirds of the \citizens\ are honest (a reasonable assumption with millions of \citizens).
On the other hand, \politicians\ run on servers and are untrusted, i.e., do not participate in consensus.
\politicians\ are fewer in number (few hundreds), and we require {\em only 20\%} of them to be honest.
Although \politicians\ do the heavy work such as storing the blockchain, our protocols ensure that \citizens\ can detect and handle malicious behavior even if 80\% of the \politicians\ collude with the one-third of malicious \citizens.   
\citizens\ deal with high dishonesty of \politicians\ by using a new primitive called {\em replicated verifiable reads}: the \citizen\ reads the same data from multiple \politicians\ and can get the correct value even if one (out of, say, $\fanoutnum$) is honest.

\citizens\ perform transaction validation, and decide on the block and resulting global state to commit,  by running Byzantine consensus.  To make consensus feasible with millions of \citizens, \blockene\ borrows an idea from Algorand~\cite{AlgoRand} (modified to make it battery-friendly), where a different random committee of (\textasciitilde2000) \citizens\ is cryptographically chosen to run consensus for each block.  \secondrevise{Unlike Algorand, \blockene\ exposes the set of committee members a few minutes before their participation: this enables \blockene\ to reduce data and battery cost at \citizens.  While this may appear to increase the window for a targeted attack on the committee, we discuss in \S~\ref{sec:threat} why this is not a serious concern.}

To keep storage/communication costs at the \citizens\ low, only \politicians\ store the blockchain and the global state (\ie, key-value pairs), 
freeing \citizens\ from gossiping all blocks (\textasciitilde 50GB/day).  \citizens\ only read a small subset of data from \politicians\ (\eg, key-values for transactions for the current block), and write out the new block.
Further, because \politicians\ are untrusted, \citizens\ cannot rely on the correct latest values returned by them for, say, a given key.   \blockene\ uses a novel technique of {\em sampling-based Merkle tree read/write} that reduces communication cost while ensuring tolerance to 80\% malicious \politicians.

When in the committee, \citizens\ reduce their communication cost by not gossiping directly, but through \politicians; data written by a \citizen\ gets gossiped among \politicians, and interested \citizens\ read from \politicians.

As participation in \blockene\ is lightweight, the system needs to protect against Sybil attacks~\cite{Sybil}; preventing an adversary from spinning up lots of virtual nodes to get disproportionate voting share.  To thwart such attacks, \blockene\ requires the participant identity to be certified by the trusted hardware (TEE) available in most smartphones~\cite{android-tee, apple-tee}, and enforces that each TEE can have at most one active identity on the blockchain, thus raising the economic cost of participation to the cost of a unique smartphone.

To limit damage that 80\% malicious \politicians\ can cause to performance, \blockene\ employs several techniques to restrict their ability to lie.   First, we use a technique called {\em pre-declared commitments} to  make some malicious behaviors detectable.  Second, to perform gossip among \politicians\ reliably and efficiently despite 80\% dishonesty, we introduce a novel technique called {\em prioritized gossip}.   
These techniques reduce cost at \citizens, enabling \blockene\ to achieve high throughput despite running on smartphones.

We have built a prototype of \blockene; the \citizen\ node is implemented as an Android application, and \politician\ node is implemented as a cloud server.  We evaluate \blockene\ along various dimensions, and show that it achieves good transaction throughput of 1045 transactions/sec (6.8 MB/min) while ensuring a commit latency of 270s in the 99th percentile.  We also demonstrate very little data use (61 MB/day) and battery use (3\%/day) at \citizens.

The key contributions of this paper are as follows:
\begin{tiret}
\setlength\itemsep{0.2em}
\item We present the first blockchain system \revision{where nodes can participate as first-class members in consensus while running} on devices as lightweight as smartphones, supporting high scale of members and high throughput.

\item We present a novel split-trust design \revision{with a new security model comprised of} resource constrained \citizens\ (honest majority) and resource heavy \politicians\ (dishonest majority), and \citizens\ performing validation and consensus by offloading heavy work to untrusted \politicians\ in a verifiable way.

\item  We make several novel optimizations (e.g., pre-declared commitments, sampling-based Merkle tree read/write, prioritized gossip) that achieve good performance despite 80\% malicious \politicians.

\item With a thorough theoretical analysis, we prove that \blockene\ satisfies \emph{safety, liveness,} and \emph{fairness}.

\item We perform a thorough empirical evaluation of this architecture, demonstrating its feasibility as a shared scalable blockchain service.

\end{tiret}

The rest of the paper is structured as follows: In \S~\ref{sec-background}, we provide a background on blockchains, \revision{and discuss existing blockchain architectures in \S~\ref{sec-related}}.  \S~\ref{sec-overview} provides an overview of \blockene, and its threat model, and \S~\ref{sec-design} presents its design.  We discuss optimizations for resource-heavy steps in \S~\ref{sec-optimizations}, \revision{present an overview of safety and liveness proofs in \S~\ref{sec:proofs},} and describe the implementation in \S~\ref{sec-implementation}.  We evaluate \blockene\ in \S~\ref{sec-evaluation}, and conclude (\S~\ref{sec-conclusion}). %

\vspace{-0.1in}
\section{Background}
\label{sec-background}

In this section, we discuss the key principles and abstractions in a blockchain, and its applications. 
\vspace{-0.1in}

\subsection{Basic properties}

A blockchain is a {\em distributed ledger} of transactions.   Without a trusted authority (\eg, a bank) managing the ledger, a group of mutually untrusted parties collectively validate transactions, and maintain a consistent ledger, provided at least a threshold of participants (\eg, two-thirds) are honest. A blockchain must provide {\em safety}, {\em liveness}, and {\em fairness}.  Safety ensures that honest participants have a consistent view of the ledger.  Liveness ensures that malicious participants cannot indefinitely stall the blockchain by preventing new block additions. %
Fairness ensures that all valid transactions submitted to the blockchain get eventually committed.

\vspace{-0.1in}

\subsection{Building blocks}
\label{sec:bb}

A blockchain is a replicated, peer-to-peer distributed system built on the following basic primitives:

\noindent {\bf Merkle tree for Global State: } A key part of a blockchain is the {\em global state} database that tracks keys and their current values.  This global state is managed in a tamper-proof manner, typically using a Merkle tree where the leaf nodes contain the key-value pairs, while each intermediate node contains a hash of the concatenated contents of child nodes.    
The root is a single hash value that represents the entire state.  An update of a key requires recomputation of hashes only along the {\em path} from that leaf to the root.  Given the root, the value of any key can be proved by a path of valid hashes to the root. 

\noindent {\bf Signed transactions:}  The basic unit of work in a blockchain is a {\em transaction}.  A transaction reads and updates a few keys in the global state (\eg, transfer \$1000 from Alice to Bob).  To be valid, (a) the transaction must be signed (b) the user signing the transaction must have access to the keys (c) ``semantic'' integrity must pass (\eg, cannot overspend).

\noindent {\bf Cryptograpic linkage:}  A blockchain is a list of {\em blocks}.   A block is a list of transactions.  The ordering of blocks is ensured by a {\em cryptographic linkage}; every block embeds the cryptographic hash of the previous block's contents.

\noindent {\bf Gossip:}   Participants in a blockchain exchange state with each other in a peer-to-peer fashion.  For example, when a new block gets committed to the ledger, it must be sent to other members.  This communication happens through multi-hop {\em gossip}, with eventual consistency.

\noindent {\bf Consensus Protocol:} The key primitive in blockchains is a {\em distributed consensus}  protocol that handles {\em Byzantine failures} (\eg, PBFT~\cite{PBFT}, Nakamoto~\cite{Bitcoin}, or BBA~\cite{AlgoRand}), as minority of participants could be malicious.
Byzantine consensus requires at least 2/3rd participants to be honest, and requires several {\em rounds} of communication.

\section{Comparison with Existing Blockchains}
\label{sec-related}

\revision{
In this section, we present a brief survey of related work on existing blockchain architectures.   \blockene\ provides three properties: lightweight resource usage, large scale of participation, and high transaction throughput.  We use the same three dimensions to compare \blockene\ with related work.
}

\subsection{Resource usage by member nodes}

\revision{
Existing blockchains span a wide spectrum in resource usage by participating member nodes, depending on the mechanism used for consensus.   We first discuss compute cost incurred by members, and then the network and storage cost.
}

\noindent{\bf Compute Cost. }
\revision{
In terms of compute cost, the most expensive are blockchains based on Nakamoto consensus~\cite{Bitcoin}, also referred to as {\em proof-of-work}; examples are Bitcoin~\cite{Bitcoin} and Ethereum~\cite{Ethereum}.  In Nakamoto consensus, the first member node to solve a compute-intensive cryptographic puzzle is chosen as the winner in committing a new block.   Such blockchains therefore require heavy compute resources at member nodes.
}

\revision{
In order to address the high compute (and energy) costs of proof-of-work blockchains, two popular alternative architectures have emerged.  The first is {\em consortium blockchains} (\eg, HyperLedger~\cite{androulaki2018hyperledger}), which, by limiting the blockchain membership to a small number of nodes, can run traditional Byzantine consensus algorithms, instead of the compute-intensive proof-of-work based consensus.   The second architecture is {\em proof-of-stake} blockchains, which tie the voting power of a member node with the amount of money the member node has on the blockchain.  Examples of these blockchains are Algorand~\cite{AlgoRand}, Ouroboros~\cite{ouroboros1, ouroboros2}, PeerCoin~\cite{Peercoin}, \etc.  Inherently, proof-of-stake blockchains target cryptocurrency applications where such a ``stake'' is meaningful.
}

\noindent{\bf Network and Storage cost. }
\revision{
While the above two architectures, \ie, consortium blockchains and proof-of-stake blockchains, address the raw compute cost of member nodes, they are still too expensive for smartphones.  In particular, they are heavy on network and storage resources, as they require the member nodes {\em to  be always up-to-date} with the ``current'' state of the blockchain.   Given the high transaction rate (1000s of transactions per second) that such blockchains enable,  replication of the entire state across member nodes is expensive: at 1000 transactions/sec, the blockchain would commit roughly 9GB per day, which needs to be gossiped across member nodes, resulting in a network cost of roughly 45 GB/day (assuming a gossip fanout of 5 neighbors) that every member node has to incur.  Further, such a blockchain would consume terabytes of storage on member nodes, as every member node stores a local copy of the blockchain.  
}

\revision{
Even blockchains that target smartphones~\cite{suankaewmanee2018performance} adopt the same philosophy of member nodes staying up to date, and thus incur the network and storage overheads.
}

\revision{
Some blockchains address storage cost by {\em sharding}.  OmniLedger~\cite{omniledger} is a recent blockchain that allows participants to only store a {\em shard} of the blockchain. It uses a variant of Byzcoin~\cite{byzcoin} for fast consensus. RapidChain~\cite{rapidchain} also uses sharding to reduce storage cost. Both these works scale only to a few thousand participants and also require participants to store a large fraction ($\frac{1}{3}$ or $\frac{1}{16}$) of the entire blockchain.
}

\noindent{\bf Lightweight but Incapable Nodes. }
\revision{
A class of ``lightweight'' blockchains adopt an approach of ``unequal members'': only the first-tier, resource-heavy members participate in consensus and have voting power, while the second-tier members simply serve as read-only query frontends, and do not participate in consensus.    In such a model, the ``majority-honest'' property must be met purely by the heavy nodes, as light nodes do not contribute to security.   Not surprisingly, given the limited responsibility, the ``light'' nodes don't consume much resources.  An example of this architecture is the separation between light and heavy nodes  in Ethereum~\cite{Ethereum-Light}. 
}

\noindent{\bf Blockene. }
\revision{
In contrast, \blockene,  achieves lightweight resource usage for {\em first-class members} that participate in consensus and block validation.   Further, unlike Ethereum which depends on honest majority among heavy nodes (only heavy nodes can vote), \blockene\ tolerates up to 80\% of the ``heavy'' nodes (\ie, \politicians) being corrupt.   Members in \blockene\ require only a smartphone and negligible\footnote{Cellular data costs in several countries are much cheaper than in the US~\cite{data-costs}; in US/Europe, users are on WiFi/broadband most of the day.}  data transfer (< 60 MB/day, \ie, three orders of magnitude lower) and negligible compute (battery use of <3\% per day).  It achieves this by enabling member nodes to operate with minimal state needed for committing a particular block, and performing work only a few times a day, \ie, not striving to stay up-to-date always.
}

\subsection{Scale of participation}

\revision{
As the security of a blockchain fundamentally relies on a majority of the participating members being honest, blockchains need to protect against collusion of a large number of participants.  Consortium blockchains~\cite{androulaki2018hyperledger} carefully structure the blockchain for a particular business process, such that members have a shared incentive in the success of the blockchain.  It is sometimes infeasible/hard to structure a consortium with the above guarantee; in the philanthropy example, if a small number of members are in control of the blockchain, they may collude to, say, facilitate siphoning of donations meant for the poor.   Moreover, a consortium blockchain is intricately tied to a specific business process among a set of entities, resulting in high setup and operational overhead, besides limiting inter-operability.   
}

\revision{
Another approach to guard against collusion among majority, is to enable large scale participation; by onboarding a large number of participants (say millions), majority-collusion can be made hard and unlikely.  Most ``public'' blockchains such as Bitcoin~\cite{Bitcoin}, Ethereum~\cite{Ethereum}, and Algorand~\cite{AlgoRand} enable large-scale participation.   Blockene also supports a large number of participants, but unlike most public blockchains today that target cryptocurrencies, \blockene\ is not tied to cryptocurrency (\eg, no proof-of-stake), but enables generic business transactions.     Unlike consortium blockchains, \blockene\  can additionally enable real-world scenarios where there is potential for collusion among a small number of members.
}

\subsection{Transaction throughput}

\revision{
Public blockchains based on proof-of-work are low in throughput (\textasciitilde 4-10 transactions/sec).
Proof-of-stake based Algorand~\cite{AlgoRand} is the first public blockchain with \textasciitilde 1000 transactions/sec\footnote{Assuming 100-byte transactions and 2.2 MB in 20s, 10MB blocks @750MB/hr.} 
Consortium blockchains, due to low scale of participants and traditional consensus (e.g., PBFT),  provide 1000s of transactions/sec.   Similar to Algorand, \blockene\ also provides a high transaction throughput.  By not being tied to cryptocurrency applications, \blockene\ can serve traditional business applications similar to consortium blockchains.
}

\begin{table}[htbp]
\begin{small}
\centering

\begin{tabular}{| c | c | c | c | c |}
\hline
{\em Blockchain}                       & {\em Scale of} & {\em Trans.} & {\em Cost} & {\em Incentive} \\
& {\em members} & rate &  & {\em needed?} \\
 & & & & \\
\hline
Public           &  {\bf Millions}  &  4-10 /sec. &  Huge &  Yes                         \\
(\eg, Bitcoin) &               &                    &  (PoW) &              \\

\hline

Consortium &  Tens      &  {\bf 1000s /sec.} & High           & Yes                         \\
(\eg, ~\cite{androulaki2018hyperledger})        &               &                      &                              &             \\

\hline

Algorand~\cite{AlgoRand} & {\bf Millions}  & {\bf 1000-2000/sec.} & High           &  Yes                      \\

\hline

{\bf Blockene}                       &  {\bf Millions}     &  {\bf 1045 /sec.}       &  {\bf Tiny}    &  {\bf No}                          \\

\hline

\end{tabular}
\caption{Comparison of blockchain architectures. }
\label{tab-taxonomy}

\end{small}
\end{table}

\vspace{-0.1in}
\subsection{Incentives to Participants}

Because of high resource cost (compute, network, or storage), existing blockchains need an incentive for participants (\eg, mining coins in cryptocurrencies, or business efficiency in consortiums).  Blockchains that depend on such incentives cannot work for applications such as philanthropy (\S~\ref{sec-intro}).  To scale without incentives and to enable {\em altruistic} participation, the cost of participation has to be negligible.

Table~\ref{tab-taxonomy} compares  blockchain architectures along these dimensions.   \blockene\ is the first blockchain to achieve all of the above:  scale, throughput, and low cost.   With low cost,  \blockene\ supports real-world use-cases even where participants do not have a direct incentive, but are altruistic to run a background app with negligible battery and data usage.

\subsection{Other related work}
\revision{The committee-based consensus in \blockene\ is heavily inspired by Algorand~\cite{AlgoRand}; }
Like \blockene, Algorand also does not allow forks to occur, and one consistent view of the blockchain is always maintained.   \revision{There is a tradeoff between Algorand and \blockene\ on the resilience to two kinds of targeted attacks (described in \S~\ref{sec:threat} para 1).  HoneyBadger~\cite{honeybadger} is a recent system designed for consortium blockchains with O(100) participants.  IOTA~\cite{iota1, iota2} is another distributed ledger system, but currently relies on a centralized co-ordinator for consensus.
}

Among proof-of-work-based blockchains, the  most closely related work to \blockene\ is Hybrid consensus~\cite{hybridconsensus}.  Similar to Algorand (and \blockene), Hybrid consensus periodically selects a group of participants and does not allow the adversary to corrupt nodes during the ``participant selection interval''. However it has a long selection interval (of about 1 day) and is also open to the possibility of forks.

\if 0

\noindent\textbf{Proof of Work.} 
These require participants to solve ``crypto-puzzles'' in order to commit transactions and hence are very compute intensive. They are also susceptible to forks and have a poor transaction throughput.
While the most popular proof of work based protocol is Bitcoin~\cite{Bitcoin}, the  most closely related work to \blockene\ is Hybrid consensus~\cite{hybridconsensus}.

\noindent\textbf{Other recent works.} OmniLedger~\cite{omniledger} is a recent sharding-based blockchain protocol that aims to allow participants to only store part of the blockchain. It uses a variant of Byzcoin~\cite{byzcoin} for fast consensus. RapidChain~\cite{rapidchain} also uses sharding to reduce storage cost for every participant. Both these works scale only to a few thousand participants and also require participants to store a large fraction ($\frac{1}{3}$ or $\frac{1}{16}$ of the entire blockchain).
\fi
\section{Architecture Overview}
\label{sec-overview}

In this section, we first introduce our two-tier architecture that achieves \revision{the three conflicting properties of lightweight resource usage, large scale of participation, and high transaction throughput}.  We then discuss the threat model of \blockene.

\subsection{Two-tier Architecture}
\label{sec:architecture}

\blockene\ employs a novel two-tier architecture with asymmetric trust.  This architecture is depicted in Figure \ref{fig:blockenearch}. 

 There are two kinds of nodes in \blockene: \citizens\ and \politicians.  \citizens\ are resource-constrained (\ie, run on smartphones), are large in number (millions), and are the only entities having voting power in the system (\ie, participate in consensus). \politicians\ are powerful and run servers (similar to existing blockchains like Algorand), and are lot fewer in number (low hundreds), but they do not have voting power.   \politicians\ only execute decisions taken by \citizens, and cannot take any decisions on their own.  

The low resource usage enables a large number of \citizens\ to participate without incentives, while \politicians\ being few in number, will be run by large entities that have interest in the particular use-case (\eg, in the audited philanthropy case, large donors and foundations).

As \citizens\ participate in consensus, at least two-thirds of \citizens\ are required to be honest, while others can be malicious and collude.  This is reasonable as \blockene\ allows millions of \citizens, making large-scale corruption hard.     However, \politicians\ enjoy much lower trust.  \blockene\ only requires 20\% of politicians to be honest; the remaining 80\% of the politicians can be malicious and collude among themselves, and with one-third malicious \citizens.

\begin{figure}
\centering{
\includegraphics[width=0.8\linewidth]{"figures/architecture"}
\caption{\blockene's architecture}
\label{fig:blockenearch} 
}
\end{figure}

\subsubsection{Offloading work to \politicians}
\label{subsec-safe-sample}

Intuitively, given the two-tier architecture, \citizens\ can offload expensive responsibilities such as storage and communication to \politicians.  However, as 80\% of \politicians\ are corrupt, a write made by a \citizen\ could just be dropped by a \politician\ or, a read could return incorrect value.    
To get useful work done out of \politicians\, \blockene\ uses a novel mechanism of replicated reads and writes.   
Reads and writes by \citizens\ to \politicians\ happen with a random {\em safe sample} of \politicians.  
The size of this sample is fixed such that with high probability, at least one \politician\ in the sample is honest (\eg, for a sample size of 25, this probability is $1 - (0.8)^{25} = 99.6\%$).  \blockene\ is resilient to a small number of \citizens\ (0.4\%) picking all dishonest \politicians.

\vspace{-0.1in}
\subsubsection {Division of responsibilities}

We now describe how the \citizens\ and \politicians\ collaborate to perform the various standard blockchain tasks:

\noindent{\bf Storage:}  In a traditional blockchain, every participant keeps  a replica of the entire blockchain, but \citizens\ in \blockene\ cannot afford to store TBs of data.  In \blockene, only \politicians\ store the ledger and the global state (\ie, database of key-values \S~\ref{sec-background}).  \citizens\ read subsets of this data from \politicians\ as needed.  The only state \citizens\ store (and periodically update) is a list of valid \citizen\ identities (\S~\ref{subsec-latest-block}).

\noindent{\bf Transaction Validation:}  As \citizens\ are the actual participants in consensus, they validate transactions, ensuring that transactions are signed, and have semantic integrity (\eg, no double-spending).  To perform validation, \citizens\ read transactions from \politicians, and lookup latest values of the keys referenced in them, from the global state with \politicians.  \citizens\ then propose a block with valid transactions. 

\noindent{\bf Gossip:}  
To ensure that all honest participants agree on the state of the blockchain, participants need to gossip among themselves.  However, as discussed in \S~\ref{sec-related}, direct gossip among \citizens\ is expensive.  \blockene\ solves this problem by having \citizens\ gossip through \politicians.   When a \citizen\ needs to broadcast information to other \citizens, it sends a message to a {\em safe sample} of \politicians.  \politicians\ then gossip data among themselves; they can afford to do so because they have good network connectivity. Other \citizens\ then perform a replicated read from the \politicians\ {\em when they need to, \eg, when they are in the committee}\footnote{Direct gossip among \citizens\ would require {\em all} \citizens\ (\ie, including those outside the committee) to participate in gossip of all data.}.  For gossip through \politicians, we need the guarantee that a message that reaches one honest \politician\ always reaches all other honest \politicians\ via gossip,  a challenging property when 80\% of the \politicians\ are malicious; our custom gossip protocol is described in \S~\ref{sec:forced-truth-gossip-main}).  Thus, we achieve the same semantics as direct gossip among \citizens, but with minimal network load on \citizens.

\noindent{\bf Consensus:}  \citizens\ participate in consensus by performing gossip through \politicians.  
Given the large scale of \citizens, all \citizens\ cannot participate in consensus.  
Instead, we cryptographically select a random {\em committee of citizens} (roughly 2000 members) for each block (\S~\ref{sec:comm-sel-main}).

\vspace{-0.1in}
\subsection{Threat Model}
\label{sec:threat}

\secondrevise{While our threat model is similar to Algorand~\cite{AlgoRand},  there is a tradeoff between Algorand and \blockene\ on the resilience to targeted attacks.  On one hand, Algorand is based on proof-of-stake, which allows an adversary infinite time to target nodes with higher stake (who will appear in the committee more frequently); \blockene\ avoids this attack, as all \citizens\ have equal votes.   On the other hand, Algorand protects the secrecy of the committee members until they perform their role, but \blockene\ exposes their identities a few minutes (1-2 blocks) {\em before} they participate.  To conserve battery, \citizens\ normally poll  \politicians\ for current state of the blockchain roughly every 10 blocks (~\ref{sec:comm-sel-main}), but when they are going to be in the committee, will poll again shortly (\eg, 1 block) before their expected turn, thus exposing their identity to malicious \politicians.  This potentially provides a window for a targeted attack (\secondrevise{\eg, by bribing the committee: \S~\ref{subsec-citizen-attack}}).
}

\vspace{-0.1in}
\subsubsection{Attack vector of \citizens}
\label{subsec-citizen-attack}

\noindent{\bf \secondrevise{Bribing attack on \citizens: }}
\secondrevise{
As \blockene\ implicitly exposes the public keys of the committee a few (\eg, 2) minutes  in advance, an adversary could in theory perform a targeted attack by bribing a sufficient number of committee members.  However, we believe this is not a concern for the following reasons.  First, with just the IP address, it is non-trivial for an adversary to ``send a message'' offering bribe to a \citizen, because of carrier-grade NAT~\cite{carrierNAT} and the architecture for push notifications in smartphones ; the existing channel from a malicious \politician\ to the \citizen\ cannot be misused for this, as an untampered \blockene\ app on the \citizen\ will ignore any spurious traffic on that channel.  Second, as the committee is randomly chosen every block, pull-based bribing where the \citizens\  (who know of their selection up to 10 blocks in advance - \S~\ref{sec:comm-sel-main}) pro-actively reach out to the adversary cannot happen, as that would imply violation of the honesty assumption on \citizens, \ie, greater than 70\% being honest.}

\noindent{\bf Sybil Attack by \citizens: }
Given the lightweight cost of participation, \blockene\ needs to ensure that an adversary cannot get  disproportionate share of voting by spinning up several virtual nodes (\ie, Sybil attacks~\cite{Sybil}). A common way of addressing  Sybil attacks is {\em Proof-of-work} which is resource-intensive and does not fit the goals of \blockene; another alternative is {\em Proof-of-stake~\cite{AlgoRand}} where a participant's voting power is proportional to the amount of ``stake'' (money) on the blockchain, but it is specific to cryptocurrencies.   

In \blockene, we protect against Sybil attack by exploiting the trusted hardware (TEE) available in smartphones~\cite{android-tee, apple-tee}, and ensuring that a smartphone can have at most one  identity on the blockchain.  Thus, \blockene\ imposes an economic cost to participation, \ie, the cost of a smartphone; this is sunk cost {\em already incurred} in owning the smartphone, but protects against Sybil as each identity is a unique smartphone.  
 
In particular, each TEE has a unique public key that is certified by the platform (Android/iOS) vendor.  
The TEE can \secondrevise{certify an EdDSA public-private keypair generated by an app};  
this generated public key serves as the identity on \blockene.  The global state of \blockene\ tracks the set of valid public keys, along with the public key/certificate of the TEE that authorized it.   When a transaction for adding a new member is proposed, \blockene\ looks up the TEE public key to see if that TEE (\ie, the same smartphone) already has an identity in \blockene; if yes, it rejects the transaction\footnote{We can also support replacing the old identity with the new one for the same TEE with appropriate bookkeeping.}.   Thus, every \citizen\ on \blockene\ is tied to a unique smartphone, making it economically infeasible/unattractive for a single entity to get large participation on \blockene.

\revision{Note that \blockene\ only assumes that every certificate signed by Google/Apple for a TEE public-key corresponds to a unique smartphone.  
It does not depend on the security of an individual TEE (unlike running the blockchain consensus inside TEE, e.g. SGX~\cite{sgx-blockchain}, that opens up side-channel attacks compromising integrity and security).   As a result, the TEE identity can be replaced/combined with other unique identities.  In India, one-way-hash of Aadhaar-ID~\cite{Aadhaar,aadhaar1} (digitally verifiable, biometric-deduped, 1.2 billion-reach) can be used.  Other de-duped IDs (e.g.SSN) augmented with digital verifiability can also be used.}

\vspace{-0.1in}
\subsubsection{Attack vector of \politicians}

Dealing with 80\% dishonesty among the politicians is one of the main technical challenges in the design of \blockene. Malicious behavior by \politicians\ falls under two kinds: detectable and covert.  Detectable maliciousness where there is a succinct proof of lying, can be used to improve performance by blacklisting.
For example, if a \politician\ is supposed to only send one group of transactions in a round, but there are two versions signed by the same \politician, it is detectable with proof.  %
Covert maliciousness is harder to handle, and is the focus of our techniques.  
We list broad (non-exhaustive) classes of covert attacks a \politician\ can employ.

\noindent {\bf Staleness Attack: }  When a \citizen\ node asks the \politician\ for some state (\eg, the latest committed block), the \politician\ could return a stale block.  Such a response would appear to be valid because the old block would also have been signed by a quorum of \citizens\ (\S~\ref{subsec-latest-block}).

\noindent {\bf Split-View Attack: } A \politician\ can  respond selectively to some \citizens\ and not to others, causing a split in the world-view seen by honest \citizens.  Worse, a \politician\ can respond with two different values to different subsets of \citizens.  In a coordinated split-view attack, the malicious \politicians\ could only gossip among themselves, so that no honest \politician\ has a certain data.  Malicious \politicians\ can then selectively relay this data only to some \citizens\ (\eg, \S~\ref{subsec-commitments}).

\noindent {\bf Drop Attack: }  A malicious \politician\ may drop data written by a \citizen\ without committing it or gossiping it to other \politicians.  Similarly on a read, the \politician\ may choose to not respond, even though the \politician\ has the data (\S~\ref{subsec-safe-sample}).

\noindent{\bf Denial-of-Service Attack:} As \politicians\ are powerful servers typically hosted in the cloud, we assume that honest \politicians\ employ standard DoS protection that public clouds offer~\cite{AzureDosProtection,AwsDosProtection}.  For \citizens, most ISPs employ carrier-grade NAT to handle the explosion of IP addresses on mobile phones~\cite{carrierNAT}, which also provides DoS protection.  Malicious \politicians\ can make our gossip protocol more expensive by asking for more data than they need (\S~\ref{sec:forced-truth-gossip-main}).

\noindent{\bf Sibyl Attack:} An adversary could try pushing the dishonesty fraction of \politicians\ beyond 80\% by spinning up several nodes.  However, given the small number (say 200), we envision that \politician\ nodes would have an out-of-band registration mechanism (\eg, mapping them to real entities, say one per Fortune-500 company) - robust because only 20\% of them need to be honest (unlike \citizens).

\blockene\ protects against both detectable and covert maliciousness of the \politicians\  including the attacks listed above.

\vspace{-0.1in}
\section{Design}
\label{sec-design}

In this section, we present in more detail how \citizens\ and \politicians\ coordinate on the key steps in \blockene.

\vspace{-0.1in}
\subsection{System Configuration}

We first outline the system configuration for \blockene.  \citizens\ in \blockene\ run on a smartphone, so we assume that their network bandwidth is low, \ie, 1 MB/s.  We choose a block size of 9MB (to amortize fixed cost per block), containing about 90k transactions (\textasciitilde 100 bytes each including a 64-byte signature).  We assume a network bandwidth of 40 MB/s between \politicians\ (representative of bandwidth in the cloud, \eg, between an Azure and a Google Cloud VM across east-US and west-US).  We choose the number of \politicians\ as 200.  The work done per block only depends on committee size, so the system scales to millions of \citizens.

Transaction originators submit signed transactions to a safe sample or to all \politicians, continuously in the background.  Transactions can modify keys that the originator has access to.  Transactions from the same originator can depend on each other; we preserve their order by tracking a per-originator {\em nonce} in the global state.  In this paper, (without loss of generality) each transaction accesses three keys (debits one key and credits another, third key is nonce).  \politicians\ gossip transactions among each other.

\vspace{-0.1in}
\subsection{Selecting Committee of \citizens}
\label{sec:comm-sel-main}

The committee of citizens for validating and signing each block is chosen on the basis of a {\em VRF (Verifiable Random Function)}~\cite{vrf}, inspired by Algorand~\cite{AlgoRand} but with one key modification.    Algorand requires each participant to check in each round whether it is chosen in the committee.  A \citizen\ on a mobile phone cannot afford to do such frequent checks because waking up the phone every round and communicating would cause significant battery drain.  Therefore, instead of computing the VRF on the hash of the previous block ($N-1$), \blockene\ uses the hash of block $N-10$, thus allowing a \citizen\ to wake up once every 10 blocks. Note that this modification still preserves the security guarantee required from VRFs in our threat model.   Specifically, for a citizen, the VRF for block $N$ is calculated as $\hash(\mysign_{\sk}(\hash(\myBlock_{N-10}) || N))$
where $\sk$ is private key known to the citizen. \footnote{\secondrevise{We use EdDSA signatures. ECDSA uses random number which the adversary can exploit to brute-force itself into the committee.}}
A \citizen\ is in the committee if the VRF has $0$'s in the last $k$ bits (hence a \citizen\ is part of a committee with probability $2^{-k}$; $k$ can be set appropriately).
Only the concerned \citizen\ can generate the VRF as it requires its private key, but anyone can verify its validity based on the public key given the signature.

\noindent{\bf Committee size: }
The size of the committee needs to balance performance and security. A small committee is good for performance, but for security of consensus protocol, we require that in any committee, at least $2/3$ \citizens\ are honest.
As our committee selection is probabilistic, by the Chernoff bound~\cite{mitzupfal}, this security requirement cannot be met for very small committee size even if we have $2/3$ honest \citizens\ overall.
Committee size increases with the fraction of dishonest \citizens.  
\revision{
We calibrate this tradeoff to obtain an expected committee size of $2000$ with a citizen dishonesty threshold of $25\%$. While these computations are described in detail in Appendix~\ref{app:comm-sel}, we provide an overview below.}

\revision{
\noindent{\bf Proof overview: }
We prove several properties about the committee for a block.  We call a \citizen\ that participates in a committee as {\em good} if the \citizen\ is honest and speaks to at least 1 honest \politician\ through $\fanout$ fan-out read/write. Otherwise, we say that the \citizen\ is {\em bad}. For a configuration with 25\% corrupt \citizens, 80\% corrupt \politicians, and $\fanout = 25$,  we show that our committee satisfies the following properties: size of all committees lies in the range $[1700 . . 2300]$ (Lemma \ref{lemma:commsize}), every committee has at least 1137 good citizens (Lemma \ref{lemma:goodsize}), every committee has at least a 2/3 fraction of good citizens (Lemma \ref{lemma:gap}), and no committee has more than 772 bad citizens (Lemma \ref{lemma:maxmaliciousnodes}).
}

\vspace{-0.1in}
\subsection{Fork-proof Structural Validation}
\label{subsec-latest-block}

\blockene\ is designed to prevent forks from occurring.  To enable this, each \citizen\ periodically verifies the {\em structural integrity} of the blockchain to enforce that the chain of hashes and VRFs are consistent and to prevent forks.

\noindent{\bf Track local state: } Each \citizen\ {\em locally} remembers the block number $N$ until which the \citizen\ validated the structural integrity of the blockchain, and the hashes of blocks $N$ to $N-9$. In addition, a \citizen\ stores an up to date list of  public keys of other valid \citizens.   The total storage size is <100MB for 1 million \citizens.

\noindent{\bf Chained ID sub-blocks: } To enable \citizens\ to efficiently update local state, the public keys of new users added as part of each block $B$, are tracked in an {\em ID sub-block (SB)} within $B$.  SBs are chained together by embedding Hash($SB_{i-1}$) within $SB_i$.  To aid cheap verification, committee members sign Hash(Hash($B_i$), Hash($SB_i$), GlobalStateRoot($B_i$)).

\noindent{\bf Incremental Validation: }  Roughly every 10 blocks (12-15 mins), each \citizen\ performs a \texttt{getLedger} call to validate the incremental structural integrity (\ie, from last validation point to the latest state), and to check if it will be in the committee soon (committee for a block $N$ is a function of the hash of block $N-10$).   To find the latest block, a \citizen\ queries a safe sample of \politicians\ for the latest block number.  It picks the highest number reported by {\em any} \politician, and asks for proof, \ie, signatures of  committee of that block and the corresponding VRFs.
Thus,  if at least one \politician\ in the safe sample is honest, the \citizen\ will know the latest block hash.   If the latest block is greater than $N+10$, it first verifies block $N+10$.
Further, it refreshes its set of valid public keys by downloading the chained sub-blocks $SB_{N+1}...SB_{N+10}$ that contain new \citizens\ added in each block, verifying the integrity of $SB_i$ based on the chained hashes.

\noindent{\bf Cool-off period for new nodes:}  To prevent a (low-probability) attack where an adversary can manufacture public-private keypairs\footnote{Android TEE API does not allow directly signing with the private key of TEE; instead a keypair is certified by TEE.} to increase chances of getting higher malicious fraction for a particular block $N$, we allow a \citizen\ to be in the committee only $k (=40)$ blocks after the block in which the \citizen\ was added.  To verify this as part of VRF checks, a \citizen's local state tracks the block number of ``recently'' added \citizens.  This is similar to the ``look back parameter'' in Algorand~\cite{AlgoRand}. %

\revision{
\noindent{\bf Proof overview:}
Our \texttt{getLedger} protocol (Appendix~\ref{app:get-ledger}) is used for verifying ledger height $i+10$, given the \citizen\ $v$  has last verified height $i$, {\em without an explicit brute-force verification of signatures of all $10$ blocks}. The algorithm generalizes to verifying any height $i+j$ for $1 \leq j \leq 10$. We show (Lemma \ref{lemma:jump}) that if a {\em good} \citizen\ with a verified state for height $i$ invokes the \texttt{getLedger} protocol at round $(i+11)$ and accepts, then the \citizen's updated {\em structural state} is consistent with the blockchain up to height $(i+10)$.  Using this, we can show that honest \citizens\ can obtain the consistent structural state of the blockchain, along with all registered public keys, for every round of the protocol (Corollary \ref{corr:get-ledger}). }

\vspace{-0.1in}
\subsection {Transaction Validation}
\label{subsec-global-state}

\citizens\ perform the task of verifying signatures of transactions, checking the transaction nonce to detect replay attacks,  and verifying semantic correctness of the transaction (\eg, double spending).  However, only \politicians\ store the Merkle tree (\S~\ref{sec:bb}) of the global state; keeping a large and up to date global state in \citizens\ is unaffordable.   To validate a transaction, a \citizen\ must lookup the correct value of keys referenced therein. On commit, the \citizen\ must {\em update} the Merkle tree with new values from the transaction, and sign the new Merkle root. The challenge lies in doing so correctly given untrusted \politicians.

The Merkle root (along with block number) is signed by the committee of the previous block, so the \politician\ cannot lie about the Merkle root.   Once the \citizen\ learns the latest block number (\S~\ref{subsec-latest-block}), it learns the correct Merkle root as well.
To verify a value returned for a key, \citizen\ asks the \politician\ to send the {\em challenge path} for this key, i.e., all the sibling nodes (hashes) along the path from the leaf to the root. This enables the \citizen\ to reconstruct the Merkle path and match the root hash with the signed Merkle root.  By security of hashes, the \politician\ cannot present spurious challenge paths that verify. %
In a tree with 1 billion key-value pairs, the challenge path would contain 30 hashes.

Update of keys in the Merkle tree follows a similar protocol.  The \citizen\ could build a partial Merkle tree with the new values at the leaves, and compute the new Merkle root.  Both the read and update paths mentioned above are expensive, and we optimize them in \S~\ref{sec-optimizations}.

\subsection{Block Proposal}
Like in any blockchain, committee members can propose a new block for committing to the blockchain. %

\subsubsection{Pick winning proposer}
\label{sec:proposer}

For efficiency, we allow only a subset of committee members called {\em proposers} to actually propose a block, based on the {\em VRF} of the \citizen.  For this selection, we use an additional VRF that is based on the hash of the previous block $N-1$ (instead of $N-10$); only committee members who have the last $k'$ bits of the additional VRF set to zero can propose a block, and the winner is the one with the least VRF.  Using the previous block hash in this VRF ensures that the adversary does not know about the proposers until the last minute (similar to Algorand) thus preventing a targeted attack on the proposers. Any committee member can consistently determine the winning VRF among the proposers.  All proposers upload their block to \politicians\ and other committee members download the block of the winning proposer.

\subsubsection{Pre-declared commitments}
\label{subsec-commitments}

The upload of the proposed block by a proposer needs to be done to a safe sample of 25 \politicians.
In \blockene, as the blocks are~$\sim$9MB in size, assuming 1MB/s bandwidth at mobile nodes, this would take 225 sec.   To optimize this step, we make the transaction selection process deterministic, so that {\em any \citizen\ can reconstruct what the original proposer would have done}, without the proposer explicitly uploading the full block.   Determinism is challenging, however, because the 80\% malicious \politicians\ can send different transactions to different \citizens.  Our technique of  {\em pre-declared commitments} to transactions addresses this.

\noindent {\bf 1. Freeze Transactions.} At the start of block $N$, each \politician\ {\em freezes} the exact set of transactions it will send to \citizens\ reading from it.   It does so by creating a \tpool, which includes a set of (about 2000) transactions, and then generates a {\em commitment} which is a signed hash of the \tpool\ along with the block number%
\footnote{To reduce overlap of transactions across \tpools\ from multiple \politicians\ (which would reduce the unique transactions in the final block), transactions are deterministically partitioned across \politicians\ using a hash on transaction identifier and round number.  Given a \tpool\ and commitment, it is easy to detect/ blacklist a \politician\ that doesn't follow this.
}.
Malicious \politicians\ are forced to issue only one commitment for a given block $N$, because two signed commitments from a \politician\ %
is a proof of malicious behavior, and can be used for efficient blacklisting; \citizens\ then drop all commitments from that \politician\ in the same round.
Intuitively, with  frozen commitments, a \citizen\ proposing a block, need not upload the full block, but only a {\em digest} with the commitments that went into the block, and other \citizens\ can reconstruct that block by downloading the \tpools\ for those commitments from \politicians.

\noindent {\bf 2. Ensure that {\em enough} honest citizens {\em have} commitments.}
A malicious \politician\ can respond with its \tpool\ only to a subset of \citizens, and refuse to respond to others; thus, a \tpool\ committed in the proposed block may not be readable by all honest \citizens, thus thwarting consensus.
To address this, we perform three steps.
First, we limit the exact set of \politicians\ from whom to pull transactions for a given block to a randomly chosen set of $45$ politicians based on the hash of the block number and hash of previous block.
 Instead of reading \tpools\ from a random safe sample, a \citizen\  reads from these 45 designated \politicians\ for a block.  
Second, the \citizen\ uploads a {\em witness list} to a safe sample of \politicians; the witness list contains the list of \tpools\ the \citizen\ was able to successfully download.
The witness list of all \citizens\ gets gossiped between \politicians.
Third, the proposer reads the witness list of all other \citizens, and picks only commitments whose \tpools\ {\em were successfully downloaded by at least a threshold number of \citizens}.
This threshold is fixed to be $\badnummax + \Delta$, where $\badnummax$ is the maximum number of malicious nodes in any committee (computed to be 772, from \S\ref{app:comm-sel}, Lemma \ref{lemma:maxmaliciousnodes}), and $\Delta$ is chosen to be 350.  Intuitively, all commitments (and \tpools) that pass this condition are available with at least $\Delta$ {\em honest} \citizens.   As 20\% of \politicians\ are honest, in expectation, at least 9 out of the 45 commitments will pass this test.

\noindent{\bf 3. Ensure that {\em all} honest citizens {\em get} commitments.}
The commitments available with at least $\Delta$ honest \citizens\ now need to be propagated to all honest \citizens.
Each \citizen\, in Step~\ref{step-first-re-upload},  {\em re-uploads} $5$ random \tpools\ it has, to $1$ random \politician.
This ensures that (with high probability) each \tpool\ (including those from malicious \politicians) that belongs to at least $\Delta$ honest \citizens\ reaches at least one honest \politician\ (who then gossips it to other honest politicians). Thus, other honest \citizens\ can successfully download that \tpool\ (by querying a safe sample of politicians), preventing a {\em split-view} attack by malicious \politicians.

\noindent{\bf 4. Handle malicious proposer.} When the winner of block proposal is a malicious \citizen, it need not respect the witness list criteria, and can pick a commitment whose \tpool\ is known to very few \citizens. This attack is possible only when consensus outputs the block proposed by this malicious proposer, so we can argue that at least $1/3$ honest \citizens\ {\em had} all \tpools\ at the beginning of the consensus. To ensure that all honest \citizens\ are able to download all required \tpools, a second re-upload of randomly chosen \tpools\ happens (step~\ref{step-second-re-upload}), now including the downloaded \tpools\ from previous step.
Formal proofs capturing the guarantees provided by these re-uploads needed to prove security of our system are presented in Lemmas \ref{lemma:honest-proposer} and \ref{lemma:malicious-proposer} of \S\ref{sec:securityproofs}

\vspace{-0.1in}
\subsection{Block Commit Protocol}
\label{sec:block-main}
The main operation in a blockchain is adding a new block to the blockchain.  We list below the key steps in the process of committing block $N$.  The protocol for block $N$ starts once the previous block $N-1$ gathers a threshold number of signatures (set to $850$ in our case, \S~\ref{app:completeprotocol}) from the committee members for block $N-1$.

\begin{newenum}
\setlength\itemsep{0.2em}
\item

A new committee of \citizens\ is chosen for block $N$ (using $\hash$ of block $N-10$), denoted by $C^N$.  The \citizens\ in $C^N$  keep polling for the latest committed block number, and start the protocol once that number is $N-1$.

\item
\label{step-basic-transactions}
Each \citizen\ $C^N_i$ in $C^N$ downloads \tpools\ \& commitments from $\rho = 45$ designated \politicians\ for the block.

\item
Each $C^N_i$ uploads a signed witness list with the commitments it downloaded, to a safe sample of \politicians.

\item
\label{step-first-re-upload}
Each \citizen\ $C^N_i$ picks 5 random \tpool\ it has, and re-uploads them to 1 random \politician.

\item
\label{step-basic-proposal}
Each {\em proposer} in $C^N$ downloads all witness lists of $C^N$ from a safe sample of \politicians, and picks commitments with at least a threshold (1122) of votes (\S~\ref{subsec-commitments}).  Then, it makes a {\em block proposal} with those commitments, along with its VRF to prove proposer eligibility.

\item
\politicians\ gossip on block proposals/VRFs  and on the \tpools\ that were re-uploaded by \citizens. %

\item
Each \citizen\ $C^N_i$ tries to download missing \tpools\ in step~\ref{step-basic-transactions} from safe sample of \politicians, relying on the re-upload (Step~\ref{step-first-re-upload})  by other \citizens.

\item
\label{step-basic-local-winner}
Each $C^N_i$  reads the VRFs of all proposers in $C^N$ from a safe sample of \politicians, and picks the lowest correct VRF as the {\em local winner}.
If $C^N_i$ already has all \tpools\ in the winning proposal, it enters consensus with that set of commitments, otherwise, NULL.

\item
\label{step-second-re-upload}
Each \citizen\ $C^N_i$ performs a second re-upload of $10$ random \tpools\ it has to 1 random \politician.

\item
\label{step-basic-consensus}
\citizens\ in $C^N$ run a consensus protocol (\S~\ref{subsec-consensus}) with gossip through \politicians, where each $C^N_i$'s vote is decided in Step~\ref{step-basic-local-winner}.  At the end, all honest \citizens\ either agree on same set of commitments or an empty block.
$C^N_i$ downloads the \tpools\ missing w.r.t. the output of consensus from safe sample of \politicians.

\item
\label{step-validation}

Each \citizen\ $C^N_i$ performs transaction validation  by downloading challenge paths for all keys from \politicians\ (\S~\ref{subsec-global-state}) and drops transactions that fail validation.

\item
Based on valid transactions (Step~\ref{step-validation}), each $C^N_i$ creates a block, computes the new Merkle root of the global state using updated values of keys and signs the block hash and new Merkle root, along with block number $N$.  It uploads the block hash, new Merkle root, and this signature to a safe sample of \politicians.

\item
\label{step-basic-commit}
When more than a threshold number of signatures have accumulated for block $N$, block $N+1$ starts.
\end{newenum}

Our complete protocol description can be found in \algoref{block-commit}, \S\ref{app:completeprotocol}. 
We give an overview of various properties of \blockene, i.e., safety, liveness and fairness, in \S~\ref{sec:proofs}.

\vspace{-0.1in}
\subsubsection{Consensus Protocol}
\label{subsec-consensus}

For consensus (Step~\ref{step-basic-consensus}), we use the Byzantine Agreement (BA) algorithm for string consensus  (that is based on \cite{tc84}) which calls upon the bit consensus algorithm BBA~\cite{micaliagreement} in a black-box manner. These are the same consensus algorithms used by Algorand.
\citizens\ enter the consensus protocol with list of commitments in local winning block, as input. Two scenarios are relevant here. If the winning proposer (\ie, the one with the lowest VRF) was honest, which would happen at least two-thirds of the time, all honest \citizens\ in the committee would enter consensus with this proposal except with small probability (\lemmaref{honest-proposer}), and the protocol will terminate in 5 rounds. However, if the winning proposer was malicious, it can collude with malicious \politicians\ to partition the view of honest \citizens.  In general, the consensus protocol would take an expected 11 rounds~\cite{AlgoRand}.

\vspace{-0.1in}
\section{Optimizations}
\label{sec-optimizations}

In this section, we present two key optimizations crucial to achieving high transaction throughput in \blockene.

\vspace{-0.1in}
\subsection{Prioritized Gossip}
\label{sec:forced-truth-gossip-main}

\noindent{\bf Problem. } The guarantee we require in \blockene\ is that if one honest \politician\ has a message, {\em all} honest \politicians\ receive the message.  Because of the high fraction of dishonesty among \politicians, standard multi-hop gossip with a small number of neighbors (\eg, 10) cannot provide this guarantee, because  there is a non-trivial probability that all of them were dishonest, and drop the message.  Hence the safe thing to do is a full broadcast to all other \politicians, which is expensive; when \politicians\ need to gossip \tpools\ that were re-uploaded by \citizens\ in the committee,  each \politician\ may have up to 45 \tpools\ to gossip; with full broadcast, it would send $0.2MB*45*200 = 1.8GB$ which would take 45 seconds in the critical path (@40MB/s).

\noindent{\bf Key idea. }  We leverage the fact that messages being gossiped by the different \politicians\ have a high overlap; each \politician\ has a subset of the same 45 \tpools\, as \citizens\ pick a random  \politician\ to re-upload a subset of \tpools. Moreover, given the nature of re-upload, in expectation,  any \politician\ would be missing only a few \tpools, and honest \politicians\ wouldn't lie about  state.

\noindent {\bf 1. Handshake. } Each  \politician\ asks recipients $B_i$ which \tpools\ they already have, and send only the missing ones.   While this works with honest \politicians, the 80\% malicious ones could always lie that don't have any, to cause a higher load/latency on the system.

\noindent{\bf 2. Selfish gossip. }  As malicious \politicians\ can lie that they have no \tpools, we assign a soft-penalty to \politicians\ that miss a lot of \tpools.  Each sender \politician\ $A$ favors the peer $B$ that has the maximum number of \tpools\ that $A$ needs.  In each round, $A$ sends a \tpool\ to $B$, and receives one in return.   Given the random re-uploads by \citizens, each honest \politician\ would be missing only a small number of \tpools, and hence would get prioritized. The list of what $B$ has to offer keeps getting updated as $B$ gets \tpools\ from other peers; note that this list can only grow, not shrink.

\noindent{\bf 3. Incentivize frugal nodes.} Selfish gossip loses its ability to discriminate between honest and malicious recipients, once the sender receives all \tpools.  To address this, after getting all \tpools, the sender changes its priority function for destinations $B_i$ to be {\em the number of \tpools\ that $B_i$ claims to have}; thus honest nodes which will have large fraction of \tpools\ are favored.  Again, the list of \tpools\ that $B$ advertises can only grow, not shrink, as shrinking would mean that $B$ lied.  \revision{Further,  each honest $B_i$ requests its missing chunk from at most $k=5$ peers simultaneously; $k=1$ will be data-frugal, but incur high latency if the peer dishonestly delays response. }

\vspace{-0.1in}
\subsection{Sampling-based Merkle Tree Read/Write}
\label{subsec:opt-gs-main}

\noindent{\bf Problem.} The Merkle tree validation in Step~\ref{step-validation}  is expensive.  In a 1-billion node Merkle-tree (30-levels deep), a challenge path is 300 bytes (10-byte hashes); downloading 270K challenge paths is 81 MB (\~81 sec latency) ignoring compression.  The compute at \citizens\ is also high (total 16.2 million hash computations  for challenge path verification during read and for computing new root post update).

\noindent{\bf Key idea.}  We offload most of this work to \politicians, in a verifiable manner.  Since the Merkle tree validation is done after the conclusion of the consensus run using gossip through the \politicians, \politicians\ know the \tpools\ that are considered for constructing the block. Hence, all \citizens\ in committee and \politicians\ know the keys whose values need to be read and updated. We first discuss the optimization for reading values correctly from the Merkle tree.

\noindent{\bf 1. Get Values.} Each \citizen\ gets just the values for all 270K keys (no challenge path, 1 MB instead of 81 MB) from one \politician, and then asks a safe sample of \politicians\ whether those values were correct.  As at least one of these \politicians\ is honest, it alerts the \citizen\ to incorrect values through an {\em exception list}.   The \politician\ can ``prove'' an incorrect value by providing a challenge path from the signed Merkle root that indicates a different value for the key.

\noindent {\bf 2. Spot-checks.} If many values were wrong, the exception list would be quite large and eat into the savings.   To avoid this, \citizen\ picks a small random subset of $k' = 4500$  keys to initially spot-check using the challenge paths.   If the spot-checks pass for a sufficiently large $k'$, a \politician\ could have lied only for a small number (200) of keys (except with small probability).  Thus, the extra spot-checks bound the size of the exception list (Lemma \ref{lemma:keyspotcheck}, \S\ref{sec:samplingreadprotocol}).

\noindent {\bf 3. Exception list protocol. }  To {\em cross-verify} the values with a safe sample of \politicians, the \citizen\ deterministically puts these values into buckets (2000) and uploads the hashes of these buckets. When a \politician\ notices a mismatch for a bucket, it sends the bucket index and the correct values for all keys in that bucket. \citizen\ gets challenge paths only for keys that disagree (from first \politician).  Our spot-checks ensure that only a small number of buckets can mismatch.

\noindent{\bf Corner case.} Even after doing the above, there is a small probability ($<2^{-10}$) that a \citizen\ may obtain an incorrect value; we count such \citizen\ nodes as malicious  and account appropriately (Lemma \ref{lemma:readincorrectkeys}). The full protocol and all proofs are provided in \algoref{read-gs} of \S\ref{sec:samplingreadprotocol}.

\noindent{\bf Writes:} Updating the Merkle tree is a trickier problem.
Due to lack of old challenge paths for the all keys being updated, the \citizen\ cannot construct the root of the updated Merkle tree $T'$.
We solve this problem by making the \politicians\ compute $T'$, but now the \citizen\ must verify that the \politicians\ performed the computation correctly, i.e.,  $T'$ is consistent with the new values of updated keys and old tree $T$ for unmodified keys.  \revision{We achieve this by breaking $T'$ at a level called the \emph{frontier level} (the nodes at this level are frontier nodes). \citizens\ obtain the values of the frontier nodes of $T'$ from a safe sample of the \politicians. The \citizens\ then run a spot checking algorithm - they pick a random subset of frontier nodes and ask a \politician\ to prove the correctness of that frontier node. Next, \citizens\ create exception lists with the help of the rest of the selected \politicians. This list denotes which frontier nodes are incorrect with the \citizen. The \citizen\ then proceeds to sequentially correct the incorrect frontier nodes and then finally compute the correct root of $T'$ from the frontier nodes.}

\revision{
\noindent{\bf Proof Overview: }
In Appendix~\ref{app:gs}, we prove (in Lemma \ref{lemma:keyspotcheck}) that for a {\em good} \citizen, after successfully spot-checking only $\mu$ fraction of key-values, only (a small number of) $\tau$ values are incorrect with probability $1-\eps_1$ (here, $\mu$, $\tau$ and $\eps_1$ are appropriately chosen parameters). Moreover, these values will get corrected by processing exception lists of size at most $\tau$. Hence, a good \citizen\ gets correct values with probability $1-\eps_1$ (Corollary \ref{corr:incorrect-keys}). We pick our parameters (Lemma \ref{lemma:readincorrectkeys}) such that at most $18$ {\em good} \citizens\ will obtain incorrect values during read, and account for these $18$, by counting them as {\em bad} \citizens\ in the committee.  In the write protocol, we can show that the sizes of exception lists can be bounded (Lemma \ref{lemma:updatekeyspotcheck}) and that no more than 18 \citizens\ accept an incorrectly updated Merkle tree $T'$ (Lemma \ref{lemma:incorrect-update}), which we once again factor in to the set of bad \citizens. We additionally also show that our algorithms are between $3-18\times$ more communication efficient and between $10-66\times$ computationally faster than the naive algorithm for global state read/write.
}

\section{Proofs of Safety, Liveness, and Fairness}
\label{sec:proofs}

\revision{
In this section, we provide a brief overview of the proofs detailed in the appendix for the safety, liveness, and fairness guarantees of \blockene. }

\revision{
A committee round $\commround$ ends when a new block gets signed and committed by a  threshold number ($\goodthresh$), of committee members for $\commround$. $\goodthresh$ will be set to be $850$ (done taking into account maximum number of bad citizens in any committee as well as the 36 good citizens who might have read/written an incorrect global state).}

\revision{
First, we show (in Lemma \ref{lemma:honest-proposer}) that for a block, if a good \citizen\ is the winning proposer, then (except with bounded constant probability) all good \citizens\ will output the proposal of this \citizen\ as the output of the consensus protocol. In Lemma \ref{lemma:malicious-proposer}, we show that, on the contrary, if a malicious \citizen\ is the winning proposer and the consensus results in a non-null value, then all good \citizens\ will be able to download the transactions committed in the proposal. Using Lemmas \ref{lemma:readincorrectkeys} and \ref{lemma:incorrect-update} (see Proof Overview of \S~\ref{sec-optimizations}) , we then show (Lemma \ref{lemma:blockconsensus}) that at the end of the block commit protocol all, except 36, good citizens will sign the same block hash and new global state root and that the new block is consistent with the entire blockchain and global state.
}
\revision{
Now, using Lemma \ref{lemma:blockconsensus}, safety (i.e. all honest \citizens\ agree upon all committed blocks and all blocks are consistent with a correct sequence of transactions) follows via an inductive argument. Next, to argue liveness (that adversarial entities cannot indefinitely stall the system and that the empty-block probability is bounded by a small constant), we use Lemmas \ref{lemma:blockconsensus} and \ref{lemma:honest-proposer}.
}

\revision{
Additionally, we also prove bounds on throughput in Lemma \ref{lemma:throughput} (in expectation, committed blocks have a threshold number of transactions in them) and fairness in Lemma \ref{lemma:fairness} (all valid transactions will eventually be committed).}

\vspace{-0.1in}
\section{Implementation}
\label{sec-implementation}

We have built a prototype of \blockene, that is spread across two components, \citizen\ nodes and \politician\ nodes. 
\vspace{-0.1in}
\subsection{\citizen\ nodes}

The \citizen\ node is implemented as an Android app on SDK v23 and has 10,200 lines of code. It is built to optimize battery use and runs as a background app, without user involvement after initial setup. The application caters to two main phases of the protocol that a Citizen participates in: passive and active. In the passive phase, a service using JobScheduler~\cite{android-jobscheduler} periodically polls \politicians\ for \texttt{getLedger} calls. In the active phase, when the \citizen\ is part of a committee, the application runs the steps of the protocol, handling failures, timeouts and retries to deal with corrupt \politicians. The implementation for the active phase uses a multi-threaded event-driven model and is built on top of EventBus to parallelize and pipeline network and compute intensive crypto tasks such as signature validation.

\vspace{-0.1in}
\subsection{\politician\ nodes}

The \politician\ node is implemented in C++ (11K lines of code).  The implementation scales to load from thousands of \citizens, and handles bursty load during gossip.
Given the state-machine nature of the protocol, we have built it on top of the convenient C-Actor-Framework~\cite{c-actor}, which is based on ``actors'' that transition the state of the \politician\ through the steps of the protocol. For instance, the BBA actor, apart from storing and serving the votes that \citizens\ submit, also reads the votes to determine the result of consensus.  Based on this, it emits an event to build the updated Merkle tree. 
 
For the global state, we have built a SparseMerkleTree (SMT), where the leaf index is deterministically computed using the SHA256 of the key. Since the tree is of bounded depth, we allow for (a small number of) collisions in the leaf node.   The challenge path of any key includes all the collisions co-located with this key, so the leaf hash can be computed.  \revision{To prevent targeted flooding of a single leaf node, we reject key additions that take a leaf node beyond a threshold, forcing the transaction originator to use a different key}.
We also implement a DeltaMerkleTree, which allows us to efficiently create an updated version of the SMT using memory proportional only to the touched keys.
 
Our gossip implementation does simple broadcast for regular messages, and runs a stateful protocol for \tpool\ gossip. We segregate these messages into different ports/queues so the bursty gossip messages are isolated from small messages (e.g., BBA votes) that are broadcast. 
\revision{To prevent malicious \citizens\ from flooding an honest \politician\ with the responsibility of gossiping their writes, we limit the set of \politicians\ for a \citizen\ to be deterministic based on its VRF. \politicians\ do not gossip messages from non-conforming \citizens.}
\section{Evaluation}
\label{sec-evaluation}

We evaluate our \blockene\ prototype under several dimensions.  The main questions we answer in our evaluation are:

\begin{tiret}
\setlength\itemsep{0.2em}
\item
What throughput and latency does \blockene\ provide?
\item
How well does \blockene\ handle malicious behaviors?
\item
Are the optimizations on Merkle tree \& gossip useful?
\item
What is the load on \citizen\ nodes (battery/data usage)?
\end{tiret}

\subsection{Experimental setup }

In our experiments, we use a setup with 2000 \citizen\ nodes and 200 \politician\ nodes.  \citizen\ nodes are 1-core VMs on Azure with a Xeon E5-2673, 2GB of RAM, and are spread across three geographic regions across WAN: 700 VMs in SouthCentralUS, 600 VMs in WestUS, and 700 VMs in EastUS.  Each \citizen\ runs an Android 7.1 image,  and is rate-limited to 1MB/s network upload and download.  \politician\ nodes run on 8-core Azure VMs with a Xeon E5-2673, 32 GB of RAM, and are spread as 100 VMs each in EastUS and WestUS.  They are rate limited to 40MB/s network bandwidth.  Given the random safe sampling, the \citizen-\politician\ communication spans across WAN regions.  Similarly, the gossip between \politicians\ happens across WAN regions.  As our committee size is 2000, every \citizen\ is in the committee for every block.  With a higher number of \citizens, say 1 million, a particular \citizen\ will be in the committee only once every 500 blocks.  Except the per-\citizen\ load, the system performance is independent of the total number of \citizens\ and is just a function of committee size, so the numbers are representative of a large setup. 

\if 0
\begin{figure}
\centering{
\includegraphics[width=3.5in, height=2.6in, trim = {1.5cm 1.6cm 1.5cm 1.5cm}, clip]{"figures/honest-throughput"}
\caption{Throughput \& latency in honest config.}
\label{fig:trans-throughput}
}
\end{figure}
\fi

\begin{figure}[t]
\centering{
\includegraphics[width=\linewidth]{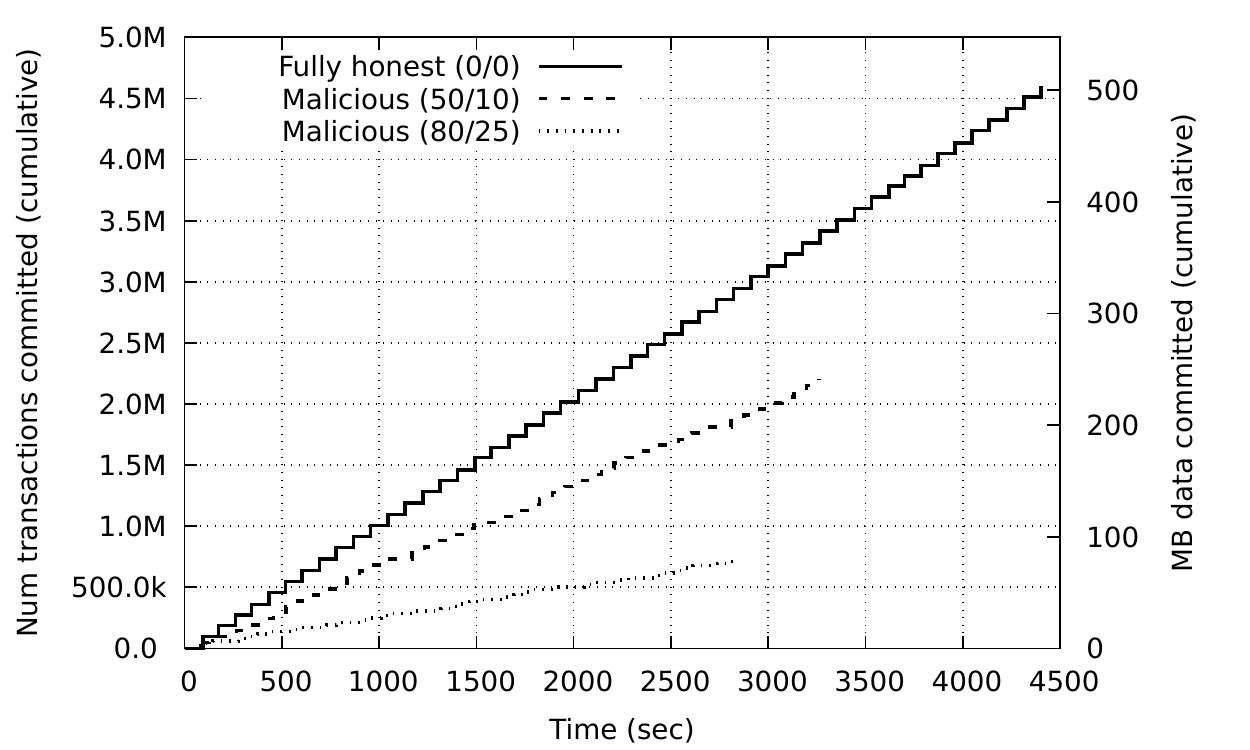}
\caption{Throughput of \blockene\ under various configs. In 50/10, 50\% \politicians\ \& 10\% \citizens\ are malicious.}
\label{fig:malicious-throughput}
}
\end{figure}

\begin{figure}[t]
\centering{
\includegraphics[width=\linewidth]{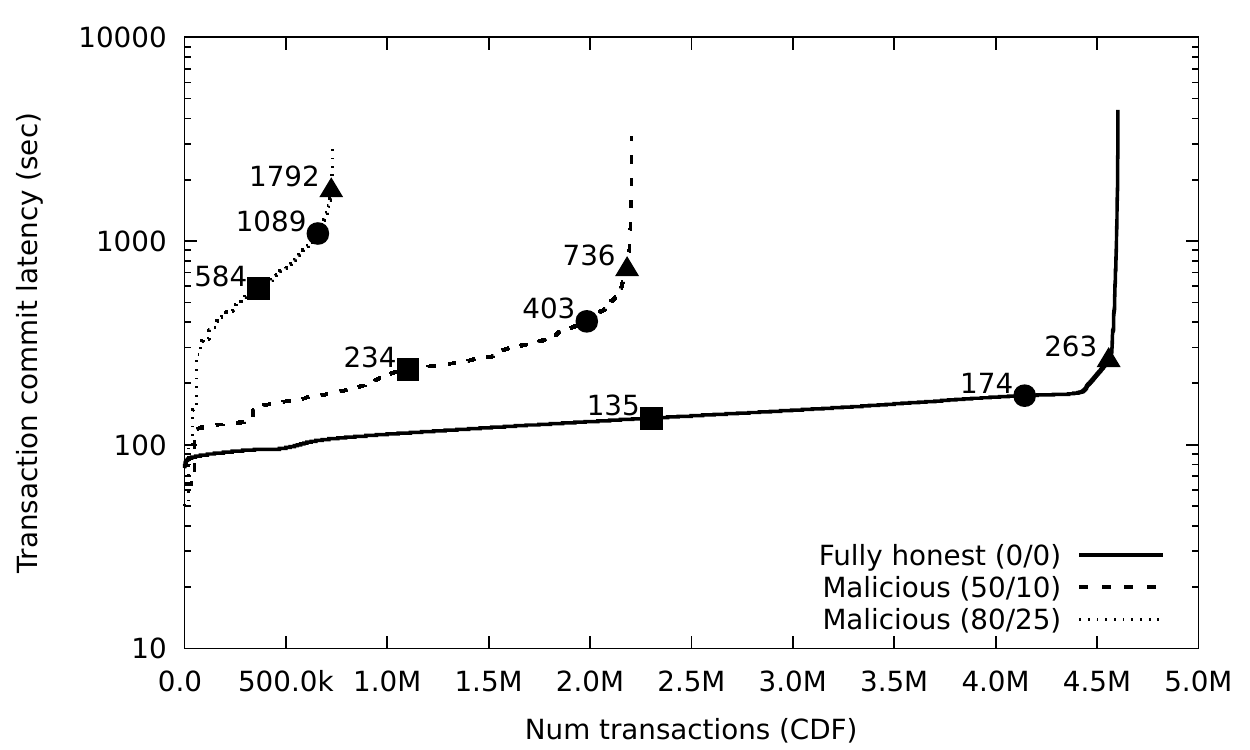}
\caption{Transaction Latency under different malicious configs. Dots show 50th, 90th, 99th percentiles.}
\label{fig:malicious-latency}
}
\end{figure}

\subsection{Transaction Throughput and Latency}

Figure~\ref{fig:malicious-throughput} shows the timeline of block commits in \blockene\ under fully honest and malicious configurations, for 50 consecutive blocks.  In the fully honest (0/0) case, 4.6 million transactions get committed in 4403 seconds, corresponding to a throughput of 1045 transactions per second, or 114 KB/s.

\begin{table}[h!]
	\begin{center}
		\footnotesize
		\begin{tabular}{ l |  r | r  | r}
			\hline
			\toprule
			\citizen\ dishonesty &  \multicolumn{3}{c}{\politician\ dishonesty} \\
			\midrule
			& 0\% & 50\% & 80\% \\
			\hline
			0\%  & 1045 & 757 & 390 \\
			10\% &  969 &  675 &  339 \\
			25\% & 813 & 553 &  257 \\
			\hline
			\bottomrule
		\end{tabular}
	\end{center}
	\vspace{-0.2in}
	\caption{{\bf Transaction throughput under malicious configs.}}
	\label{table:malicious-comprehensive}
	
\end{table}

We also evaluate \blockene\ under malicious behaviors of both \citizens\ and \politicians.   We denote our malicious configurations in the format P/C, where P is the fraction of malicious \politicians, and C is the fraction of malicious \citizens. With our choice of parameters (\eg, committee size), \blockene\ is guaranteed to ensure safety in the presence of up to 80\% malicious \politicians\ and 25\% malicious \citizens.  However, performance can be affected because of adverserial behavior.  A malicious \citizen\ in these experiments  attacks in two ways (a) force an empty block by colluding with malicious \politicians\ and proposes a block with \tpools\ that only malicious \politicians\ have.  Honest \citizens\ therefore cannot download that commitment and will vote for an empty block; (b) forces additional rounds in the BBA consensus protocol by manipulating its votes.   A malicious \politician\ attacks in two ways: (a) fails to give out transaction commitments, making a subset of the 45 \tpools\ empty, potentially causing a smaller block to be committed (b) manipulates gossip by acting as sink holes and asking for same chunks from multiple peers.
As Figure~\ref{fig:malicious-throughput} shows, \blockene\ is quite robust to a range of malicious behaviors, and gracefully degrades in performance.  With 80\% dishonest \politicians, the effective \tpools\ reduce to 9 out of 45, resulting in the block having only 18K transactions instead of 90K.  Malicious \citizens\ cause a performance hit (empty blocks + BBA rounds) when they get chosen as the proposer (\ie, highest VRF);  Table~\ref{table:malicious-comprehensive} shows the throughput under more configurations of malicious behaviors.

Figure~\ref{fig:malicious-latency} shows the CDF of transaction latencies of the system under different configurations, demonstrating fairness across transactions.  In the fully honest case (0/0), \blockene\ ensures a median latency of 135s and a 99th\%-ile latency of 263s.  Under the two malicious  configurations: 50/10 and 80/20, latencies are higher as expected.

\begin{figure}[t]
	\centering{
		\includegraphics[width=\linewidth]{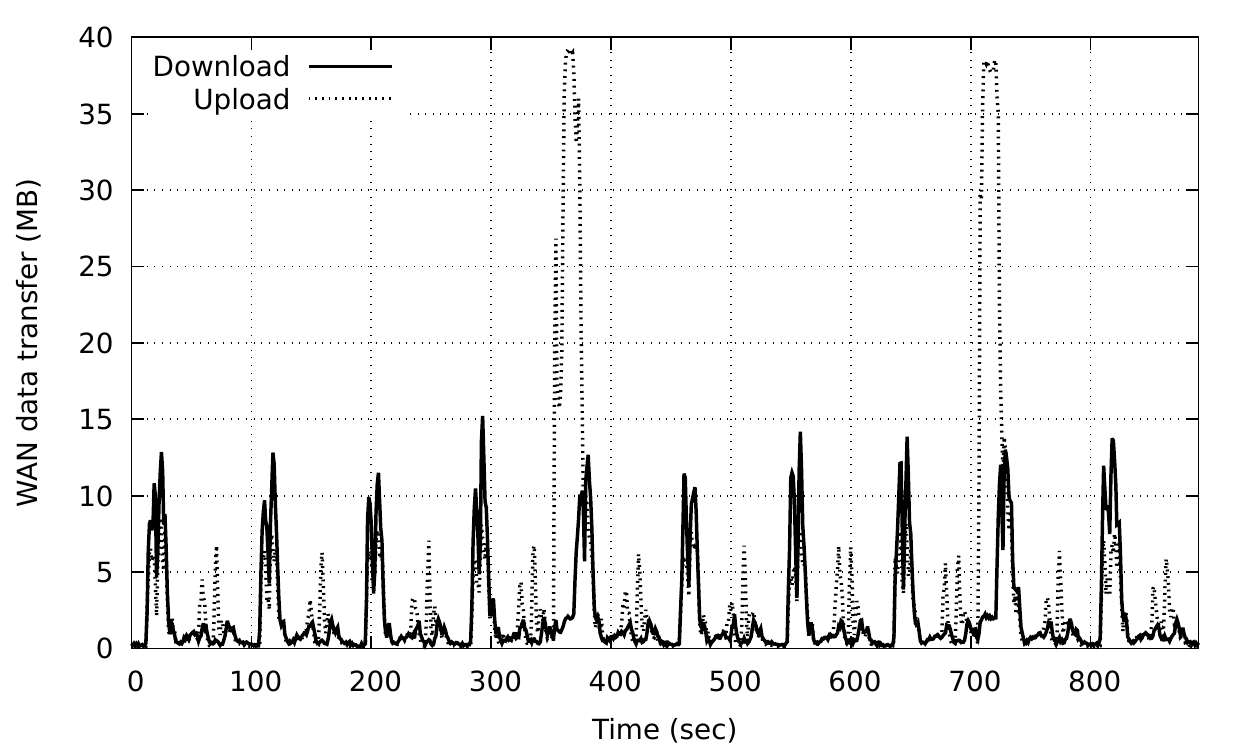}
		\caption{Network usage at a \politician\ node.}
		\label{fig:politician-network}
	}
\end{figure}

\subsection{Timeline of \citizens\ and \politicians}

Figure~\ref{fig:politician-network} shows the network load at a typical \politician\ node during 10 blocks (each of the repetitive patterns is a block).  The two large spikes in uploaded data correspond to rounds where this \politician\ was one of the 45 chosen to provide \tpools.  For each block, there are two small spikes of transmitted data; the first spike corresponds to gossip of \tpools\ through prioritized gossip, and the second spike is due to gossip of votes from \citizens\ in the BBA consensus.  

We also show the breakup of the 89-sec block latency by plotting the time taken in \citizen\ nodes during a typical block.  Figure~\ref{fig:citizen-spread} shows the progress of the 2000 \citizen\ nodes during one of the blocks, separating out the key phases of the protocol; the bulk of the time goes in the transaction validation phase, and in fetching \tpools\ from \politicians. 

\begin{figure}[t]
	\centering{
		\includegraphics[width=\linewidth]{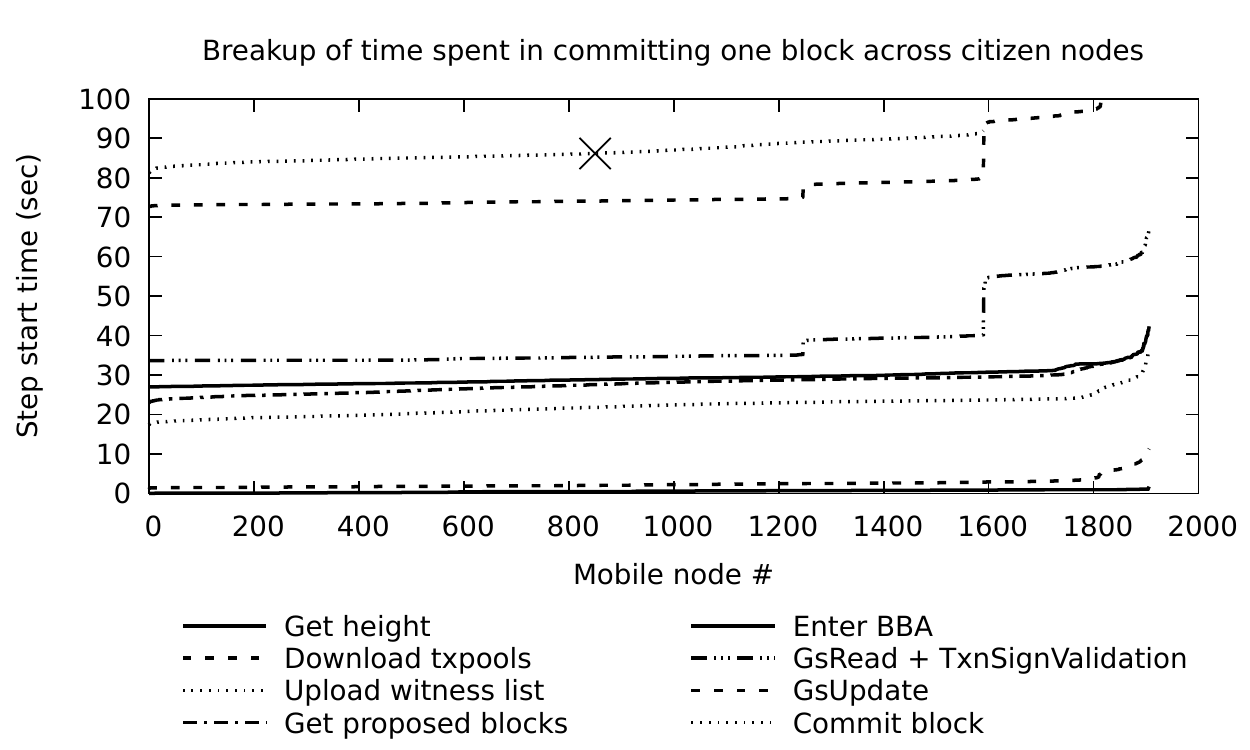}
		\caption{Breakup of time spent at \citizen\ nodes for a single block commit. \revision{Cross indicates block commit.}}
		\label{fig:citizen-spread}
	}
\end{figure}

\subsection{Impact of Optimizations}

We now evaluate the prioritized gossip and the sampling-based Merkle tree optimizations.  For gossip, we consider how much upload/download each \politician\ incurs before {\em all other honest} \politicians\ get all the \tpools.  For example, in the 0/0 case, we have 10K data points (across 50 blocks and 200 \politicians\ each). Across these samples, we plot the 50th, 90th, and 99th percentiles.  The malicious strategy we model in the 80/25 case is where \revision{only the bare minimum number of honest \citizens\ have  \tpools\ of malicious \politicians\ ($\Delta$ from \S~\ref{subsec-commitments})} and all malicious \politicians\ ask for the full set of \tpools\ from all honest nodes.   As Table~\ref{table:gossip-data} shows, the network load of prioritized gossip is robust to dishonest behavior.  Even in the malicious setting, the data transmitted is quite small before all honest \politicians\ get all \tpools.

Table~\ref{table:global-state} compares the performance of our sampling based Merkle-tree reads and updates, with the simple solution of downloading challenge paths for all keys referenced in the block.  The simple solution incurs much higher network cost (the numbers are after gRPC compression), and a significant compute cost at the \citizen.  With our optimization, the network cost drops by $10.8\times$ while the CPU cost drops by nearly $31\times$, thus significantly improving transaction throughput. 

\begin{table}
	\begin{center}
		\footnotesize
		{\color{black}{
				\begin{tabular}{ l | c |  c | c | c }
					\hline
					Config  & Percentile & Upload & Download & Time \\
					& &  (MB) & (MB) & \revision{(sec)} \\
					\hline
					0/0 & 50 & 23.1  & 22.4 & 3.6 \\
					0/0 & 90 &  30.5 &  27.5 & 4.8 \\
					0/0 & 99 &  36.7 &  30.1 & 5.2 \\
					\hline
					80/25 & 50 & 35.4  & 23.8 & 3.5 \\
					80/25 & 90 &  47.6 &  27.6 & 4.1 \\
					80/25 & 99 &  53.4 &  28.9 & 4.5 \\
					
					\hline
				\end{tabular}
		}
		}
	\end{center}
	\vspace{-0.2in}
	\caption{{\bf \revision{Cost of gossip per honest \politician\ before all honest \politicians\ receive all \tpools.}} }
	\label{table:gossip-data}
\end{table}

\begin{table}
\begin{center}
\footnotesize
\begin{tabular}{ l |  r | r  | r}
\hline
Config  & Upload & Download & Compute \\
& (MB) & (MB ) & (s) \\
\hline
Naive: GS Read & 0  & 56.16 & 93.5 \\
Naive: GS Update &  0 &  0 &  93.5 \\
\hline
Optimized: GS Read & 0.55 & 1.6 &  1.0 \\
Optimized: GS Update & 0.01 &  3 &  5.88 \\
\hline
\end{tabular}
\end{center}
\vspace{-0.2in}
\caption{{\bf Performance of Global State Read \& Write.} }
\label{table:global-state}
\end{table}

\subsection{Load on \citizens}
\label{subsec-citizen-battery}

Finally, we evaluate the load at \citizen\ nodes due to running \blockene.  The two metrics of interest are battery usage and data usage.  To get these metrics, we run an actual Android phone \revision{(a OnePlus 5)} with the \citizen\ app, as part of the committee along with the 2000 committee members on VMs, and measure battery use.   After being in the committee for 5 blocks, the battery drain was \textasciitilde 3\%. The total network traffic incurred by a \citizen\ for a single block was 19.5 MB.

Now, we can extrapolate the daily cost based on the per-block cost and the number of times a single \citizen\ is expected to be in the committee.  With 1 million \citizens, a \citizen\ will participate roughly every 500 blocks, which at our block latency of  \textasciitilde 90s, translates to 2 times per day.  Thus, the expected battery use is < 2\% per day, and the data use is \textasciitilde 40MB/day.  In addition, we also measured \revision{on the same OnePlus5 that waking up the phone every 10 minutes and performing} \texttt{getLedger} costs about 0.9\% battery and 21MB data download.  \secondrevise{Waking up every 5 minutes costs 1.7\% battery and 42MB data download.} With a total of 3\% battery usage and 61MB data/day, a user running the \blockene\ app will hardly notice it running. 
\section{Conclusion}
\label{sec-conclusion}

By enabling, for the first time, a high-throughput blockchain \revision{where members perform block validation and consensus} on smartphones at negligible resource usage, \blockene\ opens up a much larger class of real-world applications to benefit from the security and decentralized nature of blockchains.  By employing a novel architecture, and several new techniques coupled with a careful security reasoning, \blockene\ is able to simultaneously provide three conflicting properties: large scale of participation, high transaction throughput, and low resource usage at member nodes.  

\label{sec-conclusion}

\section*{Acknowledgements}

\secondrevise{We thank our shepherd Nickolai Zeldovich and the anonymous reviewers for their valuable suggestions and feedback.  We also thank Ankush Jain, Sriram Rajamani, Bill Thies, Jacki O'Neill, and Rashmi.K.Y for their support.}

\atColsBreak{\vskip10pt}
\bibliography{main}
\bibliographystyle{plain}
\section*{Appendix}
\appendix
\sloppy
\section{Preliminaries}

\paragraph{Notation.} All notation that we use to describe our protocol and proofs are provided in Figure \ref{fig:notation}. %

\begin{figure*}[hbtp]
\begin{framed}
\centering

\begin{enumerate}
\item Politicians
\begin{tiret}
\item $\sn$: set of politicians.
\item $\snnum$: Number of politicians; i.e., $|\sn| = \snnum$.
\item $\corrsn$: upper bound on fraction of corrupt politicians.
\end{tiret}
\item Citizens
\begin{tiret}
\item $\mobilenum$: Number of citizens.
\item $\hmaj$: lower bound on fraction of honest citizens.
\end{tiret}
\item Committee
\begin{tiret}
\item $\commnum$: Number of citizens in a committee in a given round.

\item $\commnumlow$: Lower bound on the number of citizens in a committee in any round.
\item $\commnumhigh$: Upper bound on the number of citizens in a committee in any round.
\item $\commnummean$: Mean value of number of citizens in a committee in any round.

\item $\heavynum$: Number of block proposers in a committee in a given round.

\end{tiret}
\item Protocol Specific
\begin{tiret}
\item $\fanout$: Fan-out of read/write from citizens to politicians.
\item $\allkeys$: Set of all keys that are part of the global state. $|\allkeys| = k$.
\item $\deep$: Depth of the merkle tree used for global state, i.e., number of leaves in the tree is $2^\deep$.
\item $\hashsz$: Size of the hash function used in merkle tree in bytes.
\item $\collide$: All leaves in the Merkle tree will have $< \collide$ number of keys (except with tiny probability). %
\end{tiret}
\item Parameters
\begin{tiret}
\item Size of keys and values is 4 bytes.
\item Size of transaction UUID is 8 bytes.

\item $\secparam$: Security parameter for probabilistic events; when we say that a quantity is negligible in $\secparam$, we mean $2^{-\secparam}$ with $\secparam$ set to $30$.
\end{tiret}

\end{enumerate}
\end{framed}
\caption{Notation}
\label{fig:notation}

\end{figure*}
\section{Committee Selection}
\label{app:comm-sel}

In our protocol, every citizen is in any particular committee with probability $p = \frac{\commnummean}{\mobilenum}$. This means that the expected size of every committee is $\commnummean$. As described in Section \ref{sec:comm-sel-main}, our cryptographic sortition mechanism of selecting committee members for round $\commround$ is performed by computing a VRF on the hash of the block at $\commround-10$, concatenated with round number $\commround$. Similar to Algorand~\cite{AlgoRand}, we require the public keys of the VRF of citizens to be added to the blockchain at least $k$ ($k = 40$) rounds before the round in which they are eligible to be in a committeee (see Section~\ref{subsec-latest-block}). Additionally, since in our attack model, an adversary cannot change the set of corrupted citizens of round $\commround$ after round $\commround-10$ (see Section~\ref{sec:threat}), it is easy to see (using a similar proof as Algorand~\cite{AlgoRand}), that the adversary, has no information about whether an honest citizen will be in the committee at round $\commround$ or not, when it chooses the citizens who will be corrupted at round $\commround$. Hence, from the adversary's perspective, every citizen is in the committee of round $\commround$ with probability $p$ (as defined above). We can now utilize this fact below in all our lemmas and proofs. Similarly, a committee member is also chosen to be a proposer with appropriate probability so that on average each round has $\heavynum$ number of block proposers (by the choice of our function to select proposers, this is hidden and uniformly random to an adversary until round $\commround-1$).

Lemma \ref{lemma:commsize} proves a lower and an upper bound on the number of citizens in any committee sampled using above probability. Let $\kl{x}{y} = x \ln \frac{x}{y} + (1-x) \ln \frac{1-x}{1-y}$ be the Kullback-Leibler divergence between Bernoulli distributed random variables with parameters $x$ and $y$ respectively.

\begin{lemma}[Bound on committee size]
\label{lemma:commsize}
Let $\eps_c > 0$. Then, except with probability $p_c := \exp\left(-\kl{p-\eps_c}{p}\mobilenum\right)$, $\commnum \geq \commnumlow := \mobilenum(p-\eps_c)$. Similarly, except with probability $p_c' := \exp\left(-\kl{p+\eps_c}{p}\mobilenum\right)$, $\commnum \leq \commnumhigh := \mobilenum(p+\eps_c)$. In particular, with $Np =2000$, we can choose $\eps_c > 0$ such that $1700 \leq \commnum \leq 2300$, except with probability $2^{-\secparam}$.
\end{lemma}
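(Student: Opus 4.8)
The statement is a standard Chernoff-type concentration bound for a sum of i.i.d.\ Bernoulli random variables, so the plan is to set it up as such and invoke the multiplicative/Kullback--Leibler form of the Chernoff bound. Concretely, for a fixed round $\commround$, let $X_j$ for $j = 1, \dots, \mobilenum$ be the indicator that citizen $j$ is selected into the committee; by the VRF-based sortition argument recalled just above the lemma, each $X_j = 1$ independently with probability $p = \commnummean/\mobilenum$, so $\commnum = \sum_{j=1}^{\mobilenum} X_j$ is $\mathrm{Binomial}(\mobilenum, p)$ with $\mathbb{E}[\commnum] = \mobilenum p = \commnummean = 2000$.

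\textbf{Lower tail.} First I would apply the Chernoff--Hoeffding bound in its KL-divergence form: for $0 < \eps_c < p$,
\[
\Pr[\commnum \leq \mobilenum(p - \eps_c)] \;\leq\; \exp\!\left(-\kl{p-\eps_c}{p}\,\mobilenum\right) \;=\; p_c,
\]
which gives the first claim with $\commnumlow := \mobilenum(p-\eps_c)$. \textbf{Upper tail.} Symmetrically, for $\eps_c > 0$ with $p + \eps_c \le 1$,
\[
\Pr[\commnum \geq \mobilenum(p + \eps_c)] \;\leq\; \exp\!\left(-\kl{p+\eps_c}{p}\,\mobilenum\right) \;=\; p_c',
\]
giving $\commnumhigh := \mobilenum(p+\eps_c)$. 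Both are immediate instances of the standard theorem (cf.\ the reference \cite{mitzupfal} already cited in the body), so I would not reprove the Chernoff bound itself — just state it and substitute.

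\textbf{Instantiation.} The last sentence requires exhibiting a concrete $\eps_c$. Since we want the deviation window $[1700, 2300]$ around the mean $2000$, I would take $\eps_c$ corresponding to a $\pm 300$ deviation, i.e.\ $\eps_c = 300/\mobilenum$, so that $\mobilenum(p-\eps_c) = 1700$ and $\mobilenum(p+\eps_c) = 2300$. Then I would check numerically that both $\kl{p-\eps_c}{p}\,\mobilenum$ and $\kl{p+\eps_c}{p}\,\mobilenum$ exceed $\secparam \ln 2 = 30\ln 2 \approx 20.8$; using the quadratic lower bound $\kl{p\pm\eps_c}{p} \gtrsim \eps_c^2/(2p(1-p)) \approx \eps_c^2/(2p)$ for small $\eps_c$, one gets roughly $\mobilenum \cdot \eps_c^2/(2p) = (\mobilenum\eps_c)^2/(2\mobilenum p) = 300^2/4000 = 22.5 > 20.8$, so the bound holds with room to spare (for whatever $\mobilenum$ is large enough that $p$ is small, which is the regime of interest with millions of citizens). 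A union bound over the two tails then yields failure probability at most $2^{-\secparam}$, possibly after shrinking $\eps_c$ slightly or noting the slack above already absorbs the factor of $2$.

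\textbf{Main obstacle.} There is no real mathematical difficulty — the Chernoff bound is off the shelf. The one point needing a little care is the independence/uniformity claim: a priori the $X_j$'s are determined by each citizen's secret key applied to a common seed $\hash(\myBlock_{\commround-10})$, so the committee is a deterministic function of the keys and the chain, not a fresh coin flip. The lemma leans on the paragraph immediately preceding it, which argues (via the Algorand-style VRF unpredictability argument plus the fact that the adversary's corruption set for round $\commround$ is fixed by round $\commround - 10$) that \emph{from the adversary's viewpoint} each citizen is in the committee independently with probability $p$. So the honest step is just to invoke that paragraph to justify modeling $\commnum$ as $\mathrm{Binomial}(\mobilenum, p)$, and then the Chernoff computation is routine. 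I would also remark that the bound holds for every round simultaneously only after a further union bound over rounds (or over the polynomially-many blocks in the lifetime of the system), which costs an extra additive $\log(\#\text{rounds})$ in the exponent — but since the lemma as stated is per-round ("in any round" via the worst-case bounds $\commnumlow, \commnumhigh$), I would keep that observation brief and defer the cross-round accounting to wherever the lemma is actually consumed.
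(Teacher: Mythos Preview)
Your proposal is correct and takes essentially the same approach as the paper: the paper's own proof is the single sentence ``This follows from standard Chernoff bounds,'' and your write-up is just a fleshed-out version of exactly that, invoking the KL form of Chernoff on the binomial $\commnum \sim \mathrm{Bin}(\mobilenum,p)$ and plugging in the numbers. Your additional remarks on the VRF-based independence justification and the per-round vs.\ cross-round union bound are careful and go somewhat beyond what the paper spells out, but they do not change the method.
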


\begin{proof}
This follows from standard Chernoff bounds.
\end{proof}
In what follows, we will refer to the range $[1700 . . 2300]$ as the \emph{probable range} of committee size.

As we describe later, in our sub-protocols for block-commit, global state read, and global state update, a citizen begins by picking a random subset of $\fanout$ politicians that this citizen communicates with.
Next, we define the  {\em good} and the {\em bad} citizens and prove some bounds on their sizes.

\begin{definition}
\label{defn:good-citizen}
We call a citizen that participates in a committee as {\em good} if the following two properties hold:

\begin{enumerate}

\item The citizen is honest;
\item The citizen speaks with at least 1 honest politician (through $m$ fan-out read/write).

\end{enumerate}

Otherwise, we say that a citizen that participates in a committee is {\em bad}.

Let $\goodnum, \badnum$ denote the number of good and bad citizens in a committee of size $n$, respectively. We will denote by $\goodnum^*$ a lower bound on the number of good citizens in any committee with size in the probable range and by $\badnummax$ an upper bound on the number of malicious citizens similarly in any committee with size in the probable range.

\end{definition}

We shall set $\hmaj = 0.75, %
\corrsn = 0.8, \fanout = 25$. %
Below, \lemmaref{goodsize} proves a lower bound on the number of good citizens in any committee, i.e., $\goodnum^*$. Also, define $\gap = \goodnum - 2\badnum$. \lemmaref{gap} proves a lower bound on $\gap$. Finally, \lemmaref{maxmaliciousnodes} proves an upper bound on number of malicious nodes, i.e., $\badnummax$.

\begin{lemma}[Lower bound on good citizens]
\label{lemma:goodsize}

Let $\eps_f, \eps_g>0$. Then, except with probability $p_{\textsf{gf}} := \exp\left(-\kl{\hmaj-\eps_g}{\hmaj}\commnum\right)+\exp\left(-\kl{\corrsn^\fanout+\eps_f}{\corrsn^\fanout}\commnum\right)$, $\goodnum \geq \left(1-\corrsn^\fanout-\eps_f\right)\left(\hmaj  - \eps_g\right)\commnum$.

In particular, we have $\goodnum^* \geq 1137$, except with negligible probability
\end{lemma}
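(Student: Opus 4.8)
The plan is a two-stage concentration argument. First I would lower-bound the number $\hanum$ of honest citizens that land in the committee; then, conditioned on that, lower-bound how many of those honest members additionally reach at least one honest politician through their $\fanout$-size sample of politicians. The key structural point is that the two relevant sources of randomness are independent: committee membership for the round is fixed by the per-citizen VRF outputs, whereas the $\fanout$-subset of politicians each citizen contacts is a separate uniform choice made independently by each citizen. So I can condition first on the committee size $\commnum$ and on the committee itself, and only afterwards reveal the politician samples.

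\emph{Step 1.} Conditioned on $\commnum$, the committee is a uniformly random $\commnum$-subset of the $\mobilenum$ citizens, at least a $\hmaj$ fraction of whom are honest, so $\hanum$ stochastically dominates a hypergeometric count. By the relative-entropy form of the Chernoff bound for sampling without replacement (Hoeffding's reduction to the i.i.d.\ binomial case), $\Pr[\hanum < (\hmaj-\eps_g)\commnum] \le \exp(-\kl{\hmaj-\eps_g}{\hmaj}\,\commnum)$, the first term of $p_{\mathsf{gf}}$.

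\emph{Step 2.} Fix the set of honest committee members. Each of them independently draws a uniform $\fanout$-subset of the $\snnum$ politicians, of which at most a $\corrsn$ fraction are corrupt, so the probability that this subset is entirely corrupt is $\binom{\corrsn\snnum}{\fanout}\big/\binom{\snnum}{\fanout} \le \corrsn^{\fanout}$. Hence the number of honest committee members reaching no honest politician is stochastically dominated by a binomial with $\hanum$ trials and success probability $\corrsn^{\fanout}$, and a second relative-entropy Chernoff bound shows it exceeds $(\corrsn^{\fanout}+\eps_f)\hanum$ only with the probability recorded as the second term of $p_{\mathsf{gf}}$ (most naturally stated over the $\hanum\ge(\hmaj-\eps_g)\commnum$ draws of Step~1). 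On the complement of both bad events, $\goodnum \ge (1-\corrsn^{\fanout}-\eps_f)\hanum \ge (1-\corrsn^{\fanout}-\eps_f)(\hmaj-\eps_g)\,\commnum$, and a union bound over the two events gives the claimed failure probability.

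For the numerical statement I would combine this with \lemmaref{commsize} to assume $\commnum \ge 1700$, substitute $\hmaj=0.75$, $\corrsn=0.8$, $\fanout=25$ (so $\corrsn^{\fanout}\approx 3.8\times 10^{-3}$), and choose $\eps_g,\eps_f$ small enough that $(1-\corrsn^{\fanout}-\eps_f)(\hmaj-\eps_g)\cdot 1700 \ge 1137$ while both $p_c$ (from \lemmaref{commsize}) and $p_{\mathsf{gf}}$ are at most $2^{-\secparam}$, so a final union bound leaves failure probability $2^{-\secparam}$. I expect the main difficulty to be exactly this calibration rather than the structure of the argument: since $(1-\corrsn^{\fanout})\hmaj\cdot 1700 \approx 1270$, the target $1137$ leaves only a modest budget for the $\eps$-slack, so one must check that a committee as small as $1700$ still makes $\kl{\hmaj-\eps_g}{\hmaj}\,\commnum$ and $\kl{\corrsn^{\fanout}+\eps_f}{\corrsn^{\fanout}}\,\commnum$ comfortably exceed $\secparam\ln 2$ for those $\eps$ values. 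This holds here, but the two constraints only fit together with a little room to spare, which is the one quantitative fact worth verifying with care.
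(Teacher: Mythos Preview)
Your proposal is correct and follows essentially the same two-stage Chernoff argument as the paper: first bound the number of honest citizens in the committee, then bound how many of those reach an honest politician, and finish with a union bound and the numerical instantiation via \lemmaref{commsize}. You are in fact slightly more careful than the paper on two points---noting that the second concentration is naturally over the $\hanum$ honest members rather than all $\commnum$ (the paper's proof writes $\commnum_w$ there even though the lemma statement uses $\commnum$), and explicitly invoking stochastic domination to handle sampling without replacement---but these are refinements of the same argument, not a different route.
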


\begin{proof}
When the committee size is $\commnum$, the expected number of honest citizens is $\hmaj  \commnum$. Hence,  except with probability $\exp\left(-\kl{\hmaj-\eps_g}{\hmaj}\commnum\right)$, there are at least $\commnum_w :=(\hmaj -\eps_g)\commnum$ honest citizens. Now, out of these citizens, the expected number that speak with at least $1$ honest politician is $(1-\corrsn^\fanout)\commnum_w$ and hence, except with probability $\exp\left(-\kl{\corrsn^\fanout+\eps_f}{\corrsn^\fanout}\commnum_w\right)$, a $(1-\corrsn^\fanout-\eps_f)$ fraction of these citizens will speak with at least 1 honest politician. Combined, this gives us the first part of the lemma.

For the second part, Lemma~\ref{lemma:commsize} shows that  $\commnum \geq \commnumlow$, except with probability $p_c$. Combined with the first part, we conclude that $\goodnum \geq \left(1-\corrsn^\fanout-\eps_f\right)\left(\hmaj  - \eps_g\right)\commnumlow$, except with probability $p_{\textsf{gf}}^* := \exp\left(-\kl{\hmaj-\eps_g}{\hmaj}\commnumlow\right) + \exp\left(-\kl{\corrsn^\fanout+\eps_f}{\corrsn^\fanout}\commnumlow\right)+p_c$. The second part now follows since  $\commnumlow = 1700$ for the probable range of $n$ from Lemma~\ref{lemma:commsize} and we can choose small enough $\eps_f, \eps_g>0$ to make $p_{\textsf{gf}}^*$ negligible.

\end{proof}

\begin{lemma}[Lower bound on $\gap$]
\label{lemma:gap}
Let $\eps_m > 0$.
Except with probability $p_{\textsf{gap}} := \exp\left(-\kl{1-\hmaj+\eps_m}{1-\hmaj}\commnum\right)+2p_{\textsf{gf}}$, $\gap \geq \left(\hmaj(1-3\corrsn^\fanout-3\eps_f)+2\hmaj-2-\eps_g-2\eps_m+\eps_f\eps_g\right)\commnum$.
In particular,
$\gap \geq 1$, except with negligible probability.

\end{lemma}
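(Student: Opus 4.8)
The plan is to relate $\gap$ to the number of good citizens $\goodnum$, which \lemmaref{goodsize} already controls from below, plus one extra concentration bound on the dishonest portion of the committee. First I would record the bookkeeping: every committee member is either good or bad, so $\goodnum+\badnum=\commnum$, and a bad citizen is either dishonest or an honest citizen whose entire $\fanout$‑politician sample is corrupt. Hence one can view $\gap=\goodnum-2\badnum$ either through the clean identity $\gap=3\goodnum-2\commnum$ (showing this lemma is equivalent to ``every committee has at least a $\tfrac{2}{3}$ fraction of good citizens'', which is what the text claims), or, to obtain the stated form, through $\badnum\le B_{\mathrm{dis}}+B_{\mathrm{corr}}$, where $B_{\mathrm{dis}}$ is the number of dishonest committee members and $B_{\mathrm{corr}}$ the number of committee members whose sampled politicians are all corrupt.

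Next I would instantiate three tail bounds. From \lemmaref{goodsize}, except with probability $p_{\textsf{gf}}$ we have $\goodnum\ge(1-\corrsn^\fanout-\eps_f)(\hmaj-\eps_g)\commnum$. A Chernoff bound on $B_{\mathrm{dis}}$ (mean at most $(1-\hmaj)\commnum$) gives $B_{\mathrm{dis}}\le(1-\hmaj+\eps_m)\commnum$ except with probability $\exp(-\kl{1-\hmaj+\eps_m}{1-\hmaj}\commnum)$ — the new failure term — and the same Chernoff bound already used inside the proof of \lemmaref{goodsize} (each citizen picks an all‑corrupt set with probability at most $\corrsn^\fanout$ by sampling without replacement, independently of committee membership and of honesty) controls $B_{\mathrm{corr}}\le(\corrsn^\fanout+\eps_f)\commnum$ at the cost of a term bounded by $p_{\textsf{gf}}$. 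A union bound then yields exactly $p_{\textsf{gap}}=\exp(-\kl{1-\hmaj+\eps_m}{1-\hmaj}\commnum)+2p_{\textsf{gf}}$. Substituting into $\gap=\goodnum-2\badnum$, expanding $(1-\corrsn^\fanout-\eps_f)(\hmaj-\eps_g)$, collecting the $\eps$‑free part into $\hmaj(1-3\corrsn^\fanout-3\eps_f)+2\hmaj-2$ and the lower‑order part into $-\eps_g-2\eps_m+\eps_f\eps_g$, gives the stated bound on $\gap/\commnum$.

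For the concrete claim $\gap\ge 1$, I would plug in $\hmaj=3/4$ and $\corrsn^\fanout=0.8^{25}\approx 3.8\times10^{-3}$: the $\eps$‑free coefficient $\hmaj(1-3\corrsn^\fanout)+2\hmaj-2\approx 0.24$ is strictly positive (equivalently, $(1-\corrsn^\fanout)\hmaj\approx 0.747>\tfrac{2}{3}\approx 0.667$, so $\goodnum\ge\tfrac{2}{3}\commnum$ with slack $\approx 0.08$). I would then choose $\eps_f,\eps_g,\eps_m>0$ small — on the order of a few percent — so that the coefficient of $\commnum$ remains positive while every Kullback--Leibler exponent in $p_{\textsf{gap}}$ is large enough that, for $\commnum$ as small as $\commnumlow=1700$, $p_{\textsf{gap}}\le 2^{-\secparam}$; union‑bounding with the negligible event $\commnum\notin[1700..2300]$ (\lemmaref{commsize}) then makes the whole statement hold except with negligible probability, with $\gap\ge(\text{positive const})\cdot 1700\ge 1$.

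The step I expect to be the real obstacle is precisely this last calibration of $\eps_f,\eps_g,\eps_m$: the margin between $(1-\corrsn^\fanout)\hmaj\approx 0.747$ and $\tfrac{2}{3}$ is only about $0.08$, so the three $\eps$'s cannot all be taken large. The tightest constraint is on $\eps_f$, because $\corrsn^\fanout$ is tiny and hence $\kl{\corrsn^\fanout+\eps_f}{\corrsn^\fanout}$ grows only slowly with $\eps_f$, forcing $\eps_f$ to be of order a percent or two for negligibility, which leaves limited headroom for $\eps_g$ and $\eps_m$ (each costing roughly $\eps^2/(2\hmaj(1-\hmaj))$ in its exponent). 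Checking that a feasible triple exists across the entire probable range of committee sizes — rather than the routine algebra of collecting $\eps$‑terms or the union bound — is the crux.
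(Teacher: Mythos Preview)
Your proposal is correct and follows essentially the same route as the paper: invoke \lemmaref{goodsize} for $\goodnum$, apply a Chernoff bound to the dishonest count, reuse the all-corrupt-sample Chernoff bound, and union-bound to get $p_{\textsf{gap}}$. The one minor deviation is that the paper bounds the all-corrupt-sample count over the \emph{honest} citizens only (first upper-bounding their number by $(\hmaj+\eps_g)\commnum$ and then multiplying by $(\corrsn^\fanout+\eps_f)$), whereas you take it over all $\commnum$ members; this is precisely what yields the coefficient $-3\hmaj\corrsn^\fanout$ in the stated formula rather than the $-(\hmaj+2)\corrsn^\fanout$ your $B_{\mathrm{corr}}\le(\corrsn^\fanout+\eps_f)\commnum$ would produce, but the discrepancy is immaterial for the conclusion $\gap\ge 1$.
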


\begin{proof}
Let $\commnum$ be the committee size. Now, from Lemma \ref{lemma:goodsize}, we have that $\goodnum \geq \left(1-\corrsn^\fanout-\eps_f\right)\left(\hmaj  - \eps_g\right)\commnum$, except with probability $p_{\textsf{gf}}$. Now, a citizen in the committee is bad if it is either malicious or if the citizen spoke only to malicious politicians through the fanout read/write. Except with probability $\exp\left(-\kl{1-\hmaj+\eps_m}{1-\hmaj}\commnum\right)$, the number of malicious citizens in the committee is $\leq (1-\hmaj+\eps_m)\commnum$. To additionally bound the number of citizens that speak only to malicious politicians through the fanout read/write, note that this value is $\leq \left((\hmaj  + \eps_g)(\corrsn^\fanout+\eps_f)\right)\commnum$, except with probability $\exp\left(-\kl{\hmaj+\eps_g}{\hmaj}\commnum\right)+\exp\left(-\kl{\corrsn^\fanout+\eps_f}{\corrsn^\fanout}\commnum\right)$, which is $\leq p_{\textsf{gf}}$. Hence, $\badnum \leq (1-\hmaj+\eps_m+(\hmaj+\eps_g)(\corrsn^\fanout+\eps_f))\commnum$, except with probability $\exp\left(-\kl{1-\hmaj+\eps_m}{1-\hmaj}\commnum\right)+p_{\textsf{gf}}$. This gives us that $\gap = \goodnum-2\badnum \geq \left(\left(1-\corrsn^\fanout-\eps_f\right)\left(\hmaj  - \eps_g\right)-2(1-\hmaj+\eps_m+(\hmaj+\eps_g)(\corrsn^\fanout+\eps_f))\right)\commnum = \left(\hmaj(1-3\corrsn^\fanout-3\eps_f)+2\hmaj-2-\eps_g-2\eps_m+\eps_f\eps_g\right)\commnum$, except with probability $p_{\textsf{gap}}$. The lemma follows from Lemmas~\ref{lemma:commsize} and \ref{lemma:goodsize} and choosing a small enough $\eps_m > 0$ to make $p_{\textsf{gap}}$ negligible.

\end{proof}

\begin{corollary}\label{corr:committee-bounds} Above lemma shows that $\gap$ increases monotonically with committee size $\commnum$. Moreover, for any committee size, since $\gap \geq 1$, it holds that $\goodnum > 2\commnum/3$ and $\badnum < \commnum/3$.
\end{corollary}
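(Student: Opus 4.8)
The plan is to read the explicit lower bound on $\gap$ off \lemmaref{gap}, observe that it is linear in $\commnum$ with a strictly positive slope under our calibrated parameters, and then finish with an elementary counting argument. \lemmaref{gap} certifies (except with negligible probability) that $\gap \geq c\,\commnum$, where
\[
c \;=\; \hmaj\bigl(1-3\corrsn^\fanout-3\eps_f\bigr)+2\hmaj-2-\eps_g-2\eps_m+\eps_f\eps_g
\]
is a constant that does not depend on $\commnum$. So the first step is to substitute the calibrated values $\hmaj = 0.75$, $\corrsn = 0.8$, $\fanout = 25$ (for which $\corrsn^\fanout \approx 3.8\times 10^{-3}$): the dominant terms give $0.75 + 1.5 - 2 = 0.25$, the correction $-3\hmaj\corrsn^\fanout$ is of order $10^{-2}$, and the perturbation terms $\eps_f,\eps_g,\eps_m$ can be chosen small enough that $c$ stays positive (around $0.24$) while $p_{\textsf{gap}}$ stays negligible --- exactly the regime already used to prove \lemmaref{gap}. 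Since $c>0$, the map $\commnum \mapsto c\,\commnum$ is strictly increasing, which is the monotonicity assertion.

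Next I would combine this with \lemmaref{commsize}, which guarantees $\commnum \geq \commnumlow$ for every committee except with negligible probability. Because $c\,\commnum$ is increasing, the least favourable case is $\commnum = \commnumlow$, and \lemmaref{gap} already asserts $\gap \geq 1$ there; monotonicity then propagates $\gap \geq 1$ to every committee in the probable range.

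Finally, the two stated inequalities follow by arithmetic. Every committee member is either good or bad and not both, so $\goodnum + \badnum = \commnum$, i.e., $\badnum = \commnum - \goodnum$. Substituting into $\gap = \goodnum - 2\badnum \geq 1$ yields $3\goodnum - 2\commnum \geq 1$, hence $\goodnum \geq (2\commnum+1)/3 > 2\commnum/3$, and therefore $\badnum = \commnum - \goodnum \leq (\commnum-1)/3 < \commnum/3$.

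The only step carrying any content is the numerical check that the slope $c$ is strictly positive for the chosen $(\hmaj,\corrsn,\fanout)$ and that the perturbation parameters can be taken small enough to preserve both positivity and negligibility of the failure probability; everything else is bookkeeping. I therefore anticipate no genuine obstacle --- the corollary is essentially a repackaging of \lemmaref{gap} together with the identity $\goodnum + \badnum = \commnum$.
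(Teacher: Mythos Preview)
Your proposal is correct and matches the paper's intent: the corollary is stated without proof in the paper precisely because it is a direct repackaging of \lemmaref{gap} via the identity $\goodnum + \badnum = \commnum$, and your argument spells this out exactly. The only observation worth adding is that the paper's ``$\gap \geq 1$'' is a deliberately weak summary (the actual lower bound $c\,\commnum$ is on the order of several hundred over the probable range), so your monotonicity check $c>0$ is the right thing to verify and your arithmetic derivation of $\goodnum > 2\commnum/3$ and $\badnum < \commnum/3$ from $\gap \geq 1$ is the intended one-line finish.
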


\begin{lemma}[Upper bound on malicious citizens]\label{lemma:maxmaliciousnodes}
Let $\badnummax$ denote the maximum number of bad nodes in any committee. Then, except with negligible probability, $\badnummax \leq 772$.
\end{lemma}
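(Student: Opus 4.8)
The plan is to reuse the bound on $\badnum$ already established inside the proof of \lemmaref{gap} and simply evaluate it at the top of the probable committee range. Recall that $\badnum = X + Y$, where $X$ counts the malicious citizens that land in the committee and $Y$ counts the \emph{honest} committee members whose random $\fanout$-subset of politicians turns out to be entirely corrupt. From the adversary's viewpoint each citizen is independently in the committee with probability $p=\commnummean/\mobilenum$, and these coins are independent of the (private, fresh) politician-sampling coins, so $X$ and $Y$ are each sums of independent indicators and the argument from \lemmaref{gap} gives
\[
\badnum \;\le\; \big(1-\hmaj+\eps_m + (\hmaj+\eps_g)(\corrsn^\fanout+\eps_f)\big)\,\commnum
\]
except with probability $\exp(-\kl{1-\hmaj+\eps_m}{1-\hmaj}\commnum)+p_{\textsf{gf}}$.

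First I would invoke \lemmaref{commsize} to replace $\commnum$ by $\commnumhigh = 2300$ in the displayed bound (the event $\commnum>\commnumhigh$ costs only a negligible additive term), and evaluate the failure probability at the \emph{least} favorable end $\commnumlow = 1700$ of the probable range. Then I would pick the slacks: $\eps_g$ can be taken essentially $0$, but $\eps_m$ and $\eps_f$ must be chosen large enough that $\exp(-\kl{1-\hmaj+\eps_m}{1-\hmaj}\commnumlow)$ and the two Kullback--Leibler terms hidden in $p_{\textsf{gf}}$ all drop below $2^{-\secparam}$. With $\hmaj=0.75$ and $\corrsn=0.8$ (so $\corrsn^{\fanout}=0.8^{25}\approx 3.8\times 10^{-3}$), this forces $\eps_m\approx 0.07$ and $\eps_f$ on the order of $10^{-2}$; plugging these together with $\commnumhigh=2300$ into the displayed bound yields $\badnum\le 772$, with the malicious-citizen term $(1-\hmaj+\eps_m)\commnumhigh$ contributing the bulk ($\approx 736$) and the ``isolated honest'' term the remaining $\approx 36$. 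A union bound over the polynomially many rounds of the protocol keeps the overall failure probability negligible.

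The delicate point --- and the reason $\badnummax$ is $772$ rather than close to the mean $(1-\hmaj)\commnummean = 500$ --- is precisely this slack: because a committee has only $\sim\!2000$ members, the Chernoff/KL deviation needed to reach $2^{-\secparam}$ is a non-trivial fraction of the mean, and the bound must hold uniformly over the whole probable range, so one has to pair the largest possible committee size (which maximizes $\badnum$) against the smallest (which maximizes the tail probability) and check that the arithmetic still closes at $772$. Everything else --- verifying $\binom{\corrsn\snnum}{\fanout}/\binom{\snnum}{\fanout}\le\corrsn^{\fanout}$, and the independence of committee membership from the politician-sampling coins --- is routine and already set up in the preamble to this section.
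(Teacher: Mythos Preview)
Your proposal is correct and follows essentially the same approach as the paper: reuse the bound $\badnum \le (1-\hmaj+\eps_m+(\hmaj+\eps_g)(\corrsn^\fanout+\eps_f))\commnum$ derived inside the proof of \lemmaref{gap}, cap $\commnum$ at $\commnumhigh=2300$ via \lemmaref{commsize}, and then choose the slacks so that the tail probabilities are negligible. Your write-up in fact supplies more of the numerical detail (the $\approx 736+36$ decomposition and the required size of $\eps_m,\eps_f$) than the paper's own proof, which simply asserts that ``suitably small $\eps$'s'' suffice.
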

\begin{proof}
In the proof of Lemma~\ref{lemma:gap}, we already saw that $\badnum \leq (1-\hmaj+\eps_m+(\hmaj+\eps_g)(\corrsn^\fanout+\eps_f))\commnum$, except with probability $\exp\left(-\kl{1-\hmaj+\eps_m}{1-\hmaj}\commnum\right)+p_{\textsf{gf}}$. Lemma~\ref{lemma:commsize} shows us that $n \leq 2300$ for the probable range of committee size except with negligible probability. Combining these two estimates and choosing suitably small $\eps$'s, we conclude the claimed upper bound on $\badnummax$. 
\end{proof}

In Table~\ref{table:comm-size} we show how the committee size varies with some candidate choice of corruption thresholds and size of random fan-out read/writes at the \politicians.

\begin{table*}
\begin{center}
\begin{tabular}{ |c |  c  | c|}
\hline
Corruption threshold  & Corruption threshold & Mean Committee \\
of \citizens & of \politicians & size \\
\hline
0.2 & 0.8 &  820 \\ \hline
0.2 & 0.75 & 700 \\ \hline
0.25 & 0.8  & 2000 \\ \hline
0.25 & 0.75  & 1850 \\ \hline
0.3 & 0.8 & 14000 \\ \hline
0.3 & 0.75 & 12000 \\ \hline
\end{tabular}
\end{center}
\vspace{-0.2in}
\caption{{\bf Average  Committee size with varying corruption thresholds; $\fanout = 25$.}}
\label{table:comm-size}
\end{table*}

\section*{Sub Protocols}

We now describe some building block protocols that we will use in our final blockchain protocol.

\newcommand{\ewnp}{except with negligible probability}
\section{Get Ledger Protocol}
\label{app:get-ledger}

We will denote by $\tk$ the public key of the TEE of a citizen %
and by $\vk$ the signature verification key of the same citizen on the blockchain. The signature key pair is generated within the TEE and $\vk$'s certification by $\tk$ is verified at the time the citizen is added to the blockchain with the \texttt{addNewNode} transaction.

In Blockene, we have a chain of blocks of transactions and an implicit chain of valid identities (public keys and certificates on them) and previous block's hashes. Denote by $\subblock_i$ the \emph{sub-block} in $\block_i$ containing new citizen  identities added in $\block_i$, $\hash(\block_{i-1})$,  and $\hash(\subblock_{i-1})$. Here, an identity is defined by a tuple $\left(\tk, \cert(\tk), \vk, \cert(\vk)\right)$, where $\cert(k)$ is a suitable certificate for public/verification key $k$. Also, a committee member of block $\block_i$ signs $\left(\hash(\block_i), \GSRoot_i, \hash(\subblock_i)\right)$, where $\GSRoot_i$ is the root of global state Merkle tree after $\block_i$, and $\subblock_i$ is the  sub-block in $\block_i$ defined as above.

Most of our sub-protocols and main blockchain protocol assume that a good citizen can reliably learn the height of the blockchain and the set of all valid public keys on the blockchain at that height. Here, we describe a protocol \texttt{getLedger} that realizes this assumption.

Define
\begin{align}
\label{eq:idpk}
\idpk_i & :=  \{(\tk,\vk, i) : \text{identity $(\tk, \vk)$ is added by $\subblock_i$}\}, \\
\label{eq:gspk}
\gspk_i & :=  \idpk_0 \; \cup \: \cdots \: \cup \; \idpk_i, \\
\label{eq:hash-chain}
\hc_i & := \{\hash(\block_{i-9}), \ldots, \hash(\block_{i-1})\}.
\end{align}
So, $\gspk_i$ is the set of all (TEE public key, verification key) pairs added to Blockene by the time $\block_i$ is committed, including the block number when the citizen with that pair is added and $\hc_i$ is the set of hashes of ``skipped'' blocks between successive calls to \texttt{getLedger} described below.

Our citizen nodes are stateful and maintain a local state of last verified height of blockchain and identities. If the last verified height of the blockchain by a citizen $v$ is $i$,  then \emph{local state} $\localstate(v,i)$ of that citizen is defined as follows.
\begin{align}
\label{eq:localstate}
\localstate(v, i) := \left(\hash(\block_{i}), \GSRoot_i, \hash(\subblock_i),  \gspk_i, \hc_i \right).
\end{align}

\emph{Committee Membership}: As described in Section~\ref{sec:comm-sel-main}, an identity $(\tk_j, \vk_j) \in \gspk_{i-50}$ is in committee for block $\block_i$ if $\sortition(\sk_j, \hash(\block_{i-10}), i)$ outputs $1$ and this can be verified without knowing $sk_j$ given $vk_j$ .

We describe in Algorithm~\ref{algo:getLedger} below the  \texttt{getLedger} protocol for verifying ledger height $i+10$, given the verifier $v$  has last verified height $i$,  without an explicit brute-force verification of signatures of all $10$ blocks. The algorithm naturally generalizes to verifying any height $i+j$ for $1 \leq j \leq 10$. The jump of ``10'' here is due to the way we define the cryptographic sortition function to determine the committee membership.

\begin{algorithm*}
\caption{\texttt{getLedger} by a citizen node $v$}
\label{algo:getLedger}
\begin{algorithmic}

\Require Citizen $v$ has verified ledger height $i$ and stores local state $\localstate(v,i)$ as in \eqref{eq:localstate}. Citizen $v$ talks to a politician node claiming to have a correct copy of the blockchain up to block $\block_{i+10}$.
\Ensure Citizen $v$ either rejects or accepts and updates local state to $\localstate(v, i+10)$.
\\

\begin{enumerate}
\item Download $\hash(\block_{i+10})$, global state root $\GSRoot_{i+10}$, sub-block hash $\hash(\subblock_{i+10})$, $\goodthresh$ block commit signatures ($\goodthresh = 850$ as set in Section~\ref{app:completeprotocol}), and membership proofs for these $\goodthresh$  committee members for $\block_{i+10}$. By traversing the sub-blockchain, also download $\subblock_{i+10}, \ldots, \subblock_{i+1}$ .

\item For each of the $\goodthresh$ citizens $(\tk, \vk)$ who signed the block commit signatures for $\block_{i+10}$, the verifier $v$ checks that $(\tk, \vk) \in \gspk_{i+10 - \cooloff}$. Given $v$ has $\hash(\block_{i})$ in its local state, committee membership proofs for $\block_{i+10}$ for each such $(\tk, \vk)$ can be verified.  Finally, signatures on $\left(\hash(\block_{i+10}), \mathsf{GSRoot}_{i+10}, \hash(\subblock_{i+10})\right)$ can be validated for each of these committee members.

\item The verifier $v$ checks that $\subblock_{i+j}$ contains $\hash(\subblock_{i+j-1})$, for $10 \geq j \geq 1$; note that $v$'s local state contains $\hash(\subblock_i)$. Next, $v$ extracts sets of $(\tk, \vk)$ in $\idpk_{i+j}$ from $\subblock_{i+j}$, for $1 \leq j \leq 10$. Then, $v$  checks certificates in all the tuples $\left(\tk, \cert(tk), \vk, \cert(\vk)\right)$ for $(\tk, \vk)$ added in $\idpk_{i+1}, \ldots, \idpk_{i+10}$ and accumulates them into $\gspk_{i+1}, \ldots, \gspk_{i+10}$. Finally, it verifies that $\subblock_{i+1}$ contains $\hash(\block_i)$ and also extracts $\hash(\block_{i+j-1})$ from $\subblock_{i+j}$ for $2 \leq j \leq 10$. In particular, the verifier now has $\gspk_{i+10}$ and $\hc_{i+10}$.

\item If any of the checks in the above steps fails, $v$ rejects. Otherwise, $v$ accepts and updates the new local state to $$\localstate(v,i+10) = \left(\hash(\block_{i+10}), \GSRoot_{i+10}, \hash(\subblock_{i+10}), \gspk_{i+10}, \hc_{i+10} \right).$$
\end{enumerate}

\end{algorithmic}
\end{algorithm*}

Recall (Definition~\ref{defn:good-citizen}) that a \emph{good citizen} is an honest committee member who talks to at least one honest politician.

The following lemma captures the correctness of \texttt{getLedger}.
\begin{lemma}
\label{lemma:jump}
Suppose $v$ has the correct local state for height $i$, i.e., $\localstate(v,i)$ in \eqref{eq:localstate} is consistent with the current blockchain up to height $i$.

If $v$ is a good citizen for round $(i+11)$ and calls \texttt{getLedger} (Algorithm~\ref{algo:getLedger}) with local state $\localstate(v,i)$ and accepts, then  $v$'s updated local state $\localstate(v,i+10)$ is consistent with the same blockchain up to height $i+10$, \ewnp.
\end{lemma}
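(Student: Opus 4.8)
The plan is to split a correctness proof of \texttt{getLedger} into a (cheap) completeness direction and a (substantive) soundness direction. Since $v$ is a good citizen for round $i+11$, its safe sample of politicians contains at least one honest politician, which -- by correctness of the already-committed rounds up to $i+10$ -- holds the true blocks $\block_{i+1},\ldots,\block_{i+10}$, the true committee-commit signatures, and the true sub-block chain, so this honest politician gives $v$ a transcript that passes every check and \texttt{getLedger} can make progress. The real content of the lemma is \emph{soundness}: I must show that \emph{any} transcript $v$ accepts forces $\localstate(v,i+10)$ to equal the true local state, \ewnp. I would prove this in three stages: (i) the purported triple $\left(\hash(\block_{i+10}), \GSRoot_{i+10}, \hash(\subblock_{i+10})\right)$ that survives the signature check is the true one; (ii) the true $\hash(\subblock_{i+10})$ pins down, via collision resistance, the whole downloaded chain $\subblock_{i+10},\ldots,\subblock_{i+1}$ and everything derived from it; and (iii) $v$'s local consistency checks tie that chain back to the $\hash(\block_i),\hash(\subblock_i)$ already in the (correct) $\localstate(v,i)$, so no chain that forks below height $i+10$ can pass.

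\textbf{Main obstacle (stage (i)).} The crux is the committee/threshold count, done carefully enough to avoid circularity. The committee for $\block_{i+10}$ is the set of registered identities that the sortition VRF selects on input $\hash(\block_i)\,\|\,(i+10)$, restricted to identities added before the cool-off window -- and both $\hash(\block_i)$ and that prefix of $\gspk_i$ live in $v$'s correct local state, so $v$ verifies committee membership exactly; a forged membership proof would require breaking the EdDSA/VRF, which happens only with negligible probability. Hence all $\goodthresh=850$ signers that $v$ accepts are genuine members of the committee of $\block_{i+10}$. By \lemmaref{maxmaliciousnodes} at most $\badnummax\le 772$ of that committee are bad, and by the lemmas bounding incorrectly-read and incorrectly-updated global state (\lemmaref{readincorrectkeys} and \lemmaref{incorrect-update}) at most $36$ further good citizens can have signed an incorrect triple; since $850>772+36$, any triple carrying $850$ valid signatures carries at least $42$ signatures from good citizens with the correct view, who by the protocol sign only the correct triple (the one established by \lemmaref{blockconsensus} for round $i+10$). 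So the accepted triple is the true one. The care needed here is to invoke the correctness of block commit for rounds $\le i+10$ as a fact about \emph{earlier} rounds of the overall induction -- rather than assuming what \texttt{getLedger} is being used to bootstrap -- and to fold the negligible-probability committee-size tail events of \lemmaref{commsize}--\lemmaref{gap} into the error term.

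\textbf{Stages (ii)--(iii) and wrap-up.} Once the true $\hash(\subblock_{i+10})$ is fixed, collision resistance of $\hash$ makes the check ``$\subblock_{i+j}$ contains $\hash(\subblock_{i+j-1})$'' pin down $\subblock_{i+j-1}$ from $\subblock_{i+j}$ inductively for $j=10,\ldots,1$; the final link, that $\subblock_{i+1}$ contains $v$'s stored $\hash(\subblock_i)$ and $\hash(\block_i)$, rules out a fork that agrees with the true chain at height $i+10$ but diverges lower down, since any such fork would be an explicit hash collision. Thus $v$ recovers the true $\idpk_{i+1},\ldots,\idpk_{i+10}$ (whose certificates it then verifies locally) and the true $\hash(\block_{i+1}),\ldots,\hash(\block_{i+9})$, hence the true $\gspk_{i+10}$ and $\hc_{i+10}$; together with the true triple from stage (i) this is exactly $\localstate(v,i+10)$. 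A union bound over the finitely many bad events -- hash collisions, signature/VRF forgeries, the committee-size tails, and the events underlying the ``$36$'' incorrectly-synced good citizens -- keeps the overall failure probability negligible in $\secparam$. Finally, I would note that the generalization in Algorithm~\ref{algo:getLedger} to a jump of any $j\le 10$ is the same argument verbatim with $i+10$ replaced by $i+j$.
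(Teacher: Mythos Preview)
Your proposal is correct and follows essentially the same route as the paper: both proofs argue that $v$ can locally and exactly determine the committee for $\block_{i+10}$ (since $\hash(\block_i)$ and the cool-off-filtered $\gspk$ are already in the assumed-correct $\localstate(v,i)$), use the commit threshold to pin down the true triple $\left(\hash(\block_{i+10}),\GSRoot_{i+10},\hash(\subblock_{i+10})\right)$, and then walk the sub-block hash chain via collision resistance back to the stored $\hash(\block_i),\hash(\subblock_i)$. Your write-up is more explicit than the paper's on two points---you unpack the threshold count $850>772+36$ directly rather than citing \theoremref{safety}, and you flag that invoking block-commit correctness for rounds $\le i+10$ must be read as the inductive hypothesis (the paper cites \theoremref{safety} inside this lemma, which in turn depends on \texttt{getLedger} through the block-commit precondition, so the mutual dependence is resolved exactly the way you describe)---but the underlying argument is the same.
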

\begin{proof}
Note that $v$ talks to at least one honest politician, say $P$, who holds the correct copy of the blockchain up to height $i+10$. We will argue that if $v$ accepts, then, \ewnp, $v$'s view of the blockchain is consistent with that of $P$'s  from height $i$ to height $i+10$, and hence, with the \emph{unique} blockchain up to height $i+10$ as maintained by $P$. From Theorem~\ref{theorem:safety}  proved in Section~\ref{sec:block-commit}, \ewnp, there is a unique sequence of blocks $\langle \block_{i+1}, \cdots, \block_{i+10} \rangle$ and global state root $\GSRoot_{i+10}$ that extend the blockchain up to $\block_i$ and $\GSRoot_{i}$.  Since $\localstate(v,i)$ is correct and the committee for $\block_{i+10}$ is determined by height $i$ prefix, in step (2), the committee that $v$ evaluates signatures of and the committee that blessed $\block_{i+10}$ in $P$'s copy are identical. Hence, among all politicians,  there is a unique tuple $\left(\hash(\block_{i+10}), \mathsf{GSRoot}_{i+10}, \hash(\subblock_{i+10})\right)$ in step (2) that would make $v$ accept and this must be consistent with $P$'s copy of the blockchain and global state. As argued above, Theorem~\ref{theorem:safety} implies that blocks $\block_{i+9}, \ldots, \block_{i+1}$, and hence their sub-blocks, also must be unique. By collision resistance of hash functions, $\hash(\block_{i+j})$ and $\hash(\subblock_{i+j})$ are also uniquely determined for $1 \leq j \leq 10$. It follows that  the sub-blocks $\subblock_{i+j}, 1 \leq j \leq 10$ that $v$ downloaded in step (1) using the sub-block hash chain are correct; note that $v$ checks $\subblock_{i+10}$ with committee's signatures on $\hash(\subblock_{i+10})$ and containment of $\hash(\block_i)$ in $\subblock_{i+1}$. In particular, $v$ obtains the correct values -- according to $P$'s copy of the blockchain -- for $\hash(\block_{i+9}), \ldots, \hash(\block_{i+1})$ and the identity tuples $\left(\tk, \cert(tk), \vk, \cert(\vk)\right)$ added in these sub-blocks.  Finally, in step (3), $v$ would validate the same identity certificates as the respective committees would have verified before adding them to these sub-blocks. Hence, if $v$ accepts, it would correctly compute $\idpk_{i+j}$, $\gspk_{i+j}$, and $HC_{i+j}$ for $ 1 \leq j \leq 10$.
\end{proof}

An easy inductive argument proves the following corollary.
\begin{corollary}
\label{corr:get-ledger}
Let $v$ be a good citizen for $\block_{i+1}$ for $i \geq \cooloff$. Then, \ewnp,  $v$ can acquire the correct local state $\localstate(v, i)$ for ledger height $i$  as in \eqref{eq:localstate}.
In particular,
\begin{enumerate}[label=(\roman*)]
\item $v$ obtains the correct global state root $\GSRoot_i$, and
\item $v$ downloads $\gspk_i$ into its local state, i.e., knows all the valid citizen identities $(tk,vk)$ registered up to $\block_i$.
\end{enumerate}
\end{corollary}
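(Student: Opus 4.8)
The plan is to prove this by induction on the ledger height, with Lemma~\ref{lemma:jump} supplying the inductive step. I would anchor the induction at a height $i_0$ at which $v$ provably holds a correct local state: the cleanest choice is $i_0 = 0$, since the genesis block fixes $\hash(\block_0)$, $\GSRoot_0$, $\hash(\subblock_0)$, the bootstrap identity set $\gspk_0$, and the (empty) hash chain $\hc_0$ by construction (a citizen that joins later instead anchors at the verified snapshot it downloaded when it was added, which is correct by the same mechanism). From the anchor, each \texttt{getLedger} call advances $v$'s last-verified height by $10$, or by $j < 10$ for a final partial step using the generalization of Lemma~\ref{lemma:jump} noted in its statement; hence after $\lceil (i - i_0)/10 \rceil$ calls $v$ holds a correct $\localstate(v,i)$. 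Reading off its components then gives $\GSRoot_i$ directly (claim (i)), and unrolling the definitions \eqref{eq:idpk}--\eqref{eq:gspk} recovers the full set $\gspk_i$ of valid $(\tk,\vk)$ pairs (claim (ii)).

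The next thing to handle is the probabilistic bookkeeping. To apply Lemma~\ref{lemma:jump} at a given step I need $v$ to talk to at least one honest politician during that \texttt{getLedger} call; each call queries a safe sample of $\fanout = 25$ politicians, which misses all honest politicians only with probability $\corrsn^{\fanout} = 0.8^{25} \approx 0.004$, and Lemma~\ref{lemma:jump} itself fails only \ewnp\ (a hash collision or a forged committee signature on $\left(\hash(\block_{i+10}),\GSRoot_{i+10},\hash(\subblock_{i+10})\right)$). I would take a union bound over the polynomially many \texttt{getLedger} calls that $v$ makes on the way up to height $i$, which keeps the total failure probability negligible in $\secparam$. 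I also need to thread the induction hypothesis carefully, since Lemma~\ref{lemma:jump} assumes the \emph{previous} local state $\localstate(v,j)$ is correct: the correctness of $\localstate(v,j+10)$ is always conditioned on the correctness established at the previous step.

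The hard part will be avoiding a circular dependence with safety. The proof of Lemma~\ref{lemma:jump} invokes Theorem~\ref{theorem:safety} — uniqueness of the blockchain prefix up to height $i+10$ — but the safety argument itself relies on good citizens knowing the correct structural state and global-state root at lower heights, which is exactly what this statement asserts. I would resolve this by organizing the two claims as a single induction on block height: assuming safety and the correctness of every good citizen's local state for all heights below $i$, Lemma~\ref{lemma:jump} yields correctness at height $i$, and this in turn is what the block-commit/safety argument for $\block_{i+1}$ consumes. Proved in lockstep, there is no genuine circularity, and the ``easy inductive argument'' the statement alludes to goes through.
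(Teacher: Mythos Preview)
Your overall plan—induction on ledger height with Lemma~\ref{lemma:jump} as the inductive step and genesis as the anchor—is precisely what the paper intends; its own proof is the single sentence ``An easy inductive argument proves the following corollary.'' Your identification of the apparent circularity with Theorem~\ref{theorem:safety}, and the resolution via a joint induction on block height, is a genuine contribution that the paper leaves entirely implicit.

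There is, however, a real gap in your probabilistic bookkeeping. You claim a union bound over the \texttt{getLedger} calls keeps the total failure probability negligible in $\secparam$, but the per-call probability you quote for missing every honest politician is $\corrsn^{\fanout} = 0.8^{25} \approx 0.004 \approx 2^{-8}$, which is \emph{not} negligible when $\secparam = 30$; multiplying by $i/10$ calls only makes this worse, and for moderate $i$ the bound exceeds $1$. The repair is to use the hypothesis of the corollary more directly rather than re-deriving it at every step: $v$ is assumed \emph{good} for $\block_{i+1}$, which by Definition~\ref{defn:good-citizen} means $v$'s fan-out in that round already contains an honest politician $P$. This same $P$ holds the correct blockchain up to height $i$ and can furnish the data for \emph{every} catch-up invocation of \texttt{getLedger} from $v$'s last verified height up to $i$; you do not need a fresh independent ``safe sample succeeds'' event per step. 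Equivalently, observe that the signature and hash-chain checks inside \texttt{getLedger} already make incorrect data unacceptable (given safety at lower heights, which your joint induction supplies), so contact with an honest politician is needed only once—for freshness—and the corollary's hypothesis provides exactly that. Either way, the only residual error term is the cryptographic failure in Lemma~\ref{lemma:jump}, which is genuinely negligible and does survive a union bound over polynomially many steps.
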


\paragraph{Communication cost of \texttt{getLedger}}: The verifier $v$ downloads the following from a politician. $850$ signatures of 64-bytes each on the triple $(\hash(\block_{i}), \GSRoot_i, \hash(\subblock_i)$ (less than 0.1KB) and for each of the $850$ committee members that signed, we need their verification keys, VRF values, and VRF proofs. This gives a total of about 136 KB. In addition, $v$ also downloads 10 subblocks and the id's and hashes in them. For concreteness. we assume each block allows one identity to be added. An identity $\left(\tk, \cert(tk), \vk, \cert(\vk)\right)$ along with the certificates is about 3.5KB. Heence a sub-block $\subblock_j$ with its id's,  $\hash(\block_{j-1})$,  and $\hash(\subblock_{j-1})$ is about 3.6KB. Hence, the 10 sub-blocks cost about 36KB. So, from each politician, $v$ downloads a total of about  172KB. Since a politician is blacklistable if the verification fails, in the most common case, $v$ needs to download only from one politician making \texttt{getLedger}'s download cost is less than 0.18MB in the common case. Since a citizen makes about 150 calls to \texttt{getLedger} in a day, the total download cost for a citizen node is roughly 27MB per day.

\section{Merkle tree of Global State}
\label{app:gs}
All the politicians maintain the up-to-date global state in the form of a Merkle tree whose root is signed by the committee in each round. 
The depth of this tree is $\deep =30$. Keys are mapped to random leaves and we assume that at most $\collide=10$ keys can get mapped to a single leaf.
In this section, we describe formally our optimized protocols for global state read and update by the light-weight citizens for the keys referenced in a tentative block of transactions. 
In our overall protocol (as described in Appendix~\ref{sec:block-commit}), these steps are executed after the completion of the consensus protocol. Hence, as proved formally in Appendix~\ref{sec:securityproofs}, we can safely assume that all good citizens as well as politician nodes have access to the transaction pools referred by the output of the consensus. This ensures that both the citizens and the  politicians know the keys that are being referenced in the proposed block. First, we describe our protocol for global state read that is used by citizen nodes to learn the correct values corresponding to these keys. 
Next, the citizen will apply relevant transactions on these key-values to obtain updated values for each of these keys. Finally, we describe our protocol for global state update that is used by citizens to obtain the merkle root of the tree that has updated values of each of these keys. 
For both of our protocols we prove that all good citizens except $36$ will learn the correct values during global state read and correct new global state Merkle root in global state update except with negligible probability. We use this to define effective good citizens that would sign the correct block. %

\subsection{Sampling based Merkle Tree Read}
\label{sec:samplingreadprotocol}
Here, we describe a protocol to realize the following task. A citizen and politicians hold a set of $\keys \subset \allkeys$ and the citizen wants to read the corresponding values from the global state. Let $k = |\keys|$.
We describe our optimized protocol for reading values from global state 
in \algoref{read-gs}.

\begin{algorithm*}
\caption{Global State Read by a citizen node}
\label{algo:read-gs}
\begin{algorithmic}

\Require %
The citizen has the correct root of the Merkle tree corresponding to the global state (using our Get Ledger protocol in Appendix~\ref{app:get-ledger}).  Citizen and politicians hold a set of $\keys \subset \allkeys$. Let $k = |\keys|$.
\Ensure Citizen outputs values corresponding to all keys in $\keys$. 
\\

\begin{enumerate}
\item (If not already picked) Pick a subset $S = (S_1, \dotsc, S_\fanout) \subset \sn$ at random.

\item Ask for values from $S_1$. 

\item Receive a signed ordered list of $k$ values from $S_1$. (Comm. cost: Download $k \cdot 4$ bytes.) \label{gsread:getval}

\item Pick a random subset $\keys' \subset \keys$ of size $k' = \mu \cdot k$ for appropriate $\mu <1$ and send to $S_1$. (Comm. cost: Upload $\mu \cdot k \cdot 4$ bytes.) \label{gsread:sc1}

\item Receive signed list of $k'$ challenge sibling paths in the merkle tree corresponding to keys in $\keys'$ from $S_1$. (Comm. cost: Download $\mu \cdot k \cdot (\hashsz \cdot \deep + \collide \cdot 8)$ bytes.) \label{gsread:sc2}

\item Verify all the challenge sibling paths against the root of merkle tree. If verification of one of the paths fails, then it can be used as a witness to blacklist $S_1$.  \label{gsread:sc3}

\item Citizens as well as the politicians $(S_1, \dotsc, S_\fanout)$ deterministically arrange $\keys$ into $\bnum$ buckets and compute hash of keys and values in each bucket. 
Each bucket would contain $k/\bnum$ keys.
Let us denote these hashes computed by the citizen as $(\hv_1, \dotsc, \hv_\bnum)$. 

\item Upload $(\hv_1, \dotsc, \hv_\bnum)$ to $(S_1, \dotsc, S_\fanout)$. (Comm. cost: Upload $\bnum \cdot \hashsz \cdot \fanout$ bytes.)

\item A politician $S_i$ checks these hash values against its locally computed list. It creates a list of indices for which these don't match. We call this the exception list $E_i$. 

\item Citizen receives exception lists from all politicians. Let $\tau$ be a parameter fixed later such that it is guaranteed that at most $\tau$ values can be incorrect in Step~\ref{gsread:getval} with probability at least $1-\eps_1$.  An exception list is considered valid if it is of size at most $\tau$. (Comm. cost: Download tiny.) \label{gsread:ex1}

\item If exception list $E_i$ is valid, ask $S_i$ for values for all the keys in bucket whose index is in  $E_i$ and receive the same. (Comm. cost: Download $\tau \cdot (k/\bnum) \cdot 4 \cdot \fanout$ bytes.) \label{gsread:ex2}

\item Now, process these claimed incorrect buckets sequentially as follows: Pick a key for which value from Step~\ref{gsread:getval} does not match the claimed value in Step~\ref{gsread:ex2}, ask the corresponding politician for the signed challenge path to the root. Continue till the citizen corrects $\tau$ keys or exhausts all unmatched values. Moreover, if a politician provides an incorrect challenge path, consider that politician as malicious and move to the list of the next politician. (Comm. cost: Download $(\tau+\fanout)\cdot (\hashsz \cdot \deep + \collide \cdot 8)$ bytes.)  \label{gsread:ex3}

\end{enumerate}

\end{algorithmic}
\end{algorithm*}
At a high level, as described in \S~\ref{subsec:opt-gs-main}, citizen first does a spot-check on small number of keys by downloading the complete challenge paths and then corrects the incorrect key-values by cross validating across $\fanout$ politicians.

First, we prove that for a good citizen, after successfully spot-checking only $\mu$ fraction of key-values in Steps~\ref{gsread:sc1}, \ref{gsread:sc2}, \ref{gsread:sc3} only (a small number of) $\tau$ values  are incorrect with probability $1-\eps_1$ (here, $\eps_1$ is a function of $\mu$ and $\tau$). Moreover, these values will get corrected by processing exception lists of size at most $\tau$ in Steps~\ref{gsread:ex1}, \ref{gsread:ex2}, \ref{gsread:ex3}. Hence, a good citizen gets correct values with probability $1-\eps_1$. More formally,

\begin{lemma}\label{lemma:keyspotcheck}
For an honest citizen, if verification in step~\ref{gsread:sc3} succeeds, then at most $\tau$ values can be incorrect in Step~\ref{gsread:sc1}, except with probability at most $\eps_1 = e^{-\mu\tau}$. 
\end{lemma}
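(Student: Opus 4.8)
The plan is to bound, for an honest citizen (who draws the test set uniformly and follows the protocol) and for \emph{any} strategy of the possibly-malicious politician $S_1$, the probability of the ``bad event'': the spot-check in Step~\ref{gsread:sc3} passes even though more than $\tau$ of the $k$ values returned in Step~\ref{gsread:getval} are incorrect. First I would freeze the adversary's behaviour up to and including Step~\ref{gsread:getval}: at that point the ordered list of $k$ values is fixed, so the set of corrupted positions is fixed; call its size $w$. Crucially, $w$ is determined \emph{before} the citizen samples the random test set $\keys'\subseteq\keys$ of size $k'=\mu k$ in Step~\ref{gsread:sc1}, so the sampling is independent of which $w$ keys are wrong. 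If $w\le\tau$ the conclusion holds trivially, so it suffices to treat the case $w>\tau$.

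Next I would argue that whenever $\keys'$ contains even one corrupted key, Step~\ref{gsread:sc3} fails except with negligible probability. An accepted challenge sibling path must, together with the leaf value reported in Step~\ref{gsread:getval}, hash through the $\deep$ levels of the tree to the (correctly known, signed) Merkle root; producing such a path for a key whose reported leaf value differs from the true one yields a collision of the hash function. Hence, up to an additive negligible term, the bad event is contained in the event that $\keys'$ \emph{avoids} all $w$ corrupted keys, and the politician's later, possibly adaptive, choice of paths in Step~\ref{gsread:sc2} is irrelevant to this bound. The probability that a uniformly random size-$k'$ subset of the $k$ keys avoids a fixed set of $w$ of them is the hypergeometric tail
\[
\frac{\binom{k-w}{k'}}{\binom{k}{k'}} \;=\; \prod_{i=0}^{k'-1}\frac{k-w-i}{k-i} \;\le\; \Bigl(1-\tfrac{w}{k}\Bigr)^{k'} \;\le\; e^{-w k'/k} \;=\; e^{-w\mu} \;\le\; e^{-\mu\tau},
\]
where the first inequality is the term-wise bound $\tfrac{k-w-i}{k-i}\le\tfrac{k-w}{k}$ (cross-multiplying reduces it to $wi\ge0$), the second is $1-x\le e^{-x}$, and the last uses $w>\tau$. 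Combining, for every adversary strategy the probability that the spot-check passes while more than $\tau$ values are incorrect is at most $e^{-\mu\tau}$ up to a negligible term; equivalently, conditioned on Step~\ref{gsread:sc3} succeeding, at most $\tau$ values are incorrect except with probability at most $\eps_1=e^{-\mu\tau}$, which is the claim. Note $k$ cancels out, so the guarantee depends only on the product $\mu\tau$ (and $k'=\mu k$), which is what later lets us fix $\mu$ and $\tau$ cheaply and uniformly across all blocks.

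The main obstacle, such as it is, is the second paragraph: cleanly separating out the negligible hash-collision term and confirming that adaptivity buys the adversary nothing (the corrupted set is locked in by Step~\ref{gsread:getval}, the test set is drawn afterward, and path responses do not enter the combinatorial estimate). The hypergeometric tail bound itself and the surrounding bookkeeping are routine, and the (small) hash-collision loss is absorbed into the ``\ewnp'' caveats already pervasive in the surrounding lemmas (e.g.\ \lemmaref{jump}).
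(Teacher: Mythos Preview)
Your proposal is correct and follows essentially the same approach as the paper: both argue that if more than $\tau$ values are corrupted, the probability that the uniformly drawn size-$\mu k$ test set misses all of them is at most $(1-\tau/k)^{\mu k}\le e^{-\mu\tau}$. Your version is simply more careful than the paper's one-line proof, explicitly using the hypergeometric tail (sampling without replacement) and spelling out the timing/adaptivity and Merkle hash-collision points that the paper leaves implicit.
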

\begin{proof}
The lemma follows from a standard probability argument. Let $t > \tau$ values be incorrect in Step~\ref{gsread:sc1}. Then the probability with which an honest citizen does not pick an incorrect value for verification is at most $\left(1-\frac{t}{k}\right)^{\mu\cdot k}\leq e^{-\mu\tau}$, for $t> \tau$. 
\end{proof}

\begin{corollary}[Upper bound on failure probability of Global State Read]
\label{corr:incorrect-keys}
A good citizen learns the correct values for all keys $\keys$, except with probability at most $\eps_1$.
\end{corollary}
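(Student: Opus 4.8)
The plan is to condition on the favorable event supplied by \lemmaref{keyspotcheck} and then show that the exception-list phase (Steps~\ref{gsread:ex1}--\ref{gsread:ex3} of \algoref{read-gs}) repairs every remaining incorrect value using the honest politician that a good citizen is guaranteed to have in its safe sample. First I would record the two structural facts that ``good'' buys us: by Corollary~\ref{corr:get-ledger} the citizen holds the correct global-state Merkle root $\GSRoot_i$, so any challenge path it successfully verifies pins the \emph{true} value of the corresponding key (by collision resistance of the hash); and by Definition~\ref{defn:good-citizen} at least one politician $S_j$ in the sampled set $S = (S_1,\dots,S_\fanout)$ is honest.

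If $S_1$ is honest there is nothing to prove, so assume $S_1$ is malicious. If the spot-check in Step~\ref{gsread:sc3} fails, $S_1$ is exposed with a witness and the citizen moves on to another politician; detected misbehaviour costs only extra rounds, not correctness, so I would simply observe that this branch never produces a wrong value, and (with a small union bound over the finitely many malicious politicians that could be tried as $S_1$, or by resampling $S$, with the resulting constant folded into the choice of $\mu,\tau$) it does not inflate the error probability beyond $\eps_1$. Conditioned on the spot-check passing, \lemmaref{keyspotcheck} gives that, except with probability $\eps_1 = e^{-\mu\tau}$, at most $\tau$ of the $k$ values returned in Step~\ref{gsread:getval} are incorrect; I would work inside this event from here on.

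Next I would analyse the exception-list round. The honest politician $S_j$ computes the bucket hashes from the true global state, so its exception list $E_j$ is exactly the set of buckets containing at least one incorrect value; since there are at most $\tau$ incorrect values, $|E_j| \le \tau$, so $E_j$ is valid and hence processed. For every key in a bucket of $E_j$ on which $S_1$'s list disagrees with $S_j$'s reported value, the citizen requests the challenge path from $S_j$, which verifies against $\GSRoot_i$ and fixes that key to its true value. The remaining subtlety — which I expect to be fiddly bookkeeping rather than deep — is the interaction with \emph{bogus} exception lists from malicious politicians and the ``correct at most $\tau$ keys'' budget in Step~\ref{gsread:ex3}: a malicious exception list either supplies a non-verifying challenge path (the politician is then branded malicious and skipped) or, if its path verifies, it can only be pinning a genuinely incorrect value, which is a legitimate correction. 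Every correction counted against the $\tau$-budget is thus a genuine one, so the budget can only run out after all of the at most $\tau$ incorrect values have been repaired; in particular the buckets of $E_j$ are fully processed. Hence every incorrect $S_1$-value gets corrected while every other key retains $S_1$'s (already correct) value, and the citizen outputs the correct value for every key in $\keys$. The only failure mode is the probability-$\le\eps_1$ event of \lemmaref{keyspotcheck}, which gives the claimed bound. The main obstacle is precisely this accounting of adversarial exception lists against the correction budget, together with making the retry-on-detection case for $S_1$ fully rigorous.
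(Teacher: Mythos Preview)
Your approach is correct and matches the paper's, which gives a three-sentence proof: invoke \lemmaref{keyspotcheck}, then observe that a good citizen's sample $S_1,\dots,S_\fanout$ contains an honest politician whose (valid, size $\le\tau$) exception list corrects the at most $\tau$ wrong values. The bookkeeping you flag as the main obstacle (retry when $S_1$ is caught in the spot-check, adversarial exception lists interacting with the $\tau$-correction budget) is simply glossed over in the paper; your accounting of it is more careful than the original.
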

\begin{proof}
It follows from the above lemma and the fact that for a good citizen at least one of $S_1, \dotsc, S_\fanout$ is honest. This honest politician node would provide the correct exception list that is used by citizen node to correct the incorrect values received in Step~\ref{gsread:sc1}.
\end{proof}

Next, we upper bound the number of good citizens that can be fooled into accepting incorrect values.

\begin{lemma}[Upper bound on number of good citizens fooled in Global State Read]
\label{lemma:readincorrectkeys}
Let $n_e$ denote the number of good citizens that have at least one incorrect key-value pair at the end of \algoref{read-gs}. Then, for $\commnumhigh \leq 2300$ and $\eps_1 \leq 2^{-10}$, $n_e < 18$, except with negligible probability. 
\end{lemma}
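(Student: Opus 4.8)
The plan is to treat $n_e$ as a sum of (essentially) independent Bernoulli indicators and apply a union bound. By Corollary~\ref{corr:incorrect-keys}, any fixed good citizen $v$ ends Algorithm~\ref{algo:read-gs} holding an incorrect value with probability at most $\eps_1$, where the probability is over $v$'s own coins (its random politician sample $S$ in Step~1 and its random spot-check set $\keys'$ in Step~\ref{gsread:sc1}); crucially this bound is worst-case over the behaviour of the (possibly all-corrupt) politicians $v$ talks to. Since distinct good citizens use independent internal randomness, I would first condition on the set of good citizens and then argue that $n_e = \sum_{v \text{ good}} \mathbb{1}[v \text{ is fooled}]$ is stochastically dominated by $\mathrm{Bin}(m_g,\eps_1)$, where $m_g$ is the number of good citizens.

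To bound $m_g$, I would invoke Lemma~\ref{lemma:commsize}: except with negligible probability the committee has at most $\commnumhigh \le 2300$ members, hence $m_g \le 2300$. The one subtlety to address is adaptivity: a dishonest politician may choose its answers to a citizen depending on what it learned from earlier citizens. This is handled by processing the good citizens sequentially and applying Corollary~\ref{corr:incorrect-keys} conditioned on the entire history -- each good citizen is still fooled with conditional probability at most $\eps_1$ no matter what happened before -- so the stochastic-domination claim survives.

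It then remains to bound the binomial tail. Since $n_e \ge 18$ forces some set of $18$ good citizens to be fooled simultaneously, a union bound over the at most $\binom{2300}{18}$ such subsets gives
\[
\Pr[n_e \ge 18] \;\le\; \binom{2300}{18}\,\eps_1^{18} \;\le\; \frac{2300^{18}}{18!}\,2^{-180} \;\le\; 2^{-\secparam},
\]
using $\eps_1 \le 2^{-10}$ and $2300^{18}/18! < 2^{150}$. Adding back the negligible failure probability from Lemma~\ref{lemma:commsize} yields $n_e < 18$ except with negligible probability. The main obstacle is making the independence argument rigorous against an adaptive adversary (handled by the sequential conditioning above) and verifying that the constant $18$ is large enough: the expected number of fooled good citizens is only about $2300 \cdot 2^{-10} \approx 2.25$, so the union bound is fairly tight and the combinatorial estimate must use $\binom{n}{k}\le n^k/k!$ rather than a cruder bound to clear the $2^{-\secparam}$ threshold.
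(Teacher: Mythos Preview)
Your proposal is correct and follows essentially the same approach as the paper: bound the per-citizen failure probability by $\eps_1$ via Corollary~\ref{corr:incorrect-keys}, then union-bound over all $\binom{\commnumhigh}{18}$ subsets of size $18$ and verify $\binom{2300}{18}\eps_1^{18} < 2^{-\secparam}$. If anything, you are more careful than the paper, which omits the sequential-conditioning argument for adaptivity and the explicit numerical estimate.
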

\begin{proof}
Let $E_v$ denote the event where a good citizen $v$ has incorrect key-value pairs at the end of \algoref{read-gs}. From Corollary~\ref{corr:incorrect-keys}, we have probability of $E_v$ is upper bounded by $\eps_1$. 
Hence, the probability with which $E_v$ occurs for at least $18$ citizens is upper bounded by ${\commnumhigh \choose 18}\eps_1^{18}\times1$, which is $<2^{-\secparam}$ using $\commnumhigh \leq 2300$ and $\eps_1 \leq 2^{-10}$.
\end{proof}

\paragraph{Total Communication.} We calculate concrete communication of our protocol in our setting. 
For our setting, $k \leq 300,000, \deep = 30, \fanout = 25, \collide = 10, \hashsz = 10$. 
We need $\mu \tau > 7$ to achieve  $\eps_1 \leq 2^{-10}$. 
Recall the major upload and download cost in the above protocol.
\begin{tiret}
\item Upload: $ \bnum \cdot \hashsz \cdot \fanout$ (Step 8).
\item Download: $\mu \cdot k \cdot (\hashsz \cdot \deep + \collide \cdot 8)$ (Step~\ref{gsread:sc2}) $+ \tau \cdot (k/\bnum) \cdot 4 \cdot \fanout$ (Step~\ref{gsread:ex2}) $+ (\tau+\fanout)\cdot (\hashsz \cdot \deep + \collide \cdot 8)$ (Step~\ref{gsread:ex3}). %
\end{tiret}
We set $\mu = 0.015$ (i.e., $k' = 4500$), $\tau = 500$, and $\bnum = 2000$. This gives us upload of $0.5$MB and download of 1.71MB (in Step~\ref{gsread:sc2}), 7.5MB (in Step~\ref{gsread:ex2}) and 0.19MB (in Step~\ref{gsread:ex3}).

\subsection{Sampling based Merkle Tree Write}\label{sec:samplingwriteprotocol}

In our protocol, the \politicians\ will compute the updated Merkle tree $T'$. Naturally, they cannot be trusted to do this computation correctly - so we construct a protocol that will enable \citizens\ to verify the correctness of the updated Merkle tree without downloading all challenge paths. The high level idea is as follows. Let the original tree be $T$ and the updated tree be $T'$. We break $T'$ at a level, say $a$, that has $2^a$ nodes. We call this level the frontier level and the nodes at this level as frontier nodes. Now \citizens\ will obtain the values of the frontier nodes of $T'$ from a random subset of the \politicians. The \citizens\ then run a spot checking algorithm - they pick a random subset of frontier nodes and ask a \politician\ to prove the correctness of that frontier node. The \politician\ does this by showing the challenge paths (up to the frontier node) in $T'$ for all leaves that have changed and the challenge paths in $T$ for all (potentially internal) nodes that havent changed. Next, \citizens\ create exception lists with the help of a random subset of \politicians. This list denotes which frontier nodes are incorrect with the \citizens. The \citizens\ then proceed to sequentially correct the incorrect frontier nodes and then finally compute the correct root of $T'$ from the frontier nodes (The algorithm is provided in \algoref{update-gs}). We can show through a careful argument (in Lemmas \ref{lemma:updatekeyspotcheck} and \ref{lemma:incorrect-update}) that the sizes of exception lists can be bounded and \citizens\ do not accept an incorrectly updated Merkle tree $T'$, except with a small probability, which we once again factor in to the set of malicious \citizens.

We provide more details now. A citizen as well as all politicians hold a set of $\keys \subset \allkeys$ and corresponding {\em updated/new} values and the citizen wants to compute the correct new global state merkle root.
Let $k = |\keys|$.
We assume that the citizen already has the authenticated merkle root of the previous global state against which it runs the various verification checks.
Also, consider the level in the merkle tree with $2^{\cutpt}$ number of nodes. We call this level the frontier level and the nodes at this level as frontier nodes.
We describe the protocol in \algoref{update-gs}.

\newcommand{\subA}{c}

\begin{algorithm*}
\caption{Global State Update by a citizen}
\label{algo:update-gs}
\begin{algorithmic}
\Require %
The citizen has the correct root of the Merkle tree corresponding to the global state$^\dag$. Citizen and politicians hold set of $\keys \subset \allkeys$ (as well as their corresponding new values) that have to be updated. Let $k = |\keys|$.
\Ensure If a good citizen accepts then with probability at least $1-\eps$, new Merkle root is correct for that citizen (for a constant $\eps$ to be specified later.) \\

\begin{enumerate}
\item All politicians use the new key-value pairs to construct the new merkle tree. Let us denote the old tree by $\tree$ (that is authenticated by threhold number of citizens from previous committee) and the new tree by $\tree'$.

\item Citizen picks a subset $S = (S_1, \dotsc, S_\fanout) \subset \sn$ at random (if not already picked) and queries $S_1$ for frontier node values in new tree $\tree'$.

\item Citizen receives signed frontier node values in $\tree'$. Denote this set by $\fnodes$. (Comm. cost: Download $2^{\cutpt} \cdot \hashsz$ bytes.)

\item Now the citizen runs the following spot-checking algorithm for the frontier node values.

\begin{enumerate} 

\item Pick a random subset $\fnodes' \subset \fnodes$ of size $\subA$ and send to $S_1$. We will fix $\subA$ later. (Comm. cost: Upload $2^{\cutpt-3}$ bytes, tiny.)

\item For each $f \in \fnodes'$, $S_1$ computes the following: 1) for all the keys that have changed in the subtree rooted at $f$ it computes their challenge sibling path from leaf to $f$. 2) For all the internal nodes that are part of some challenge sibling path and are unchanged in $\tree$ and $\tree'$, it computes the challenge path for this (potentially internal) node in the tree $\tree$. It sends all this data signed to the citizen. Note that challenge paths for all the unchanged siblings are perfectly overlapping and can be sent together.

(Comm. cost: Download $\frac{k \cdot \subA}{2^a} \cdot ((\deep-\cutpt)\cdot \hashsz + \collide\cdot 8) + \frac{k \cdot \subA}{2^a} \cdot \left( 
\hashsz \cdot \deep + \collide\cdot 8   \right)$ bytes.)

\item Note that the citizen knows the keys that have changed as well as their location in global state merkle tree. Hence, it knows all the challenge paths that need to be provided in $\tree'$ and $\tree$.  After receiving all these challenge paths, it runs the appropriate verification procedure. If some verification fails, then it can be used as a witness to blacklist $S_1$.  

\end{enumerate}

\item Next, the citizen runs the following steps to learn exception list of incorrect frontier nodes and proofs to correct them. Let $\tau$ be a parameter to be fixed later such that the citizen would try to correct at most $\tau$ frontier nodes. 

\begin{enumerate}

\item Download the signed frontier nodes from remaining politicians $(S_2, \dotsc, S_\fanout)$.  (Comm. cost: Download $2^{\cutpt} \cdot \hashsz \cdot (\fanout-1)$ bytes.) 

\item Based on information obtained above, the citizen would create an exception list $E_i$ corresponding to each politician $S_i$. Discard a politician if the exception list is longer than $\tau$.

\item Now similar to spot-checking step above, the citizen sequentially verifies sub-trees under $\tau$ frontier nodes in the cummulative exception lists. If data provided by a politician does not verify, declare that politician as malicious. Hence, the citizen downloads data corresponding to at most $c' = \tau + \fanout$ subtrees. 
(Comm. cost: Download: $\frac{k \cdot c'}{2^a} \cdot ((\deep-\cutpt)\cdot \hashsz + \collide\cdot 8+  \hashsz \cdot \deep + \collide\cdot 8 )$ bytes.)

\end{enumerate}

\item Finally, the citizen computes the root of the new merkle tree $\tree'$ using final frontier node values. 

\end{enumerate}

\end{algorithmic}

\vspace{1em}
$^\dag$\footnotesize{This can be ensured using the signatures on the root by the previous committee. We will include the communication required to realize this in the main protocol.}
\end{algorithm*}

\begin{lemma}\label{lemma:updatekeyspotcheck}
For a good citizen $v$, if verification in step 4 succeeds, then at most $\tau$ values can be incorrect in Step 3, except with probability at most $\eps_2 = \left(1-\frac{\tau}{2^\cutpt}\right)^{\subA}$. Hence, $v$ learns the correct values for all frontier nodes $\fnodes$, except with probability at most $\eps_2$.
\end{lemma}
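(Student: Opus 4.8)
The plan is to transcribe, almost verbatim, the argument of Lemma~\ref{lemma:keyspotcheck} and Corollary~\ref{corr:incorrect-keys} for the global-state \emph{read}, with the pool of $k$ key--values replaced by the pool of $2^\cutpt$ frontier-node values, the spot-check subset $\keys'$ replaced by $\fnodes'$, and the sample size $\mu k$ replaced by $\subA$. The only point that is not a routine transcription is the soundness of the verification in step~4, so I would settle that first: if $S_1$ returns in step~3 a value for some frontier node $f$ that disagrees with the true value of $f$ in $T'$, and $f$ lands in the random set $\fnodes'$, then the verification in step~4(c) rejects, except with negligible probability. The reason is that the citizen holds the committee-authenticated root of the old tree $T$ and knows exactly which leaves changed and their new values; from $S_1$'s claimed data for $f$ --- the new-leaf-to-$f$ challenge paths in $T'$ and the challenge paths in $T$ for the unchanged internal siblings --- the citizen recomputes $f$ bottom-up. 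The unchanged-sibling challenge paths must verify against the authenticated root of $T$, and by collision resistance this pins their node values to the true ones; together with the known new leaf values this uniquely determines the correct recomputed value $\hat f$, so $S_1$ cannot report an incorrect value for $f$ without the check detecting the mismatch (which also yields a blacklisting witness).

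Given the detection claim, the first statement is the same probabilistic bound as in Lemma~\ref{lemma:keyspotcheck}. Suppose $t > \tau$ of the $2^\cutpt$ frontier values returned in step~3 are incorrect. Since $S_1$ signs and hence commits $\fnodes$ in step~3 \emph{before} the citizen reveals the uniformly random subset $\fnodes' \subseteq \fnodes$ of size $\subA$ in step~4(a), the $t$ corrupted positions are fixed independently of $\fnodes'$, so the probability that $\fnodes'$ misses every corrupted position is $\binom{2^\cutpt-t}{\subA}/\binom{2^\cutpt}{\subA} \le (1-t/2^\cutpt)^{\subA} \le (1-\tau/2^\cutpt)^{\subA} = \eps_2$, using $t > \tau$. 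On the complementary event at least one corrupted frontier node is spot-checked and, by the detection claim, step~4 rejects. Hence, conditioned on step~4 succeeding, at most $\tau$ of the step-3 values are incorrect, except with probability $\eps_2$ (absorbing the negligible hash-collision term). For the second (``hence'') statement I would argue as in Corollary~\ref{corr:incorrect-keys}: a good citizen samples at least one honest politician $S_i$, whose exception list $E_i$ is exactly the set of positions where the step-3 response differs from the true $T'$; by the first statement $|E_i| \le \tau$, so $E_i$ is not discarded, and processing it in step~5(c) the citizen obtains a valid challenge-path proof for each such position (the honest $S_i$ always supplies a correct one, and any inconsistent proof is caught by the same recompute-against-$T$ test). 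Thus after step~5(c) every frontier value held by the good citizen equals the true value in $T'$, except with probability $\eps_2$.

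I expect the main obstacle to be not the probabilistic counting --- that is identical to the read case --- but pinning down precisely what ``the appropriate verification procedure'' checks in steps~4(c) and~5(c) and showing it is sound against an $S_1$ that pairs honest $T$-paths for the unmodified subtrees with a forged frontier value. The crux is that the authenticated old root plus collision resistance force the unmodified-subtree hashes to be the true ones, which in turn forces the recomputed frontier value to be correct; once that is nailed down, the rest is a transcription of the global-state-read analysis, and the additive negligible terms (hash collisions, and the negligible-probability events underlying the authenticated root) stay below the stated $\eps_2$ bound.
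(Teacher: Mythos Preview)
Your proposal is correct and matches the paper's approach exactly: the paper's entire proof is the single sentence ``The proof of this is identical to the proof of Lemma~\ref{lemma:keyspotcheck} and Corollary~\ref{corr:incorrect-keys},'' and your plan is precisely to transcribe that argument with $2^\cutpt$ frontier values in place of $k$ key--values and sample size $\subA$ in place of $\mu k$. In fact you supply more than the paper does, by spelling out the soundness of the step~4(c) verification (authenticated old root plus collision resistance pin the unchanged siblings, forcing the recomputed frontier value to be correct); the paper leaves this implicit.
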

\begin{proof}
The proof of this is identical to the proof of Lemma \ref{lemma:keyspotcheck} and Corollary~\ref{corr:incorrect-keys}.
\end{proof}

Below, we will set our parameters for $\cutpt, \subA, \tau$ such that $\eps_2 \leq 2^{-10}$.

\begin{lemma}[Upper bound on number of good citizens fooled in ReadGlobalState]
\label{lemma:incorrect-update} Let $n_{\textrm{ef}}$ denote the number of good citizens that compute incorrect merkle root of $\tree'$ at the end of \algoref{update-gs}. Then, $|n_{\textrm{ef}}| < 18$, except with negligible probability for $\eps_2 \leq 2^{-10}$. 
\end{lemma}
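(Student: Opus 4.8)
The plan is to mirror the argument used for the read protocol in Lemma~\ref{lemma:readincorrectkeys}, now built on the write-side spot-check bound of Lemma~\ref{lemma:updatekeyspotcheck}. First I would fix the adversary's strategy and an execution of \algoref{update-gs}, and for each good citizen $v$ define $E_v$ to be the event that $v$ accepts but ends up computing an incorrect root of $\tree'$. By Lemma~\ref{lemma:updatekeyspotcheck} --- together with the fact that a good citizen, by Definition~\ref{defn:good-citizen}, communicates with at least one honest politician, whose correct frontier values and challenge paths allow $v$ to repair every frontier node flagged in a valid exception list --- we get $\Pr[E_v] \le \eps_2$, with the parameters $\cutpt, \subA, \tau$ chosen so that $\eps_2 \le 2^{-10}$.

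The next step is to turn this per-citizen bound into a bound on $n_{\textrm{ef}}$ via a union bound over size-$18$ subsets of good citizens. The only randomness a citizen injects into the protocol is its independently sampled spot-check subset $\fnodes' \subset \fnodes$ (step~4a), and the bound $\Pr[E_v] \le \eps_2$ holds conditioned on an arbitrary --- possibly adaptively chosen and different for each citizen --- corruption pattern of the frontier data supplied to $v$. Hence, conditioning one at a time on the subset choices of the citizens in a fixed set $T$ of $18$ good citizens, $\Pr[\bigcap_{v\in T} E_v] \le \eps_2^{18}$. Using that the committee size is at most $\commnumhigh \le 2300$ except with negligible probability (Lemma~\ref{lemma:commsize}), a union bound over all such $T$ gives
\[
\Pr[\, n_{\textrm{ef}} \ge 18 \,] \;\le\; \binom{\commnumhigh}{18}\, \eps_2^{18} \;\le\; \binom{2300}{18}\, 2^{-180},
\]
which a routine estimate shows to be below $2^{-\secparam}$ for $\secparam = 30$ (this is exactly the computation behind Lemma~\ref{lemma:readincorrectkeys}); adding in the negligible probability that the committee-size bound itself fails leaves a negligible total, so $n_{\textrm{ef}} < 18$ except with negligible probability.

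The step I expect to be the crux is the independence justification used in the union bound against an adaptive adversary: one must verify that the malicious politicians cannot correlate the outcomes of different good citizens beyond what each citizen's own fresh spot-check randomness controls. The clean resolution is that Lemma~\ref{lemma:updatekeyspotcheck} is already worst-case over the frontier corruption pattern a given citizen observes, so even adaptive, per-citizen corruption leaves each $E_v$ with probability $\le \eps_2$ over that citizen's independent subset choice; the product bound for $T$ then follows by sequential conditioning. Everything else --- the binomial estimate and the final numerical comparison with $2^{-\secparam}$ --- is routine and identical to the corresponding computation in the read case.
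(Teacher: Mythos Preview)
Your proposal is correct and follows essentially the same route as the paper: define the per-citizen failure event $E_v$, bound $\Pr[E_v]\le \eps_2$ via Lemma~\ref{lemma:updatekeyspotcheck}, and then union-bound over all $\binom{\commnumhigh}{18}$ size-$18$ subsets to get $\binom{2300}{18}\,2^{-180}<2^{-\secparam}$. Your discussion of the independence/conditioning issue for the adaptive adversary is in fact more careful than the paper's own argument, which simply asserts the product bound without comment.
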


\begin{proof}
Let $E_v$ denote the event where a good citizen $v$ computes incorrect merkle root of $\tree'$ at end of \algoref{update-gs}. This happens when $v$ 
succeeds in verification in Step 4, but greater than $\tau$ frontier nodes are incorrect in Step 3. From Lemma \ref{lemma:updatekeyspotcheck}, we have that the probability with which $E_v$ occurs is $\eps_4$. Hence, the probability with which $E_v$ occurs for $18$ citizens is bounded by ${\commnumhigh \choose 18}\eps_2^{18}\times1$, which is $2^{-\secparam}$ using $\commnumhigh \leq 2300$ and $\eps_2 \leq 2^{-10}$. 
\end{proof}

\paragraph{Some parameter settings and estimates on communication.} First, we recall the major download costs in the above protocol. Upload is tiny.
\begin{tiret}

\item Frontier nodes download: $2^{\cutpt} \cdot \hashsz \cdot \fanout$ (Steps 3 and 5a) 
\item Spot check: $\frac{k \cdot \subA}{2^a} \cdot \left(16\collide + \hashsz \cdot (2\deep - \cutpt) \right)$ (Step 4b)
\item Exception check:  $\frac{k \cdot (\tau+\fanout)}{2^a} \cdot \left(16\collide + \hashsz \cdot (2\deep - \cutpt) \right)$ (Step 5c)
\end{tiret}

For $k = 300000$, we set $\cutpt = 13, \subA = 72, \tau = 800$ to get $\eps_2 < 2^{-10}$. Also, $\deep = 30, \fanout = 25, \collide = 10, \hashsz = 10$ and communication is as follows 2.04MB (Steps 3 and 5a), 1.66MB (Step 4b) and 19MB (Step 5c).

\subsection{Savings from optimized global state protocols}
\label{app:gs-savings}
First, we compare the communication cost of our optimized global state read/write w.r.t. naive protocol that downloads the challenge paths in the Merkle tree for all relevant $\keys$, computes partial Merkle tree with new updated values to create the new root. We note that if a politician note lies in either GSRead protocol (\algoref{read-gs}) or GSUpdate protocol (\algoref{update-gs}), it is a detectable offence, i.e., there is a short proof that can be used to consistently blacklist the politician node for all honest politicians as well as honest committee members of this round. Hence, the committee can additionally sign the blacklisting of this malicious politician. With this observation, we compare our costs with the naive solution in both best case (that is all politicians behave honestly) and worst case (there exists a politician that lies either in spot check or exception list phase).

The communication cost of naive GSRead/GSUpdate is downloading $300,000$ challenge paths in a Merkle tree with $2^{30}$ leaves of size at most $10$ bytes. This amounts to $108$MB (modulo compression). In contrast, communication cost of our GSRead/Write in best case (with no exceptions) is $5.9$MB and in worst case is $32$MB (for citizens that are not fooled, i.e., all except 36 citizens). Hence, we are at least $18\times$ better in best case and $3\times$ better in worst case. Note that in the worst case, at least one malicious politician gets blacklisted.

Next, we compare the number of hash evaluations of both solutions. Verifying a challenge path requires computing $30$ hashes. Hence, for both read and update, naive solution requires $300,000 \cdot 30 \cdot 2 = 18$ million hashes to be computed. Our optimized protocol for read checks $4500$ and $5025$ challenge paths in the best case and worst case, respectively plus $2000$ hashes for computing bucket hashes. Hence, Read does $137,000$ hashes in best case and $152,750$ hashes in worst case. Our optimized protocol for global state update computes $2\deep-a$ hashes per key that has changed under a frontier node. Once it learns the correct value of all frontier nodes, it computes $2^a$ hashes to compute the root. We check $2637$ keys in best case ($123,925$ hashes) and $32,849$ keys ($1.5$ million hashes) in  worst case. Global state update addtionally needs $8192$ hashes to compute the root from the correct frontier nodes.
In total over both global state read and update, using our optimized protocols, we need $269,117$ hashes in best case and $1.7$ million hashes in worst case. Hence, we are $66\times$ and $10\times$ better in best and worst case, respectively.

\section{Committing blocks}
\label{sec:block-commit}
\subsection{Protocol Description}\label{app:completeprotocol}
In this section, we describe our protocol to commit new blocks to the blockchain. This protocol builds on the sub-protocols described in the previous section and a Byzantine agreement or consensus protocol for strings \cite{AlgoRand,micaliagreement,tc84}. Define a committee round $\commround$ to start when the first honest citizen that is a member of that committee downloads the Merkle root of global state corresponding to committee round $\commround-1$. The committee round $\commround$ ends when a new block gets signed and committed by a  threshold number, say $\goodthresh$, of committee members for $\commround$. We will fix $\goodthresh$ later such that $\badnummax +36 < \goodthresh \leq \goodnum^* -36$ (because $36$ good citizens can be fooled during global state read/write, Appendix~\ref{app:gs}). Here, recall that $\badnummax \leq 772$ is the upper bound on malicious citizens (\lemmaref{maxmaliciousnodes}) and $\goodnum^* \geq 1137$ is minimum number of good citizens in any committee (\lemmaref{goodsize}). In particular, we set $\goodthresh = 850$.
We present our the protocol for committing a block number $\commround$ formally in \algoref{block-commit}. This builds on the inutitive protocol description in \S~\ref{sec:block-main} and uses our optimized protocol for global state read/write and prioritized gossip to sync on transaction pool, i.e., \tpools.
We set $\rho = 45$, the number of \tpools\ considered in a block.
As mentioned earlier, for a block $N$, committee citizens ask a fix set of politicians for transaction pools and this set of politicians picked using the hash of the round number $N$ and hash of previous block, i.e.,  $\hash(\block_{N-1})$.
Moreover, we partition the set of pending transactions across these 45 politicians to avoid overlap of transactions given in \tpools\ of different politicians.
Each transaction has an identifier consisting of identity of the transaction originator and a nonce.
We use a hash on round number $N$ and originator identity to shard these transactions randomly across politicians. Note that using round number inside hash ensures that a transaction eventually gets assigned to an honest politician helping us achieve fairness (\lemmaref{fairness}).
Finally, we set $\Delta = 350$ ensuring $\badnummax + \Delta = 1122 < \goodnum^*$ used in arguing throughput (\lemmaref{throughput}).

\begin{algorithm*}
\caption{Block Commit}
\label{algo:block-commit}
\begin{algorithmic}[1]
\Require Committee members for block $\commround$ hold the authenticated $\GSRoot_{\commround-1}$ and $\hash(\block_{\commround-1})$ signed by the corresponding committee.
 Let $V = v_1, \ldots, v_\commnum$ be the committee for block $\commround$ with $V \supset U = u_1, \ldots, u_{\heavynum}$ as the proposers.  Let $T_1, \ldots, T_\rho$ be the politicians that would provide the transactions in $\commround$. 

\Ensure $\goodthresh$ number of citizens of block $\commround$ will  sign and ``commit'' a block (also containing $\hash(\block_{\commround-1})$) along with the updated root of the global state at the end of the protocol.

\begin{enumerate}
\item Download transaction pools and upload witness lists: Every committee members $v \in V$ does the following:
\begin{enumerate}
\item 
For all $i \in [\rho]$ download a signed \tpool\ (of appropriate size) $(\mem_i, \sigma_i)$ from $T_i$, where $\sigma_i$ is the signature on $(\hash(\mem_i), \mathsf{Rnd})$ by $T_i$. \label{first-down}
\item Pick a random subset $S\p{v}  \subset \sn$ of size $\fanout$ for the rest of the protocol. (The committee member talks to the same politicians for the whole protocol).
\item Create a witness list $\wl_v$ of length $\rho$ as follows: $\wl_v[i] = (\hash(\mem_i), \sigma_i)$ if it has $i^{th}$ \tpool, and $\wl_v[i] = \bot$ otherwise. We call $(\hash(\mem_i), \sigma_i)$ a {\em commitment}.
\item It uploads $(\wl_v, \sigma_v, \vrf_v)$ to all $S\p{v}$, where $\sigma_v$ is signature on $\hash(\wl_v)$ by citizen $v$.
\end{enumerate} 

\item Politicians gossip on these signed witness lists and VRFs using full broadcast. Within these witness lists, if a politician finds two different valid commitments by another politician, it constitutes a proof of malicious behavior by that politician that can be used to blacklist the malicious politician. %

\item Block Proposal by proposers and upload of transaction pools by citizens: 
A proposer $u \in U$ does the following:
\begin{enumerate}
\item \label{blah} Downloads signed witness lists and VRFs of all committee members from $S\p{u}$. 
\item Picks commitments that are non-null in  witness lists of at least $\badnummax+\Delta$ committee members.  \label{check-wit}
Denote these by $((\hash(\mem_{j_1}), \sigma_{j_1}), \ldots, (\hash(\mem_{j_{\rho}}), \sigma_{j_{\rho}}))$.
\item Upload the following to $S\p{u}$: i) $\vrf_u$ (a proof that it is a proposer for this round) 
ii) $\idlist_u = ((\hash(\mem_{j_1}), \sigma_{j_1}), \ldots, (\hash(\mem_{j_{\rho}}), \sigma_{j_{\rho}}))$ iii) Signature on $\hash( \idlist_u)$. We denote these by $\proposal_u$.

\end{enumerate}
In parallel, every  committee member $v$ picks $5$  indices $i_1,i_2, i_3, i_4, i_5 \in [\rho]$ at random s.t. for all $j \in [5]$, $\wl_v[i_j] \ne \bot$ and uploads all $(\mem_{i_j}, \sigma_{i_j})$ to a random politician. \label{first-up}%

\item Politicians sync on the block proposals, i.e., $\{\proposal_u\}_{u \in U}$ using full broadcast and sync on \tpools\ uploaded by last step using our prioritized gossip protocol (see \S~\ref{sec:forced-truth-gossip-main}).

\item Compute winner of block proposal and initial value in consensus. Every committee members $v \in V$ does the following:
\begin{enumerate}
\item Download block proposals, i.e., $\{\proposal_u\}_{u \in U}$ from $S\p{v}$. Winner $\win(v)$ is the one with least value of $\vrf_u$.
\item Try downloading \tpools\ that are missing after  Step~\ref{first-down} from $S\p{v}$.  If successful in downloading all \tpools\ committed in $\idlist_{\win(v)}$, set the initial value for consensus protocol $\initial_v = \proposal_{\win(v)}$, else set $\initial_v = \nullv$. \label{second-down}

\end{enumerate}

\item Consensus and second round of upload-download of \tpools\

\begin{enumerate}
\item Committee members run the consensus protocol~\cite{AlgoRand, micaliagreement, tc84}, where committee member $v$ enters consensus with value $\initial_v$,  to reach consensus on value $\out$.
\item In parallel, every committee member $v$ picks $10$ out of $\rho$ \tpools\ it has at random and uploads them to a random politician. \label{second-up}
\item Politicians sync on the \tpools\ uploads using our protocol for prioritized gossip.
\item Committee members try downloading the \tpools\ that are still missing after Step~\ref{second-down}.  \label{third-down}
\end{enumerate}

\end{enumerate}
\hspace{5in}~~~~~~Contd....
\algstore{myalg}
\end{algorithmic}
\end{algorithm*}

\begin{algorithm*}                     
\begin{algorithmic} %
\algrestore{myalg}
\Require Continuation of \algoref{block-commit}
\Ensure Citizens commit a block $\block$.
\begin{enumerate}  \setcounter{enumi}{6}

\item Block commit: There are following two cases:

Case 1: $\out = \nullv$: Every committee member $v \in V$ commits to an empty block $E$ by uploading $(\hash(E), r, \sigma_v)$ to $S\p{v}$, where $r = \GSRoot_{\commround-1}$ as the root of current global state merkle tree, $\sigma_v$ is the signature on $(\hash(E), r, \mathsf{Rnd})$. 

Case 2: $\out = \proposal_{u^*}= (\vrf_{u^*}, H_{u^*}, \idlist_{u^*}, \sigma_{u^*})$,  $u^* \in U$. Politicians create the block $\block_{u^*}$ using \tpools\ in $\idlist_{u^*}$ and delta merkle tree for the updated global state to be used for \algoref{update-gs}. Every committee members $v \in V$ does the following: 
\begin{enumerate}
\item \label{validate-gs} Create a valid block$^\dag$ $\block_{u^*}$ by following the same deterministic procedure over \tpools: It verifies the signatures on transactions, reads the values of keys being updated from the last committed global state using \algoref{read-gs}, orders transactions, and applies them on global state to create a block $\block_{u^*}$ of valid transactions. Block also includes $\hash(\block_{\commround-1})$ for cryptographic chaining property. 
\item \label{new-gs} Run \algoref{update-gs} to compute the root $r_v$ of the new global state merkle tree.
\item Upload $(\hash(\block_v), r_v, \sigma_v)$ to $S\p{v}$, where $\sigma_v$ is the signature on $(\hash(\block_v), r_v, \commround)$.%
\end{enumerate}

\item Final block commit: Politicians sync on the data uploaded in last step and accumulate signatures on same values. Final committed block and the global state is the one with at least $\goodthresh$ signatures of committee members for block $\commround$.
\end{enumerate}
\end{algorithmic}

\vspace{1em}
$^\dag$\footnotesize{We prove that if the consensus results in non-null, then all good citizens have the required mempools by this step.}
\end{algorithm*}

\subsection{Security Proofs}\label{sec:securityproofs}
We prove the following theorems about our protocol above.
All statements below hold except with negligible probability. Let the number of proposers in the committee be $\heavynum$.

\begin{theorem}[informal, \textbf{Safety}]
\label{theorem:safety}
After block $\commround$ gets committed, the blockchain, agreed upon by all honest citizens and politicians consists of a sequence of blocks and a authenticated global state s.t. the following holds.
\begin{enumerate}
\item This sequence of blocks $\block_1, \ldots, \block_\commround$ forms a ledger with a sequence of correct transactions.
\item Committed global state, say $\GSRoot_\commround$, contains the correct values for all the keys.
\end{enumerate}
\end{theorem}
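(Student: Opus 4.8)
The plan is to prove both claims simultaneously by induction on the block height $\commround$, using the block-commit protocol (\algoref{block-commit}) and the lemmas already established about committees and the sub-protocols. The base case is the genesis block $\block_0$ together with its initial global state root $\GSRoot_0$, which is fixed by the setup and trivially consistent. For the inductive step, assume that after $\block_{\commround-1}$ is committed, all honest citizens and politicians agree on a unique ledger prefix $\block_1,\ldots,\block_{\commround-1}$ with a correct transaction sequence and a correct authenticated global state root $\GSRoot_{\commround-1}$; in particular, by \corolref{get-ledger}, every good citizen in committee $C^\commround$ can acquire the correct local state for height $\commround-1$, hence the correct $\GSRoot_{\commround-1}$ and $\hash(\block_{\commround-1})$, which are the preconditions of \algoref{block-commit} for round $\commround$.

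The core of the argument is to show that the round-$\commround$ protocol produces a \emph{unique} committed pair $(\hash(\block_\commround), \GSRoot_\commround)$ that extends the prefix. First I would invoke the properties of the string-consensus protocol: by \corolref{committee-bounds}, every committee has more than $2/3$ good citizens and fewer than $1/3$ bad ones, so the Byzantine agreement of \cite{AlgoRand,micaliagreement,tc84} guarantees that all good citizens output the same value $\out$ — either $\nullv$ or a single valid proposal $\proposal_{u^*}$. If $\out=\nullv$, the committed block is the empty block $E$ with $\GSRoot_\commround=\GSRoot_{\commround-1}$, and uniqueness is immediate since all good citizens (at least $\goodnum^*-36 \ge \goodthresh = 850$ of them, after discounting the $\le 36$ possibly fooled in global-state read/write per \lemmaref{readincorrectkeys} and \lemmaref{incorrect-update}) sign the same $(\hash(E), \GSRoot_{\commround-1}, \commround)$, and no other value can gather $\goodthresh$ signatures because $\badnummax \le 772 < \goodthresh$. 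If $\out=\proposal_{u^*}$, then by \lemmaref{honest-proposer} and \lemmaref{malicious-proposer} all good citizens can download the \tpools\ committed in $\idlist_{u^*}$; hence every good citizen reconstructs the \emph{same} block $\block_{u^*}$ by the deterministic validation procedure in Step~\ref{validate-gs} (verifying transaction signatures and nonces, reading key values via \algoref{read-gs}, ordering, and applying transactions, rejecting semantically invalid ones), so claim~(1) — a correct transaction sequence — holds for $\block_\commround$. By Corollary~\ref{corr:incorrect-keys} and \lemmaref{updatekeyspotcheck}, all but $\le 36$ good citizens read the correct old values and compute the correct new Merkle root via \algoref{update-gs}, so at least $\goodnum^*-36 \ge \goodthresh$ good citizens sign the same $(\hash(\block_\commround), \GSRoot_\commround, \commround)$; this is claim~(2), and by collision resistance of the hash function no conflicting block can collect $\goodthresh$ valid committee signatures. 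Agreement among honest citizens and politicians follows because the committed block is exactly the one carrying $\goodthresh$ committee signatures, and such a certificate is unique \ewnp.

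The main obstacle I anticipate is the circular dependency between \algoref{getLedger} (\lemmaref{jump} explicitly cites Theorem~\ref{theorem:safety}) and this theorem: Safety at height $\commround$ is used to argue that committee membership for $\block_\commround$ is well-defined and that good citizens agree on $\GSRoot_{\commround-1}$, while \getLedger's correctness is in turn invoked inside the inductive step. Resolving this requires carefully threading the induction so that at step $\commround$ we only use \getLedger/\corolref{get-ledger} for heights $\le \commround-1$ — which are already covered by the inductive hypothesis — and never appeal to Safety at height $\commround$ before it is established. A second subtlety is accounting precisely for the up-to-$36$ good citizens who may be fooled during global-state read/write: I would make the bookkeeping explicit by defining \emph{effective good citizens} (good citizens who additionally read and wrote the global state correctly), noting $|\text{effective good}| \ge \goodnum^* - 36 \ge 1101 > 850 = \goodthresh > 772 \ge \badnummax$, so the threshold $\goodthresh$ is simultaneously reachable by honest behaviour and unreachable by the adversary alone, which is exactly what pins down uniqueness of the committed block.
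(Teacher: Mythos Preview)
Your proposal is correct and follows essentially the same route as the paper: induction on the block height, combined with the threshold arithmetic that $\goodnum^*-36 \geq 1101 > \goodthresh = 850$ (so the honest block reaches the commit threshold) and that the adversary cannot reach $\goodthresh$ on any competing block. The paper simply packages your case split on $\out=\nullv$ versus $\out=\proposal_{u^*}$ and the appeals to \lemmaref{honest-proposer}, \lemmaref{malicious-proposer}, \lemmaref{readincorrectkeys}, \lemmaref{incorrect-update} into a single Block-Consensus lemma (\lemmaref{blockconsensus}), and then the Safety proof is a three-line induction on top of that; you have inlined this lemma, which is fine.

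One small bookkeeping point: in the non-null branch your uniqueness argument uses $\badnummax \leq 772 < \goodthresh$, but the up-to-$36$ fooled good citizens may in the worst case sign the \emph{same} incorrect pair as the malicious citizens, so the correct adversarial bound is $\badnummax + 36 \leq 808 < 850$, as the paper writes. Your observation about the apparent circularity between \lemmaref{jump} and Theorem~\ref{theorem:safety} is well taken and your resolution (use \texttt{getLedger} only for heights $\leq \commround-1$, already covered by the inductive hypothesis) is exactly the intended reading.
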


\begin{proof}
Note that for a block to be valid, its hash needs to be signed by $\goodthresh$ number of committee members.  Lemma \ref{lemma:blockconsensus} shows that at the end of each round a block and global state consistent with existing blockchain is signed by all good citizens except $36$. First, note that $\goodnum - 36 \geq 1101 > \goodthresh = 850$ meets the thrshold needed for committed a block. Moreover, no other alternate block can satisfy the threshold because $\badnummax + 36 \leq 808 < \goodthresh = 850$.
Theorem follows from a  simple induction argument.
\end{proof}

\begin{theorem}[informal, \textbf{Liveness}]\label{theorem:liveness}
An adversary colluding with malicious politicians and citizens cannot indefinitely stall the system. Moreover, in any block-commit round, the probability that an empty block is committed is at most $\frac{1}{3}+0.017 < 0.35$.

\end{theorem}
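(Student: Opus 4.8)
\textbf{Proof proposal (Liveness, Theorem~\ref{theorem:liveness}).}
The plan is to establish the two assertions separately: first that every block-commit round of \algoref{block-commit} necessarily terminates with a committed block (so the adversary cannot stall the system), and then that the committed block is non-empty except with probability at most $\tfrac13 + 0.017$. For the termination part I would fix a round $\commround$ and argue that every step of \algoref{block-commit} completes for every good citizen: by Corollary~\ref{corr:get-ledger} a good committee member for $\commround$ learns the authenticated $\GSRoot_{\commround-1}$ and $\hash(\block_{\commround-1})$, so the round starts; every subsequent download/upload is performed against a safe sample $S\p{v}$ (or a designated set $T_1,\dots,T_\rho$ together with a safe sample), which contains an honest politician except with negligible probability, and honest politicians answer truthfully while malicious ones that serve provably-bad data are blacklistable, so each good citizen obtains all data it needs — witness lists, proposals and VRFs, the \tpools\ that passed the Step~\ref{check-wit} test, and, via \algoref{read-gs} and \algoref{update-gs}, the global-state values and the new root; and the string-consensus sub-protocol of \cite{AlgoRand,micaliagreement,tc84} terminates in expected constant rounds regardless of adversarial behaviour. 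Hence at the end of the round every good citizen signs either a non-null block $\block_{u^*}$ (with its new global-state root) or the empty block $E$, and by \lemmaref{blockconsensus} all but at most $36$ good citizens sign the \emph{same} value. Since $\goodnum \geq \goodnum^* \geq 1137$ (\lemmaref{goodsize}), at least $\goodnum^* - 36 = 1101 > \goodthresh = 850$ matching committee signatures accumulate, a block is committed, and round $\commround+1$ begins; the protocol therefore cannot be stalled indefinitely.

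For the empty-block probability, let $W$ be the winning proposer, i.e.\ the committee member whose proposer-VRF is smallest among all proposers (if no committee member is a proposer — an event of negligible probability since the expected number of proposers is $\heavynum$ — an empty block is produced and this negligible term is folded into the slack). As argued in \appref{comm-sel}, the proposer-VRF is uniformly random and independent of the adversary's corruption choices up to round $\commround-1$, so by the symmetry of the sortition (exactly as in Algorand) $W$ is, conditioned on the committee, a uniformly random committee member; hence $\Pr[W\text{ bad}] = \mathbb{E}[\badnum/\commnum] < \tfrac13$ using $\badnum < \commnum/3$ from Corollary~\ref{corr:committee-bounds}. It then suffices to bound $\Pr[\text{empty}\mid W\text{ good}]$. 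When $W$ is good I would first note its proposal is non-null: at least one of $T_1,\dots,T_\rho$ is honest except with probability $\binom{\corrsn\snnum}{\rho}/\binom{\snnum}{\rho}$ (negligible), and an honest $T_i$'s \tpool\ is downloaded by every good citizen, so its commitment appears in the witness lists of at least $\goodnum^* = 1137 > \badnummax + \Delta = 1122$ committee members (using $\badnummax \leq 772$ from \lemmaref{maxmaliciousnodes} and $\Delta = 350$) and passes Step~\ref{check-wit}; thus $\idlist_W$ is non-null and each of its \tpools\ is held by at least $\Delta = 350$ good citizens. Now \lemmaref{honest-proposer} applies: with a good winning proposer whose commitments are witnessed by at least $\badnummax+\Delta$ committee members, the re-upload steps (Steps~\ref{first-up} and \ref{second-up}) together with prioritized gossip (\S~\ref{sec:forced-truth-gossip-main}) cause every good citizen to download all \tpools\ of $\idlist_W$ and therefore to enter consensus with $\initial_v = \proposal_W$, so string-consensus outputs this non-null value except with the constant failure probability of \lemmaref{honest-proposer}, which is at most $0.017$. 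Combining, $\Pr[\text{empty block}] \leq \Pr[W\text{ bad}] + \Pr[W\text{ good}]\cdot\Pr[\text{empty}\mid W\text{ good}] < \tfrac13 + 0.017 < 0.35$, up to negligible additive terms, which is the claimed bound.

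The main obstacle is the bound $\Pr[\text{empty}\mid W\text{ good}] \leq 0.017$, i.e.\ the statement of \lemmaref{honest-proposer}; everything else is bookkeeping with the committee-size estimates of \appref{comm-sel} and the safe-sample guarantee. The delicate point inside that argument is the re-upload analysis: each of the $\geq \Delta = 350$ honest holders of a given \tpool\ independently re-uploads a random subset of its \tpools\ to a random politician, so that \tpool\ fails to reach any honest politician only with probability roughly $(1-\Theta(1))^{\Delta}$, and a union bound over the $\leq\rho$ \tpools\ must be kept below $0.017$; one also has to check that the second re-upload (Step~\ref{second-up}), which is really there for the malicious-proposer fallback (\lemmaref{malicious-proposer}), does not interfere with this estimate.
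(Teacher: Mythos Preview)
Your proposal is correct and follows essentially the same approach as the paper: termination is argued via \lemmaref{blockconsensus} together with $\goodnum^* - 36 \geq \goodthresh$, and the empty-block bound is obtained by the same case split on whether the winning proposer is bad (probability $<\tfrac13$ via $\badnum/\commnum<\tfrac13$) or good (failure probability $\leq 0.017$ from \lemmaref{honest-proposer}). You supply more detail than the paper does---notably the symmetry argument for why $W$ is uniform over the committee and the check that an honest winner's $\idlist_W$ is non-null---but the decomposition and the key lemmas invoked are the same.
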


\begin{proof}
From the safety property, it is clear that at the beginning of a round, all good citizens have a consistent view of the blockchain and global state. Next, from the design of the block-commit protocol, it is easy to see that since all good citizens talk to at least one honest politician, they will make progress and enter the consensus protocol. Moreover, \lemmaref{blockconsensus} shows that in all cases, at the end of a round, a block consistent with the existing blockchain will be signed by all good citizens (except at most $36$). This meets the threshold of block commitment, i.e., $\goodthresh = 850$. Hence, an adversary cannot stall the system indefinitely. Next, to argue liveness/progress, we need to prove that not all blocks can be empty blocks.
An empty block could be committed in one of the following cases: a) The proposer was a malicious citizen; b) The proposer was a good citizen, but all good citizens did not start the  consensus protocol with the proposal of the winner (the exception case of Lemma \ref{lemma:honest-proposer}). The former happens with probability at most $\frac{1}{3}$ because in any committee $\badnum < \commnum/3$. And, the latter probability is bounded by $0.017$ as shown in Lemma \ref{lemma:honest-proposer}. Hence the lemma follows.

\end{proof}

We also prove bounds on throughput Lemma~\ref{lemma:throughput} and fairness Lemma~\ref{lemma:fairness}.  First, we prove lemmas needed to prove safety and liveness.

\paragraph{Lemmas needed to prove the above theorems.}
Here, we will work with concrete parameters for our committee. From Lemma \ref{lemma:commsize}, we know that $1700 \leq \commnum \leq 2300$.
From Lemma \ref{lemma:gap}, we have $\gap \geq 1$, that is, for any committee $\commnum > 3\badnum +1$ and condition in Property~\ref{prop:consensus} is satisfied.
Also, from \lemmaref{goodsize}, $\goodnum \geq 1137$ and from \lemmaref{maxmaliciousnodes}, $\badnum \leq \badnummax \leq 772$.

We will use the following property of the consensus protocol.

\begin{property}[\cite{AlgoRand, micaliagreement, tc84}]
\label{prop:consensus}
If the consensus protocol between $\commnum$ parties s.t. $\commnum \geq 3t+1$ results in all honest parties outputting a non-null value, then at least $\lceil(\commnum-1)/3\rceil$ honest parties entered the consensus protocol with that initial value. Here, $t$ denotes number of malicious parties.
\end{property}

In the following two lemma, we will use the above property of the consensus protocol to analyse our block commit protocol separately for the case when the winning proposer is honest or malicious.

\begin{lemma}[Honest Good Proposer]
\label{lemma:honest-proposer}
If a good citizen is the winning proposer, then, except with probability $0.017$, all good citizens will enter the consensus protocol with the proposal of the winner and output the proposal of the winner as output of the consensus protocol.
\end{lemma}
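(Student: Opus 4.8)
The plan is to prove the statement in two stages. Stage~1: show that, except with probability at most $0.017$, every good citizen $v$ enters the consensus protocol (Step~6 of \algoref{block-commit}) with initial value $\initial_v=\proposal_w$, where $w$ denotes the good winning proposer. Stage~2: conclude that the consensus output equals $\proposal_w$. Stage~2 is the easy part and needs nothing probabilistic: the committee has $\badnum\le\badnummax$ bad members, and by Corollary~\ref{corr:committee-bounds} $\badnum<\commnum/3$, so the (multi-valued) Byzantine agreement of~\cite{AlgoRand,micaliagreement,tc84} runs with fewer than $\commnum/3$ faulty parties; treating the bad citizens as the faulty set, all non-faulty parties --- the good citizens --- hold the same input $\proposal_w$ on the event of Stage~1, so the validity property of Byzantine agreement forces the output to be $\proposal_w$ (and, since all honest inputs agree, termination occurs in the usual few rounds). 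So the real work is Stage~1.

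For Stage~1 I would first argue that all good citizens agree on the winner. As $w$ is good, it speaks with at least one honest politician when it uploads $\proposal_w$; that politician disseminates $\proposal_w$ to every honest politician via the full broadcast of proposals in Step~4, so every good citizen $v$ (whose sample $S\p{v}$ contains an honest politician) downloads $\proposal_w$ and $\vrf_w$ in Step~5a. Since $w$ is the winning proposer, $\vrf_w$ is the globally smallest proposer-VRF among all eligible proposers; VRF values carry verifiable proofs and are distinct except with negligible probability, and invalid ones are rejected; hence no good citizen ever sees a valid proposal carrying a smaller VRF, and every good citizen sets $\win(v)=w$.

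It then remains to bound the probability that some good citizen cannot download all transaction pools committed in $\idlist_w$. As $w$ is honest it follows Step~\ref{check-wit}, so each commitment it places in $\idlist_w$ is marked non-$\bot$ in the witness lists of at least $\badnummax+\Delta$ committee members; at most $\badnummax$ of these are bad, so at least $\Delta=350$ are good, and --- being honest --- such a citizen's witness list is truthful and it genuinely holds the corresponding pool $\mem$. In the parallel re-upload of Step~\ref{first-up}, each such good holder uploads $5$ (or all, if it holds fewer than $5$) of its pools to a uniformly random politician; these two choices are independent, so it re-uploads $\mem$ to an honest politician with probability at least $\tfrac{5}{\rho}\cdot(1-\corrsn)=\tfrac{5}{45}\cdot\tfrac{1}{5}=\tfrac{1}{45}$. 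Over the $\ge\Delta=350$ good holders, whose choices are independent, the probability that $\mem$ reaches no honest politician is at most $(1-\tfrac{1}{45})^{350}$, and a union bound over the $\le\rho=45$ committed pools gives total failure probability at most $\rho\,(1-\tfrac{1}{45})^{350}\le 0.017$. On the complementary event every pool of $\idlist_w$ sits at some honest politician, so the prioritized-gossip guarantee (together with the Step~4 sync) places every such pool at every honest politician; hence in Step~\ref{second-down} every good citizen downloads from its honest politician every pool committed in $\idlist_{\win(v)}=\idlist_w$ and sets $\initial_v=\proposal_w$, which is exactly Stage~1. The most delicate point --- and the one I would watch carefully --- is this last probabilistic step: it is essential that ``good'' includes honesty (so a good citizen's witness list cannot overstate what it holds), that there are at least $\Delta$ \emph{distinct} good holders making \emph{independent} re-upload choices, and that the union bound runs over only the $\le\rho$ pools actually placed in $\idlist_w$.
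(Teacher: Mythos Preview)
Your proposal is correct and follows essentially the same approach as the paper: the core probabilistic step bounds the re-upload failure by $\rho\bigl(1-\tfrac{5(1-\corrsn)}{\rho}\bigr)^{\Delta}=45\cdot(44/45)^{350}\approx 0.017$, which is algebraically identical to the paper's expression $\rho\bigl(\tfrac{\rho-5}{\rho}+\tfrac{5\corrsn}{\rho}\bigr)^{\Delta}$. Your explicit argument that every good citizen sets $\win(v)=w$ (via gossip of $\proposal_w$ through an honest politician) and your invocation of the validity property of Byzantine agreement for Stage~2 are details the paper's proof leaves implicit, but they are entirely in line with it and make your write-up slightly more complete.
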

\begin{proof}
In the case of a proposer who is a good citizen, since any \tpool\  is picked in the proposal iff it is present in the witness lists of $\badnummax+\Delta$ citizens, it is definitely the case that at least $\Delta$ good citizens have this \tpool\ after Step~\ref{first-down}. Now, in Step~\ref{first-up}, the probability that \tpool\ is not re-uploaded by any good citizen to any honest politician is bounded by
$\left(\frac{\rho-5}{\rho}+\frac{5\corrsn}{\rho}\right)^\Delta$. Applying a union bound over all $\rho$ \tpools, the probability that there exists such a \tpool\ is bounded by $\rho\left(\frac{\rho-5}{\rho}+\frac{5\corrsn}{\rho}\right)^\Delta$. When we set $\Delta = 350$, and $\rho = 45$, this probability is $\leq  0.017$.
Now, from the correctness of our priorotized gossip protocol, it follows that if a \tpool\ is uploaded by a good citizen, it is learnt by all honest politicians and hence, by all good citizens during the download in Step~\ref{second-down} of \algoref{block-commit}. Hence, all good citizens will enter consensus with this proposed block. The lemma then follows from the correctness of the consensus algorithm.
\end{proof}

\begin{lemma}[Malicious Proposer]
\label{lemma:malicious-proposer}
If a malicious citizen is the winning proposer, and consensus results in non-null, then all good citizens will be able to download the \tpools\ committed in the proposal, except with negligible probability. %
\end{lemma}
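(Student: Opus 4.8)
\medskip
\noindent\textbf{Proof sketch.}
The plan is to combine the guarantee of the string consensus protocol (Property~\ref{prop:consensus}) with the second round of \tpool\ re-uploads (Step~\ref{second-up} of \algoref{block-commit}) and the correctness of prioritized gossip. Note first that, unlike the honest-proposer case, the malicious winning proposer $u^*$ need not respect the witness-list threshold, so the first re-upload (Step~\ref{first-up}) provides no guarantee here; instead we exploit the fact that the consensus output equals $\proposal_{u^*}$. Let $\idlist_{u^*}$ commit to the $\rho$ transaction pools $\mem_{j_1}, \ldots, \mem_{j_\rho}$. For the consensus instance, a bad citizen (malicious, or honest but talking only to corrupt politicians) may have its messages dropped or altered, so the ``honest parties'' of the consensus sub-protocol are exactly the good citizens and the number of adversarial parties is $\badnum$; by Corollary~\ref{corr:committee-bounds} (equivalently \lemmaref{gap}) we have $\commnum \geq 3\badnum + 1$, so Property~\ref{prop:consensus} applies. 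By agreement, all good citizens output the same non-null value $\proposal_{u^*}$, hence at least $q := \lceil(\commnum-1)/3\rceil \geq 566$ good citizens entered consensus with $\initial_v = \proposal_{u^*}$; call this set $Q$. By the guard of Step~\ref{second-down}, every $v \in Q$ had successfully downloaded all of $\mem_{j_1}, \ldots, \mem_{j_\rho}$ before entering consensus, and still holds them when it performs the second re-upload.

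Next I would fix a single pool $\mem := \mem_{j_\ell}$, $\ell \in [\rho]$, and bound the probability it fails to reach an honest politician in Step~\ref{second-up}. There, each $v \in Q$ independently picks $10$ of its $\rho$ pools uniformly at random and uploads all of them to a uniformly random politician; the two choices are independent, so $v$ sends $\mem$ to an honest politician with probability exactly $\tfrac{10}{\rho}\,(1-\corrsn)$. As these events are independent over $v \in Q$, the probability that $\mem$ reaches no honest politician is at most $\bigl(1 - \tfrac{10}{\rho}(1-\corrsn)\bigr)^{|Q|} \leq \bigl(1 - \tfrac{10}{45}\cdot\tfrac15\bigr)^{566}$, which is negligible in $\secparam$; a union bound over the $\rho = 45$ pools keeps the total failure probability negligible. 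Hence, except with negligible probability, each pool committed in $\idlist_{u^*}$ is held by at least one honest politician after Step~\ref{second-up}.

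Finally I would push these pools to \emph{all} good citizens. By the correctness of prioritized gossip (\S~\ref{sec:forced-truth-gossip-main}) --- a pool reaching one honest politician is eventually held by every honest politician --- all honest politicians hold $\mem_{j_1}, \ldots, \mem_{j_\rho}$ after the gossip following Step~\ref{second-up}. Then in Step~\ref{third-down} every good citizen $v$ queries its safe sample $S\p{v}$, which by Definition~\ref{defn:good-citizen} contains at least one honest politician; that politician serves all $\rho$ pools, so $v$ downloads every pool committed in the proposal, which is exactly the claim.

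The main obstacle is the first paragraph: one must argue that at least $\lceil(\commnum-1)/3\rceil$ \emph{good} (not merely honest) citizens enter consensus already holding \emph{all} $\rho$ pools. This needs the right accounting of who is adversarial for the consensus sub-protocol together with the committee inequality $\commnum \geq 3\badnum + 1$ from \lemmaref{gap}; once the bound $|Q| \geq 566$ is in hand, the tail estimate of the second paragraph is routine, though its constants should be checked so that $\rho\,\bigl(1-\tfrac{2}{45}\bigr)^{566} < 2^{-\secparam}$.
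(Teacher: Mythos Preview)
Your proposal is correct and follows essentially the same route as the paper: invoke Property~\ref{prop:consensus} (with the good citizens playing the role of honest parties, licensed by $\commnum \geq 3\badnum+1$ from \lemmaref{gap}) to get $|Q| \geq \lceil(\commnum-1)/3\rceil \geq 566$ good citizens holding all committed \tpools, bound the per-pool failure probability of Step~\ref{second-up} by $\bigl(1-\tfrac{10}{\rho}(1-\corrsn)\bigr)^{|Q|}$ (the paper writes this equivalently as $\bigl(\tfrac{\rho-10}{\rho}+\tfrac{10\corrsn}{\rho}\bigr)^{|Q|}$), union-bound over the $\rho$ pools, and then use prioritized gossip plus the honest politician in each good citizen's $S\p{v}$ for Step~\ref{third-down}. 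Your explicit justification that the ``honest parties'' of the consensus sub-protocol are exactly the good citizens is a detail the paper leaves implicit but is the right way to apply Property~\ref{prop:consensus} here.
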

\begin{proof}
By Property \ref{prop:consensus}, we know that if the consensus results in non-null, then at least $\lceil(\commnum-1)/3\rceil$ number of good citizens started the consensus with that initial value and had all the required \tpools. With this, for step~\ref{second-up}, the probability that a \tpool\ is not re-uploaded by any good citizen to an honest politician is bounded by $\left(\frac{\rho-10}{\rho}+\frac{10\corrsn}{\rho}\right)^{\lceil(\commnum-1)/3\rceil}$ and the probability that there exists such a \tpool\ among the all of $\rho$ \tpools\ is bounded by $\rho\left(\frac{\rho-10}{\rho}+\frac{10\corrsn}{\rho}\right)^{\lceil(\commnum-1)/3\rceil}$. Since $\commnum\geq 1700$, we have $\lceil(\commnum-1)/3\rceil \geq 566$, and hence, for $\rho = 45$ we have that this probability is negligible. Once the politicians gossip the \tpools, all honest politicians will have the \tpools\ and hence, every good citizen can download all \tpools\ in Step~\ref{third-down} of \algoref{block-commit}.

\end{proof}

\begin{lemma}[Block Consensus]\label{lemma:blockconsensus}
Except with negligible probability, (i)
at the end of the block commit protocol, all good citizens except $36$
sign the same block hash and new global state root and (ii) moreover, this block is consistent with the entire block chain and global state.
\end{lemma}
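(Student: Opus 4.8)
The plan is to case-split on the output of the consensus protocol and, in each case, track which good citizens end up signing the correct block hash and global state root. Recall from Corollary~\ref{corr:get-ledger} (via \texttt{getLedger}) that every good citizen for round $\commround$ enters the round with the correct $\GSRoot_{\commround-1}$ and $\hash(\block_{\commround-1})$, and from Corollary~\ref{corr:committee-bounds} that $\goodnum > 2\commnum/3$ and $\badnum < \commnum/3$, so the consensus protocol (Property~\ref{prop:consensus}) is run with a valid honest supermajority. I would first argue that all good citizens feed consistent, correct inputs into the \emph{structural} part (the transaction set from the $\rho$ frozen \tpools) and then handle the two remaining error sources --- fooling during global-state read (\algoref{read-gs}) and global-state write (\algoref{update-gs}) --- by simply subtracting them off.

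\textbf{Case 1: $\out = \nullv$.} Here every good citizen commits the empty block $E$ with $r = \GSRoot_{\commround-1}$, which by Corollary~\ref{corr:get-ledger} is the correct current root; since the transaction set is empty, $\block_{u^*}$ plays no role and all $\goodnum$ good citizens sign identical $(\hash(E), \GSRoot_{\commround-1}, \commround)$. This block is trivially consistent with the chain (it chains $\hash(\block_{\commround-1})$ and does not change the global state). So here the exceptional set is in fact empty, comfortably within the claimed $36$.

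\textbf{Case 2: $\out = \proposal_{u^*}$ for some proposer $u^*$.} I would invoke Lemma~\ref{lemma:honest-proposer} and Lemma~\ref{lemma:malicious-proposer} to argue that, whether the winning proposer is good or malicious, every good citizen can download all \tpools\ referenced in $\idlist_{u^*}$ (in the honest case except with probability $0.017$, which still leaves this event holding w.h.p. overall since we condition on the run reaching Step~\ref{validate-gs}; in the malicious case except with negligible probability). Given the \tpools, the block-construction procedure in Step~\ref{validate-gs} is \emph{deterministic} --- fixed signature checks, a fixed ordering rule, a fixed application order --- so all good citizens that correctly read the old values of the referenced keys produce bit-identical $\block_{u^*}$ and bit-identical set of valid transactions. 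By Lemma~\ref{lemma:readincorrectkeys}, at most $n_e < 18$ good citizens read an incorrect global-state value; all remaining good citizens agree on $\hash(\block_{u^*})$. Next, in Step~\ref{new-gs} they run \algoref{update-gs}; by Lemma~\ref{lemma:incorrect-update}, at most $n_{\textrm{ef}} < 18$ further good citizens compute an incorrect new root $r_v$, and all the others compute the unique correct $\GSRoot_\commround$. Union-bounding, all good citizens except at most $n_e + n_{\textrm{ef}} < 36$ sign the identical triple $(\hash(\block_{u^*}), \GSRoot_\commround, \commround)$, proving (i). For (ii), the constructed block contains $\hash(\block_{\commround-1})$ by Step~\ref{validate-gs} (cryptographic chaining) and its valid transactions were applied to the correct old global state, so $\GSRoot_\commround$ holds the correct updated values; hence the new block and root extend the existing blockchain consistently.

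\textbf{Main obstacle.} The delicate point is bounding the \emph{total} exceptional set at $36$ rather than double-counting: a good citizen could in principle be fooled during read \emph{and} during write, or be among the $0.017$-probability failures of Lemma~\ref{lemma:honest-proposer}. I would handle this by noting that the $0.017$ event of Lemma~\ref{lemma:honest-proposer} is a \emph{global} bad event (if it fails, consensus simply yields $\nullv$, handled by Case~1, so it does not contribute citizens to the exceptional set in Case~2), while the read-fooling and write-fooling sets are each bounded by $18$ \emph{with negligible failure probability} by Lemmas~\ref{lemma:readincorrectkeys} and~\ref{lemma:incorrect-update}; their union has size $< 36$ deterministically once both high-probability bounds hold, and a final union bound over all the negligible-probability failure events (plus the collision-resistance assumption used for block-hash uniqueness and for the \tpool\ commitments being unforgeable) keeps the overall failure probability negligible. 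I would also note that the choice $\goodthresh = 850$ and the bounds $\goodnum^* \geq 1137$, $\badnummax \leq 772$ are exactly what make $\goodnum - 36 \geq 1101 > \goodthresh$ and $\badnummax + 36 \leq 808 < \goodthresh$, which is how this lemma feeds into the safety and liveness theorems.
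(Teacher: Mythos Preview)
Your proposal is correct and follows essentially the same approach as the paper: invoke Lemmas~\ref{lemma:honest-proposer} and~\ref{lemma:malicious-proposer} to get all good citizens the \tpools, use determinism of block construction, then subtract the at-most-$18$ read-fooled (Lemma~\ref{lemma:readincorrectkeys}) and at-most-$18$ write-fooled (Lemma~\ref{lemma:incorrect-update}) good citizens. The paper case-splits on whether the winning proposer is good or malicious, whereas you split on whether $\out = \nullv$ or not; these cover the same ground and your decomposition is arguably cleaner since it mirrors the protocol's branching in Step~7.

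One imprecision worth fixing: in your ``Main obstacle'' you claim that when the $0.017$ event of Lemma~\ref{lemma:honest-proposer} occurs, ``consensus simply yields $\nullv$''. That is not guaranteed --- if the first re-upload fails to seed all \tpools\ at an honest politician, \emph{some} good citizens may still have all \tpools\ and enter consensus with the proposal, and consensus can still output the proposal. The paper handles exactly this by noting that this sub-case ``is identical to malicious proposer'': once $\out \neq \nullv$, Property~\ref{prop:consensus} and the second re-upload (the argument of Lemma~\ref{lemma:malicious-proposer}, whose proof nowhere uses that the proposer is malicious) guarantee all good citizens obtain the \tpools\ with negligible failure, regardless of who proposed. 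So in your Case~2 you should simply invoke the Lemma~\ref{lemma:malicious-proposer} argument uniformly rather than rely on Lemma~\ref{lemma:honest-proposer}, which is really a liveness tool. With that adjustment your proof matches the paper's.
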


\begin{proof}
We consider the following two cases:
\begin{itemize}

\item (Good citizen is winner of block proposal): From Lemma \ref{lemma:honest-proposer}, all good citizens hold all the required \tpools\ and agree on honest proposer's block except with probability $0.017$. When this happens, all good citizens and honest politicians construct the same block.
Moreover, by \lemmaref{readincorrectkeys} and \lemmaref{incorrect-update}, we know that at least $\goodnum - 36 \geq 1101$ good citizens are able to do global state read/write operations correctly and compute correct new global state root. Finally, good citizens follow a deterministic procedure to create a valid block.
Hence, all good citizens except $36$  will sign the same block hash and new global state root.

With probability $0.017$,  all good citizens do not enter the consensus with proposer's block. This case is identical to malicious proposer discussed below.

\item (Malicious citizen is winner of block proposal): In this case, if the consensus results in an empty block, then all good citizens ($\goodnum \geq 1137$) will sign the empty block and previous global state root.
If the consensus results in non-null, then from Lemma \ref{lemma:malicious-proposer}, all good citizens will be able to download the \tpools\ committed in the proposal (except with negligible probability) in Step~\ref{third-down}. Hence, using same arguments as above, all good citizens except $36$ will will compute the same block hash and new global state root and sign this. This number is clearly more than $\goodthresh$.

\end{itemize}

To prove consistency of the block and global state with the entire block chain, observe that at most $36$  good citizens could have obtained either an incorrect $(\key,\val)$ pair or an incorrect frontier node $\fnodes$ (Lemmas \ref{lemma:readincorrectkeys} and \ref{lemma:incorrect-update}). All other  good citizens %
download the correct global state, verify all transactions honestly and only sign blocks with valid transactions that are consistent with the global state. They also update the global state honestly.

\end{proof}

\begin{lemma}[Lower bound on throughput]
\label{lemma:throughput}
The system does not stall in the presence of adversarial behavior and in expectation, committed blocks will have at least $0.65(1-\corrsn)\rho$ \tpools\ in them.
\end{lemma}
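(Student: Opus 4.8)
The plan splits into the two halves of the statement. The ``does not stall'' half is immediate from the liveness guarantee already established in Theorem~\ref{theorem:liveness}, so the real content is the expected-throughput bound. I would introduce the random variable $X$ = number of \tpools\ whose transactions appear in the committed block of a given round, and bound $\mathbb{E}[X]$ from below by conditioning on two favourable events: that the winning proposer is a good citizen, and, given that, that the consensus protocol terminates on the winner's proposal. On the complementary events I simply use $X \geq 0$.

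The combinatorial core is to show that a good proposer's block contains \emph{every} honest designated politician's \tpool. By the protocol (Step~\ref{first-down}) each committee member downloads $(\mem_i,\sigma_i)$ directly from $T_i$, so an honest $T_i$ serves the correct \tpool\ to every honest citizen; hence all $\goodnum \geq \goodnum^* = 1137$ good citizens carry that commitment as non-null in their witness lists. Each good citizen's witness list reaches an honest politician (its safe sample contains one) and is then full-broadcast in Step~2, so every honest politician, and therefore a good proposer querying its own safe sample, sees $\geq 1137$ votes for that commitment. Since $1137 > \badnummax + \Delta = 772 + 350 = 1122$ (Lemma~\ref{lemma:maxmaliciousnodes}), the threshold test of Step~\ref{check-wit} is passed and a good proposer includes the commitment. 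Thus the proposal of a good proposer — and, if it survives consensus, the committed block — contains the \tpools\ of all honest politicians among the $\rho$ designated ones.

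For the probabilities: the $\rho$ designated politicians are selected by a hash of the round number and $\hash(\block_{N-1})$, so (over the random oracle, and independently of the proposer-selection VRFs) the expected number of honest ones is at least $(1-\corrsn)\rho$. By Corollary~\ref{corr:committee-bounds}, good citizens form more than a $2/3$ fraction of every committee, and the winning proposer is, up to the negligible event that the proposer set is empty, a uniformly random committee member, so it is good with probability at least $2/3$. Conditioned on a good proposer, Lemma~\ref{lemma:honest-proposer} gives that with probability $\geq 1 - 0.017$ the consensus output equals that proposal. Combining,
\[
\mathbb{E}[X] \;\geq\; \Pr[\text{good proposer}]\cdot(1-0.017)\cdot(1-\corrsn)\rho \;\geq\; \tfrac{2}{3}\cdot 0.983 \cdot (1-\corrsn)\rho \;>\; 0.65\,(1-\corrsn)\rho,
\]
where the coefficient $\tfrac{2}{3}\cdot 0.983 \approx 0.655$ leaves enough slack to absorb the negligible corrections.

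The step I expect to be the main obstacle is the second paragraph's bookkeeping: pinning down precisely that an honest designated politician's \tpool\ is (a) obtained by every good citizen in Step~\ref{first-down}, (b) consequently witnessed by at least $\goodnum^* > \badnummax + \Delta$ committee members, and (c) actually picked up by a good proposer even though $80\%$ of the politicians control the relaying of the witness lists — this is exactly where the ``safe sample contains an honest politician'' property and the full-broadcast of Step~2 must be invoked, in the same style as Lemma~\ref{lemma:honest-proposer}. The probabilistic assembly is then routine given Corollary~\ref{corr:committee-bounds} and Lemma~\ref{lemma:honest-proposer}.
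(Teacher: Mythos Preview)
Your proposal is correct and follows essentially the same route as the paper: invoke liveness for ``does not stall,'' argue that an honest designated politician's \tpool\ is witnessed by all $\goodnum \geq \goodnum^* > \badnummax + \Delta$ good citizens and hence included by any good proposer, and then multiply the probability of a good-proposer-with-successful-consensus by the expected number $(1-\corrsn)\rho$ of honest designated politicians. The only cosmetic difference is that you compute the favourable probability as the product $\Pr[\text{good proposer}]\cdot(1-0.017)$ whereas the paper reuses the complementary union bound from the liveness proof to get the same $\approx 0.65$, and you spell out the safe-sample/full-broadcast mechanics that the paper compresses into the one-line inequality $\goodnum \geq \goodnum^* > \badnummax + \Delta$.
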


\begin{proof}

First, note from Theorem~\ref{theorem:liveness} (Liveness), we have that the system does not stall and a non-empty block gets committed to the blockchain with probability at least 0.65, and in particular, this happens with an honest citizen as the proposer. Now, note that malicious politicians can refuse to provide \tpools, provide them to only a few honest mobile nodes, or provide \tpools\ that have transactions that do not verify (the last attack however, will get a politician blacklisted because \tpools\ are signed.). However, since, $T_1, \ldots, T_\rho$, which are the politicians that would provide the transactions for the block in round $\mathsf{Rnd}$ are chosen at random, the expected number of honest politicians in this set of $\rho$ \tpools\ is $(1-\corrsn)\rho$. Furthermore, the \tpools\ corresponding to these politicians are non-intersecting from the way transactions are assigned to politicians. For all honest politicians' \tpools\ in this set, the condition in Step~\ref{check-wit} of \algoref{block-commit} will be met since $\goodnum \geq \goodnum^* > \badnummax + \Delta$. Hence, the protocol would result in committing this block with at least $(1-\corrsn)\rho$ \tpools\ in them. Combining this with the probability of a non-empty block, we obtain the proof of the lemma.

\end{proof}

\begin{lemma}[informal, \textbf{Fairness}]
\label{lemma:fairness}
All valid transactions will eventually be committed.
\end{lemma}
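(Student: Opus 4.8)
The plan is to follow a single valid transaction $tx$, submitted by an originator with identity $id$, and show it is committed within finitely many rounds with probability $1$. First, by the system model the originator submits $tx$ to a safe sample of \politicians; except with tiny probability that sample contains at least one honest \politician, and since honest \politicians\ gossip every transaction they receive to one another, within finite time \emph{every} honest \politician\ holds $tx$ among its pending transactions. Because honest \politicians\ do not discard a transaction before it is committed, every honest \politician\ keeps $tx$ until it is committed.

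Next I use the sharding rule. For round $\commround$, the $\rho = 45$ designated \politicians\ $T_1,\ldots,T_\rho$ are a pseudorandom function of $\commround$ and $\hash(\block_{\commround-1})$, and among them $tx$ is assigned to the one selected by a hash of $\commround$ and $id$. Modelling these hashes as random, in each round $tx$ is assigned to an honest designated \politician\ with probability at least $1-\corrsn = 0.2$; independently (this is governed by the proposer VRFs and safe-sample choices, not by the shard hashes) the winning proposer is a good \citizen\ with probability more than $2/3$, since $\goodnum > 2\commnum/3$ in every committee (Corollary~\ref{corr:committee-bounds}). Hence, by a Borel--Cantelli argument, with probability $1$ there are infinitely many rounds $\commround$ in which simultaneously (a) $tx$ is assigned to an honest designated \politician\ $T_j$, and (b) the winning proposer $u^*$ is a good \citizen. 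We also need $tx$ to actually enter $T_j$'s frozen, size-bounded \tpool: under the standing assumption that the rate of valid submissions does not exceed the committed throughput (\lemmaref{throughput}) and that honest \politicians\ serve their per-originator queues oldest-first, $tx$ reaches the head of $T_j$'s queue after finitely many such rounds, so we may additionally require (c) $tx \in \mem_j$ in round $\commround$.

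Fix such a round. Since $T_j$ is honest it issues a single commitment $(\hash(\mem_j),\sigma_j)$ with $tx \in \mem_j$ and answers every querying committee member (Step~\ref{first-down} of \algoref{block-commit}), so every good \citizen\ obtains $\mem_j$ and puts $(\hash(\mem_j),\sigma_j)$ in its witness list; there are at least $\goodnum^* \geq 1137$ such \citizens\ (\lemmaref{goodsize}). Each good \citizen\ uploads its signed witness list to a safe sample, an honest member of which broadcasts it, so the good-\citizen\ proposer $u^*$ reads all good \citizens' witness lists from its own safe sample and sees $(\hash(\mem_j),\sigma_j)$ in at least $1137 > \badnummax + \Delta = 1122$ of them; therefore $u^*$ includes this commitment in $\idlist_{u^*}$ (Step~\ref{check-wit}; note there are at most $\rho$ distinct valid commitments, so a good proposer picks \emph{all} that clear the threshold). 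By \lemmaref{honest-proposer}, except with probability $0.017$ all good \citizens\ enter consensus with $\proposal_{u^*}$ and the consensus outputs it; then by \lemmaref{blockconsensus} at least $\goodnum - 36 \geq 1101 \geq \goodthresh = 850$ good \citizens\ re-derive and sign the resulting block, so it is committed. As there are infinitely many essentially independent rounds satisfying (a)--(c), each with commit probability at least a fixed positive constant, with probability $1$ some committed block is built from an $\idlist$ containing $\mem_j \ni tx$.

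Finally, a committed block built from $\mem_j$ contains $tx$ provided $tx$ passes the deterministic validation of Step~\ref{validate-gs} against the then-current global state --- in particular its nonce must match $id$'s stored nonce. Intra-originator dependencies are handled by induction on the nonce: the transaction of $id$ with the smallest uncommitted nonce is valid precisely when all of $tx$'s predecessors are committed, so repeatedly applying the argument above commits $id$'s transactions in nonce order, until $tx$ itself is the smallest uncommitted one and is committed. (We read ``valid transaction'' as one that still satisfies the semantic checks when its turn comes; a transaction made permanently invalid by a conflicting committed transaction is outside the claim.) The main obstacle I expect is precisely requirement (c): the purely probabilistic parts (gossip reaching all honest \politicians, the $1137 > 1122$ witness count, Borel--Cantelli for the honest-\politician/honest-proposer coincidence) and the invocations of \lemmaref{honest-proposer} and \lemmaref{blockconsensus} are routine, but ensuring that an honest designated \politician\ \emph{eventually actually includes} $tx$ in its bounded frozen pool needs the load/queueing assumption above, which the paper only gestures at via the ``round number inside the shard hash'' remark; making that rigorous (or folding it into the hypothesis of the lemma) is the delicate point.
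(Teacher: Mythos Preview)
Your argument follows the same route as the paper's own (informal) proof: valid transactions reach honest \politicians\ via gossip, the round-indexed sharding eventually assigns the transaction to an honest designated \politician, and with constant probability per round a good proposer wins and commits a block containing it. The paper's proof is a three-sentence sketch that elides essentially all of the structure you supply --- the witness-threshold check $\goodnum^* \geq 1137 > \badnummax + \Delta = 1122$, the Borel--Cantelli reasoning, the nonce-order induction --- and in particular does not address the bounded-\tpool\ inclusion issue you flag as requirement~(c); your caveat there is well taken, since the paper simply asserts that ``eventually every valid transaction held by honest politicians will be selected by an honest proposer'' without arguing that the transaction actually makes it into the size-limited frozen pool.
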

\begin{proof}

As in the proof of Lemma \ref{lemma:throughput}, from Theorem~\ref{theorem:liveness}, we know that the system does not stall and that non-empty blocks get committed with probability at least 0.65. A malicious block proposer could only pick \tpools\ provided by malicious politicians or always creates an empty block. However, for a block constructed by an honest proposer, since the politicians that would provide the transactions for the block in round $\mathsf{Rnd}$ are chosen at random, eventually every valid transaction held by honest politicians will be selected by an honest proposer. Hence, we can argue that all valid transactions are eventually committed.
\end{proof}

\end{document}